\documentclass[a4paper]{easychair}

\usepackage{microtype}

\bibliographystyle{plainurl}

\usepackage{graphicx}

\usepackage{cite} 

\newcommand{\logicize}[1]{{\textbf{#1}}}
\renewcommand{\k}{{\logicize{K}}}
\renewcommand{\d}{{\logicize{D}}}
\renewcommand{\t}{{\logicize{T}}}
\newcommand{\kf}{{\logicize{K4}}}
\newcommand{\df}{{\logicize{D4}}}
\newcommand{\sr}{{\logicize{S4}}}
\newcommand{\sv}{{\logicize{S5}}}
\newcommand{\kv}{{\logicize{K5}}}
\newcommand{\kfv}{{\logicize{K45}}}

\usepackage{complexity}

\renewcommand{\M}{\mathcal{M}}

\usepackage{amsmath}
\usepackage{amsthm}
\usepackage{amsfonts}
\usepackage{amssymb}

\newtheorem{theorem}{Theorem}
\newtheorem{definition}{Definition}
\newtheorem{corollary}[theorem]{Corollary}
\newtheorem{lemma}[theorem]{Lemma}

\newtheorem{proposition}[theorem]{Proposition}

\renewcommand{\R}{{\mathcal{R}}}

\newcommand{\ML}{Modal Logic}
\newcommand{\NI}{Negative Introspection}

\newcommand{\constant}{|\varphi| + 2}

\title{The Completeness Problem for \ML\footnote{This research was partly supported by the project ``TheoFoMon: Theoretical Foundations for Monitorability'' (grant number: 163406-051) of the Icelandic Research Fund.}}
\titlerunning{The Completeness Problem for \ML}
\author{Antonis Achilleos}
\authorrunning{A. Achilleos}
	\institute{School of Computer Science,
		Reykjavik University \\ \texttt{antonios@ru.is}}




\begin{document}

	\maketitle
	
\begin{abstract} 
	We introduce the completeness problem for Modal Logic and examine its complexity. 
	For a definition of completeness for formulas, given a formula of a modal logic, the completeness problem asks whether the formula is complete for that logic.  
	We discover that completeness and validity have the same complexity --- with certain exceptions for which there are, in general, no complete formulas.
	To prove upper bounds, we present a non-deterministic polynomial-time procedure with an oracle from PSPACE that combines tableaux and a test for bisimulation, and determines whether a formula is complete. 
\end{abstract}

\section{Introduction}

For a modal logic $l$, we call a modal formula $\varphi$  \emph{complete} when for every modal formula $\psi$ on the same propositional variables as $\varphi$, we can derive from $\varphi$  in $l$ either the formula  $\psi$ or its negation.
For different modal logics $l$, we examine the following problem: given a modal formula $\varphi$, 
is it complete for $l$? 
We call this the completeness problem for $l$ 
and we examine its complexity. Our main results show that the completeness problem  has the same complexity as provability, at least for the logics we consider.


\ML\ is a very well-known family of logics. When one uses it to formally describe a situation, it may be of importance to be able to determine whether the formula/finite theory one uses as a description  formalizes exactly one setting (i.e. it is complete), or it leaves room for several instances consistent with this description. Given \ML's wide area of applications and the importance of logical completeness in general, we find it surprising that, to the best of our knowledge, the completeness problem for Modal Logic has not been studied as a computational problem so far.
On  the other hand, the complexity of satisfiability (and thus validity) for \ML\ has been studied extensively --- for example, see  \cite{ladnermodcomp,Halpern2007Characterizing,Halpern1992}.

We examine the completeness problem for several well-known modal logics, namely the extensions of \k\ by the axioms Factivity, Consistency, Positive Introspection, and Negative Introspection (also known as $T$, $D$, $4$, and $5$, respectively) --- i.e. the ones between \k\ and \sv. We discover that the complexity of provability and completeness tend to be the same: the completeness problem is \PSPACE-complete if the logic does not have Negative Introspection and it is \coNP-complete otherwise.
There are exceptions: for certain logics (\d\ and \t), the completeness problem as we define it is trivial, as these logics have no finite complete theories.

Our motivation partly comes from \cite{ArtemovSEL} (see also \cite{artemov2016syntactic}), where Artemov raises the following issue. It is the usual practice in Game Theory (and Epistemic Game Theory) to reason about a game based on a model of the game description. On the other hand, it is often the case in an epistemic setting that the game specification is not complete, thus any conclusions reached by examining any single model are precarious. He thus argues  for the need to verify the completeness of game descriptions, and proposes a syntactic, proof-centered approach, which is more robust and general, and which is based on a syntactic formal description of the game specification. 
Artemov's approach is more sound, in that it allows one to draw only conclusions that can be safely derived from the game specification, 
but on the other hand, the model-based approach has been largely successful in 
Game Theory for a long time. 
He explain that if we can determine that the syntactic specification of a game is complete, then the syntactic and semantic approaches are equivalent and we can describe the game efficiently, using one model.
Furthermore, he presents a complete and an incomplete formulation of the Muddy Children puzzle.

For a formula--specification $\varphi$ (for example, a syntactic description of a game), if we are interested in the formulas  we can derive from $\varphi$ (the conclusions we can draw from the game description), knowing that $\varphi$ is complete can give a significant computational advantage.
If $\varphi$ is complete and consistent, for a model $\M$ for $\varphi$, $\psi$ can be derived from $\varphi$ exactly when $\psi$ is satisfied in $\M$ (at the same state as $\varphi$). Thus, knowing that $\varphi$ is complete effectively allows us to reduce a derivability problem to a model checking problem, which is easier to solve (see, for example, \cite{Halpern1992}). 
This approach may be  useful 
when we need to examine multiple conclusions, especially if the model for $\varphi$ happens to be fairly small. On the other hand, if $\varphi$ is discovered to be incomplete, then, as a specification it may need to be refined.
We can make similar claims for other areas where Modal Logic is used as a specification language.

Notions similar to complete formulas have been studied before. 
Characteristic formulas allow one to characterize a state's equivalence class for a certain equivalence relation.
In our case, the equivalence relation is bisimulation on states of (finite) Kripke models and the notions of characteristic and complete formulas collapse, by the Hennessy-Milner Theorem \cite{hennessy1985algebraic}, in that a formula is complete for 
one of the logics we consider if and only if it is characteristic for a state in a model for that logic.
A construction of characteristic formulas
for variants of CCS processes \cite{Milner:Concurrency} was introduced in \cite{Graf_1986}. This construction allows one to verify that two CCS processes are equivalent by reducing this problem to model checking. Similar constructions were studied later in  \cite{steffen1994characteristic,M_ller_Olm_1998} for instance and in a more general manner in \cite{Aceto2015,aceto2012characteristic}.

Normal forms for Modal Logic were introduced by Fine \cite{fine1975normal} 
and they can be used to prove soundness, completeness, and the finite frame property for several modal logics with respect to their classes of frames. Normal forms are modal formulas that completely describe the behavior of a Kripke model up to a certain distance from a state, with respect to a certain number of propositional variables. Therefore, every complete formula is equivalent to a normal form, but not all normal forms are complete, as they may be agnostic with respect to states located further away. We may define that a formula is complete up to depth $d$ for logic $l$ when it is equivalent to a normal form of modal depth (the nesting depth of a formula's modalities) at most $d$.
We discuss these topics more in Section \ref{sec:conclusions}.

We focus on a definition of completeness that emphasizes on the formula's ability to either affirm or reject every possible conclusion.
We can also consider a version of the problem that asks to determine if a formula is complete up to its modal depth --- that is, whether it is equivalent to a normal form. If we are interested in completely describing a setting, the definition we use for completeness is more appropriate. However, it is not hard to imagine situations where this variation of completeness is the notion that fits better, either as an approximation on the epistemic depth agents reason with, or, perhaps, as a description of process behavior for a limited amount of time.  We briefly examine this variation in Section \ref{sec:conclusions}.



%

%

\paragraph*{Overview} 
Section \ref{sec:back}  provides background on \ML, bisimulation, and relevant complexity results. 
In Section \ref{sec:NI}, we draw our first conclusions about the completeness problem in relation to bisimulation and give our first complexity result for logics with \NI.
In Section \ref{sec:completeness}, we examine different logics and in which cases for each of these logics the completeness problem is non-trivial.
In Section \ref{sec:complexity}, we examine the complexity of the completeness problem. We first present a general lower bound. For logics with Negative Introspection we prove \coNP-completeness. For the remaining logics --- the ones without Negative Introspection for which the problem is not trivial --- we present a non-deterministic polynomial-time procedure with an oracle from \PSPACE\ that accepts incomplete formulas, as the section's main theorem, Theorem \ref{thm:main} demonstrates.
This proves that the completeness problem for these cases is \PSPACE-complete.
These complexity results are summarized in Table \ref{table:tableoftriviality}.
Section \ref{sec:proof} presents the proof of Theorem \ref{thm:main} from Section \ref{sec:complexity}.
 In Section \ref{sec:conclusions}, we consider variations of the problem and draw further conclusions.

\section{Background}
\label{sec:back}

We present needed background on \ML, its complexity, and bisimulation, and we introduce the completeness problem.
For an overview of Modal Logic and its complexity, we refer the reader 
to
\cite{MLBlackburnRijkeVenema,chagrov1997modal,Halpern1992}. We do not provide background on Computational Complexity, but the reader can see \cite{Papadimitriou1994}.

\subsection{\ML}

We assume a countably infinite set of propositional variables $p_1, p_2,\ldots$. Literals are all $p$ and $\neg p$, where $p$ is a propositional variable.
Modal formulas are constructed from literals, 
the constants $\bot, \top$, the usual operators for conjunction and disjunction $\land, \lor$ of propositional logic, and the dual modal operators, $\Box$ and $\Diamond$: 
$$\varphi :: = \bot \mid \top \mid  p \mid \neg p \mid \varphi \wedge \varphi \mid \varphi \vee \varphi \mid  \Box \varphi \mid  \Diamond \varphi .$$
The negation $\neg \varphi$ of a modal formula, implication $\varphi \rightarrow \psi$, and $\varphi \leftrightarrow \psi$ are constructed 
as usual.
The language described by the grammar above is called $L$.

For a finite set of propositional variables $P$, $L(P)\subseteq L$ is the set of formulas that use only variables from $P$.
For a formula $\varphi$, $P(\varphi)$ is the set of propositional variables that appear in $\varphi$, so $\varphi \in L(P(\varphi))$.
If $\varphi \in L$, then $sub(\varphi)$ is the set of subformulas of $\varphi$ and $\overline{sub}(\varphi) = sub(\varphi) \cup \{\neg \psi \mid \psi \in sub(\varphi) \}$.
For $\Phi$ a nonempty finite subset of $L$, $\bigwedge \Phi$ is 
a conjunction of all elements of $\Phi$ and
$\bigwedge \emptyset = \top$; we define $\bigvee \Phi$ similarly.
The modal depth $md(\varphi)$ of $\varphi$ is the largest nesting depth of its modal operators:
$md(p)=md(\bot)=0$; $md(\varphi \wedge \psi) = \max \{md(\varphi),md(\psi) \}$; and $md(\Box \varphi) = md(\varphi)+1$.
The size of $\varphi$ is $|\varphi| = |sub(\varphi)|$.
For every $d\geq 0$, $\overline{sub}_d(\varphi) = \{ \psi \in \overline{sub}_d(\varphi) \mid md(\psi) \leq d \}$.

Normal modal logics use all propositional tautologies 
and  axiom $K$,  Modus Ponens, and the Necessitation Rule:
\begin{align*}	
	&{K:\Box \varphi \wedge \Box (\varphi \rightarrow \psi) \rightarrow \Box \psi;}
	&&{\frac{\varphi \ \ \varphi \rightarrow \psi}{\psi};}
	&&{\frac{\varphi}{\Box \varphi}.}
\end{align*}
%
The logic that has \emph{exactly} these axioms and rules is the smallest normal modal logic, \k.
We can extend \k\ with more axioms: 
\begin{align*}
	&{D: \Diamond \top;}
	&&{T: \Box \varphi \rightarrow \varphi;}
	&&{4: \Box \varphi \rightarrow \Box\Box \varphi;}
	&&{5: \Diamond\varphi \rightarrow\Box\Diamond\varphi.}
\end{align*}

We consider modal logics that are formed from a combination of these axioms. Of course, not all combinations make sense: axiom $D$ (also called the Consistency axiom) is a special case of $T$ (the Factivity axiom). Axiom $4$ is called Positive Introspection and $5$ is called Negative Introspection.
Given a logic $l$ and axiom $a$, $l+a$ is the logic that has as axioms all the axioms of $l$ and $a$.
Logic \d\ is $\k + D$, \t\ is $\k+T$, $\kf = \k + 4$, $\df = \k + D + 4 = \d + 4$, $\sr = \k + T +4 = \t + 4 = \kf +T$, $\text{\logicize{KD45}} = \df+5$, and $\sv = \sr + 5$. 
From now on, unless we explicitly say otherwise, by a logic or a modal logic, we mean one of the logics we defined above.
We use $\vdash_l \varphi$ to mean that $\varphi$ can be derived from the axioms and rules of $l$; when $l$ is clear from te context, we may drop the subscript and just  write $\vdash$.

A Kripke model is a triple $\M = (W,R,V)$, where
$W$ is a nonempty set of states (or worlds), $R\subseteq W\times W$ is an accessibility
relation and $V$ is a function that assigns to each state in $W$ a set of
propositional variables. If $P$ is a set of propositional variables, then for every $a \in W$, $V_P(a) = V(a)\cap P$. To ease notation, when $(s,t)\in R$ we usually write $s R t$. 

Truth in a Kripke model is defined through relation $\models$ in the following way:
\begin{itemize}
	\item[]
	$\M,a \not \models \bot$ \ and \ $\M,a \models \top$;
	\item[]
	$\M,a \models p$ iff $p\in V(a)$ \ and \ 
	$\M,a \models \neg p$ iff $p\notin V(a)$;
	\item[]
	$\M,a \models \varphi \wedge \psi$ iff both $\M,a \models \varphi$ and $\M, a \models \psi$;
	\item[]
	$\M,a \models \varphi \vee \psi$ iff  $\M,a \models \varphi$ or $\M, a \models \psi$;
	\item[]
	$\M,a \models \Diamond \varphi$ iff there is some $b \in W$ such that $a R b$ and $\M,b \models \varphi$; and
	\item[]
	$\M,a \models \Box \varphi $ iff for all $b \in W$ such that $a R b$ it is the case that $\M,b \models \varphi$.
\end{itemize}
If $\M,a \models \varphi$, we say that $\varphi$ is true/satisfied in $a$ of $\M$. 
$(W,R)$ is called a \emph{frame}. We call a Kripke model $(W,R,V)$ (resp. frame $(W,R)$) finite if $W$ is finite.\footnote{According to our definition, for a finite model $\M=(W,R,V)$ and $a \in W$, $V(a)$ can be infinite. However, we are mainly interested in $(W,R,V_P)$ for finite sets of propositions $P$, which justifies calling $\M$ finite.} 
If $\M$ is a model (for logic $l$) and $a$ is a state of $\M$, then $(\M,a)$ is a pointed model (resp. for $l$).
For a state $x \in W$ of a frame $(W,R)$, $Reach(x)\subseteq W$ is the set of states reachable from $x$; i.e. it is the smallest set such that $x \in Reach(x)$ and if $y \in Reach(x)$ and $y R z$, 
then $z \in Reach(x)$.

Each modal logic $l$ is associated with a class of frames $F(l)$, that includes all frames $(W,R)$ for which $R$ meets certain conditions, depending on the logic's axioms.
If $l$ has axiom:
\begin{description}
	\item [$D$,] then $R$ must be serial (for every state $a \in W$ there must be some $b \in W$ such that $a R b$);
	\item [$T$,] then $R$ must be reflexive (for all $a \in W$, $a  R a $);
	\item [$4$,] then $R$ must be transitive (if $a R b R c$, then $a R c$);
	\item [$5$,] then $R$ must be euclidean (if $a R b$ and $a R c$, then $b R c$).
\end{description}

A model $(W,R,V)$ is a model for a logic $l$ if and only if $(W,R) \in F(l)$.
We call a formula satisfiable for a modal logic $l$, if it is satisfied in a state of a model for $l$. We call a formula valid for a modal logic $l$, if it is satisfied in all states of all models for $l$.

\begin{theorem}[Completeness, Finite Frame Property]\label{thm:completeness}
	A formula $\varphi$ is valid for  $l$ if and only if it is provable in $l$; $\varphi$ is satisfiable for $l$ if and only if it is satisfied in a finite model for $l$.
\end{theorem}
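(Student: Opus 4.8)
The plan is to establish the three equivalences bundled in the statement — soundness, completeness, and the finite frame property — by the standard route of canonical models and filtrations, the only real work being the verification of the frame conditions at each stage.

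First I would prove soundness, the direction $\vdash_l \varphi \Rightarrow \varphi$ valid for $l$, by induction on the length of a derivation. The base cases check that every propositional tautology and axiom $K$ is valid on all frames, and that each of $D$, $T$, $4$, $5$ is valid precisely on the serial, reflexive, transitive, and euclidean frames, respectively; the inductive steps check that Modus Ponens and the Necessitation Rule preserve validity. This part is routine.

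For completeness — equivalently, that every $l$-consistent formula is satisfied in some model for $l$ — I would use the canonical model $\M^l = (W^l, R^l, V^l)$, whose worlds are the maximal $l$-consistent sets, with $\Gamma R^l \Delta$ iff $\{\psi \mid \Box\psi \in \Gamma\} \subseteq \Delta$ and $V^l(\Gamma) = \{p \mid p \in \Gamma\}$. The heart is the Truth Lemma, $\M^l, \Gamma \models \psi \iff \psi \in \Gamma$, proved by induction on $\psi$ using the usual facts about maximal consistent sets (Lindenbaum's extension lemma and the existence lemma for the $\Diamond$ case). Since a consistent $\varphi$ extends to some $\Gamma \in W^l$, it is satisfied there, so it remains to check that $(W^l, R^l) \in F(l)$, i.e. that the canonical relation inherits the condition forced by each axiom present in $l$ — seriality from $D$, reflexivity from $T$, transitivity from $4$, euclideanness from $5$ — which are the standard canonicity arguments. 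Together with soundness this yields the first biconditional, and, read contrapositively, reduces the remaining claim to showing that an $l$-satisfiable formula is satisfiable in a \emph{finite} model for $l$.

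Finally, the finite frame property: one direction is immediate, since a finite model for $l$ is already a model for $l$. For the converse I would filtrate. Given $\M, w \models \varphi$ with $\M$ a model for $l$, I form the quotient of $Reach(w)$ by the relation that identifies states agreeing on every formula in $sub(\varphi)$ — a set of at most $2^{|\varphi|}$ classes — and equip it with a suitable filtrated relation; the Filtration Lemma then gives a finite model in which $\varphi$ is satisfied. The main obstacle is that the coarsest filtrated relation need not satisfy the frame conditions when $l$ contains $4$ or $5$, so there one must instead take an appropriate closure of the minimal relation (for example the transitive, or transitive-and-euclidean, closure) and reverify both that the resulting frame still lies in $F(l)$ and that the truth of $\varphi$ is preserved. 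Carrying this out uniformly across the logics, including the combinations that also carry $D$ and $T$, is the delicate bookkeeping that completes the argument.
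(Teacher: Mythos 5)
You should first be aware that the paper never proves this statement: Theorem \ref{thm:completeness} is quoted as classical background, with its proof delegated to the textbooks cited in Section \ref{sec:back}, so the only meaningful comparison is against that standard literature --- and your outline is indeed the standard route. The soundness induction, the canonical-model argument (Lindenbaum's lemma, Truth Lemma, and canonicity of each of $D$, $T$, $4$, $5$), and the filtration step are all correct for \k, \d, \t\ and for the transitive logics \kf, \df, \sr, where the transitive closure of the minimal filtration (Lemmon's filtration) demonstrably still satisfies the conditions of the Filtration Lemma.

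The gap is in your treatment of the logics containing axiom $5$. The Filtration Lemma needs two properties of the quotient relation $R_f$: it must contain every pair $([w],[v])$ with $w R v$, and, conversely, $[u] \mathrel{R_f} [v]$ must guarantee that whenever $\Box\psi \in sub(\varphi)$ is true at $u$, $\psi$ is true at $v$. Passing to the euclidean (or transitive-and-euclidean) closure preserves the first property but in general destroys the second: the closure adds a pair $([u],[v])$ merely because some class $[x]$ sees both $[u]$ and $[v]$, and euclideanness of the original model yields a witnessing edge $u' R v'$ only when the two edges out of $[x]$ start at the \emph{same} state of $[x]$. So ``reverifying that truth is preserved'' is not bookkeeping; for closures of filtrations it fails in general. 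What saves the euclidean logics is a structural fact your sketch does not mention: a point-generated euclidean model consists of a root plus a single cluster on which the relation restricted to states with a predecessor is an equivalence (essentially the paper's flat models, Lemma \ref{lem:NImodels}), and on such a model the \emph{minimal} filtration is already euclidean (and serial, reflexive, or transitive when the logic demands it), so no closure is needed at all. Either prove and use that structure theorem, or dispose of the logics with $5$ via the Halpern--R\^{e}go small-model construction that the paper itself invokes (Corollary \ref{cor:smallNImodels}); as written, your uniform ``close and reverify'' plan stalls exactly there.
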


For the remainder of this paper we only consider finite Kripke models and frames.
For a finite model $\M = (W,R,V)$, we define $|\M| = |W| + |R|$.

\begin{definition}\label{def;complete}
	A formula $\varphi$ is called \emph{complete} for logic $l$ when for every $\psi \in L(P(\varphi))$, $\vdash_l \varphi \rightarrow \psi$ or $\vdash_l \varphi \rightarrow \neg \psi$.
	Formula $\varphi$ is \emph{incomplete} for $l$ if it is not complete for $l$.
\end{definition}

By Theorem \ref{thm:completeness}, $\varphi$ is complete for $l$ exactly when for every  $\psi \in L(P(\varphi))$, either $\psi$ or its negation is true at every (finite) pointed model for $l$ that satisfies $\varphi$.

\subsection{Bisimulation}

An important notion in \ML\ (and other areas) is that of bisimulation. Let $P$ be a (finite) set of propositional variables.
For Kripke models $\M = (W,R,V)$ and $\M' = (W',R',V')$, 
a non-empty relation $\mathcal{R}
\subseteq W\times W'$ is a \emph{bisimulation} (respectively, bisimulation modulo $P$) from $\M$ to $\M'$ when the following conditions are satisfied for all $(s,s')\in \mathcal{R}$:

\begin{itemize} 
	
	\item  $V(s)=V'(s')$ (resp. $V_P(s)=V_P'(s')$).
	
	\item For all $t\in W$ such that 
	$s R t$, there exists $t'\in W'$ such that $(t,t')\in \mathcal{R}$ and $s' R't'$.
	
	\item For all $t'\in W'$ such that 
	$s' R 't'$, there exists $t\in W$ such that $(t,t')\in \mathcal{R}$ and $s R t$.

\end{itemize}


We call pointed models $(\M,a), (\M',a')$ \emph{bisimilar} (resp. bisimilar modulo $P$) and write $(\M,a) \sim (\M',a')$ (resp. $(\M,a) \sim_P (\M',a')$) if there is a bisimulation (resp. bisimulation modulo $P$) $\R$ from $\M$ to $\M'$, such that $a \R a'$.
If $(\M,a)$ is a pointed model,
 and $P$ a set of propositional variables, 
then $Th_P(\M,a) = \{\varphi \in L(P) \mid \M,a \models \varphi \}$. We 
say that two pointed models are equivalent and write $(\M,a) \equiv_P (\M',a')$ 
when $Th_P(\M,a) = Th_P(\M',a')$.

The following simplification of the Hennessy-Milner Theorem \cite{hennessy1985algebraic} gives a very useful characterization of pointed model equivalence; Proposition \ref{prp:completeness_is_bisimilarity} is its direct consequence.

\begin{theorem}[Hennessy-Milner Theorem]\label{thm:eqbisim}
	If $(\M,a)$, $(\M',a')$ are finite pointed models, then \[(\M,a) \equiv_P (\M',a') \text{ if and only if } (\M,a) \sim_P (\M',a').\]
\end{theorem}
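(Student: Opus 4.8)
The plan is to prove the two implications separately; the right-to-left direction holds for arbitrary models, while the left-to-right direction is the one where finiteness (image-finiteness of the successor sets) is essential.

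For the direction $(\M,a) \sim_P (\M',a') \Rightarrow (\M,a) \equiv_P (\M',a')$, I would fix a bisimulation modulo $P$, call it $\R$, with $a \R a'$, and prove by induction on the structure of $\varphi \in L(P)$ that $\M,s \models \varphi$ iff $\M',s'\models\varphi$ for every pair $(s,s')\in\R$. The constants $\bot,\top$ and the literals over $P$ are immediate from the clause $V_P(s) = V_P'(s')$, and the Boolean connectives follow directly from the induction hypothesis. The only interesting cases are the modalities: for $\Diamond\psi$, a witnessing successor $t$ of $s$ is matched, via the forth condition, by a successor $t'$ of $s'$ with $(t,t')\in\R$, and the induction hypothesis transfers $\psi$ from $t$ to $t'$; the converse uses the back condition, and $\Box$ is handled dually. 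Letting $\varphi$ range over all of $L(P)$ and using $a\R a'$ then yields $Th_P(\M,a) = Th_P(\M',a')$.

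For the converse, the idea is to show that the equivalence relation itself is a bisimulation modulo $P$. I would set $\R = \{(s,s') \in W\times W' \mid (\M,s)\equiv_P(\M',s')\}$ and verify the three clauses. Atomic harmony is immediate, since each $p\in P$ belongs to $L(P)$, so $(\M,s)\equiv_P(\M',s')$ forces $V_P(s)=V_P'(s')$. For the forth condition, suppose $sRt$; finiteness of $\M'$ guarantees that $s'$ has only finitely many successors $t_1',\dots,t_n'$. If none of them were $P$-equivalent to $t$, then for each $i$ I could pick a formula $\psi_i\in L(P)$ with $\M,t\models\psi_i$ but $\M',t_i'\not\models\psi_i$ (such a distinguishing formula exists because $(\M,t)\not\equiv_P(\M',t_i')$, and I choose its polarity, negating if necessary, so that it holds at $t$). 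Then $\M,s\models\Diamond\bigwedge_{i}\psi_i$, so by $(\M,s)\equiv_P(\M',s')$ also $\M',s'\models\Diamond\bigwedge_i\psi_i$, producing some successor $t_j'$ satisfying every $\psi_i$, contradicting $\M',t_j'\not\models\psi_j$. (When $n=0$ the same argument with the empty conjunction $\top$ shows that $s'$ must have a successor at all.) Hence some $t_j'$ is $P$-equivalent to $t$, giving the required matching successor; the back condition is symmetric. Since $(a,a')\in\R$ by hypothesis, $\R$ is a nonempty bisimulation modulo $P$ with $a\R a'$, so $(\M,a)\sim_P(\M',a')$.

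The main obstacle is the forth/back step of the converse, as it is the only place that uses finiteness and must use it correctly. The argument hinges on collecting the finitely many distinguishing formulas into a single formula $\bigwedge_i\psi_i$ of $L(P)$ and pushing it under a $\Diamond$; with infinitely many successors this conjunction need not be a formula and the implication genuinely fails. Care is also needed in fixing the polarity of each $\psi_i$ so that it is satisfied at $t$ and refuted at $t_i'$, which is exactly where the freedom to negate a distinguishing formula is used.
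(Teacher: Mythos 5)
Your proof is correct: the easy direction by induction on formula structure, and the converse by showing that $P$-logical equivalence is itself a bisimulation modulo $P$, using finiteness to form a single distinguishing conjunction under a $\Diamond$ (including the degenerate case $\Diamond\top$ when $s'$ has no successors). The paper does not actually prove this statement --- it cites it as a simplification of the classical Hennessy--Milner theorem --- and your argument is exactly the standard proof of that result; the only point worth making explicit is that, since the paper's language $L(P)$ is in negation normal form, your ``negating if necessary'' step is legitimate because negation is a defined (dualizing) operation that keeps formulas inside $L(P)$.
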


\begin{proposition}\label{prp:completeness_is_bisimilarity}
	A formula $\varphi$ is complete for a logic $l$ if and only if for every two pointed models $(\M,a)$ and $(\M',a')$ for $l$, if $\M,a \models \varphi$ and $\M',a'\models \varphi$, then $(\M,a) \sim_P (\M',a')$.
\end{proposition}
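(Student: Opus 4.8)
The plan is to prove Proposition~\ref{prp:completeness_is_bisimilarity} by combining Definition~\ref{def;complete} with the semantic characterization of completeness (the remark following Theorem~\ref{thm:completeness}) and the Hennessy-Milner Theorem (Theorem~\ref{thm:eqbisim}). The key observation is that completeness, after unfolding the definition, is really a statement about pointed models agreeing on all formulas of $L(P)$, and this agreement is exactly $\equiv_P$, which Theorem~\ref{thm:eqbisim} converts into $\sim_P$. So the whole argument is a chain of equivalences, and the work is in verifying each link cleanly. Throughout, write $P = P(\varphi)$.

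First I would prove the contrapositive of each direction, or equivalently argue both implications directly through the equivalent-theories relation. For the forward direction, suppose $\varphi$ is complete for $l$, and let $(\M,a)$ and $(\M',a')$ be pointed models for $l$ with $\M,a \models \varphi$ and $\M',a' \models \varphi$. I claim $Th_P(\M,a) = Th_P(\M',a')$. Take any $\psi \in L(P)$. By Definition~\ref{def;complete}, either $\vdash_l \varphi \rightarrow \psi$ or $\vdash_l \varphi \rightarrow \neg\psi$; by the soundness half of Theorem~\ref{thm:completeness}, the provable implication is valid for $l$, so it holds at every pointed model for $l$ satisfying $\varphi$, in particular at both $(\M,a)$ and $(\M',a')$. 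Hence $\psi$ is either true at both points or false at both points, giving $\psi \in Th_P(\M,a) \iff \psi \in Th_P(\M',a')$. Thus $(\M,a) \equiv_P (\M',a')$, and Theorem~\ref{thm:eqbisim} yields $(\M,a) \sim_P (\M',a')$.

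For the converse, I would argue by contraposition: suppose $\varphi$ is \emph{incomplete} for $l$, so there is some $\psi \in L(P)$ with $\nvdash_l \varphi \rightarrow \psi$ and $\nvdash_l \varphi \rightarrow \neg\psi$. By the completeness half of Theorem~\ref{thm:completeness}, each non-provable implication is non-valid, so there exist finite pointed models $(\M,a)$ and $(\M',a')$ for $l$ with $\M,a \models \varphi$ but $\M,a \not\models \psi$, and $\M',a' \models \varphi$ but $\M',a' \not\models \neg\psi$, i.e.\ $\M',a' \models \psi$. These two models satisfy $\varphi$ but disagree on $\psi$, so $(\M,a) \not\equiv_P (\M',a')$, and by Theorem~\ref{thm:eqbisim} they are not bisimilar modulo $P$, which contradicts the right-hand condition of the proposition. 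Hence that condition forces completeness.

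I expect the main subtlety, rather than a genuine obstacle, to be bookkeeping about where Theorem~\ref{thm:completeness} (validity $=$ provability, plus the finite frame property) is invoked and making sure the finiteness hypothesis needed by Theorem~\ref{thm:eqbisim} is supplied: the Hennessy-Milner Theorem as stated requires finite pointed models, and the finite frame property of Theorem~\ref{thm:completeness} is precisely what guarantees the witnessing counter-models in the converse can be taken finite, while the forward direction's models are already assumed to be among the finite models the paper restricts to. One should also note that $\bigwedge$, $\bigvee$, and negation are available in $L(P)$, so $\neg\psi \in L(P)$ whenever $\psi \in L(P)$, which is what lets Definition~\ref{def;complete}'s dichotomy interact correctly with $\equiv_P$. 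Beyond these points the proof is a routine composition of the stated results.
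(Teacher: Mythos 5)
Your proof is correct and follows exactly the route the paper intends: the paper states Proposition~\ref{prp:completeness_is_bisimilarity} as a direct consequence of Theorem~\ref{thm:eqbisim} together with the remark after Theorem~\ref{thm:completeness} (completeness means every $\psi \in L(P)$ or its negation holds at all finite pointed models of $\varphi$), and your two directions simply spell out that chain of equivalences, including the correct handling of finiteness via the finite frame property.
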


Paige and Tarjan in \cite{paige1987three} give an efficient algorithm for checking whether two pointed models are bisimilar. Theorem \ref{thm:bisimulation_is_easy} is a variation on their result to account for receiving the set $P$ of propositional variables as part of the algorithm's input.

\begin{theorem}\label{thm:bisimulation_is_easy}
	There is an algorithm which, given two pointed models $(\M,a)$ and $(\M',a')$ and a finite set of propositional variables $P$, determines whether $(\M,a) \sim_P (\M',a')$ in time $O(|P| \cdot (|\M| + |\M'|)\cdot \log (|\M| + |\M'|) )$.
\end{theorem}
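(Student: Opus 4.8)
The plan is to reduce deciding bisimilarity modulo $P$ to the relational coarsest partition problem solved by Paige and Tarjan in \cite{paige1987three}, arranging matters so that the only extra cost incurred by the restriction to $P$ is the stated factor of $|P|$.

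First I would merge the two inputs into a single Kripke model. After relabelling states so that $W$ and $W'$ are disjoint, let $\M^\ast = (W \cup W',\, R \cup R',\, V \cup V')$ be their disjoint union. A routine check shows that $(\M,a) \sim_P (\M',a')$ if and only if $a$ and $a'$ are bisimilar modulo $P$ as states of the single model $\M^\ast$: any bisimulation modulo $P$ from $\M$ to $\M'$ containing $(a,a')$ is, after relabelling, a bisimulation modulo $P$ on $\M^\ast$, and conversely the restriction to $W\times W'$ of a bisimulation on $\M^\ast$ yields one from $\M$ to $\M'$. This lets me work with a single accessibility relation, which is exactly the setting of \cite{paige1987three}.

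Next I would set up the initial partition. Since bisimulation modulo $P$ forces $V_P(s)=V_P(t)$ for related states, I place two states of $\M^\ast$ in the same initial block precisely when they agree on every variable in $P$. To stay within the claimed bound, I key each state $s$ by the characteristic bit-vector of $V_P(s)$ over $P$ and sort the states by these keys; as comparing two keys costs $O(|P|)$, the sort runs in $O\bigl(|P|\cdot(|W|+|W'|)\cdot\log(|W|+|W'|)\bigr)$ time, after which equal keys are grouped into blocks and each block receives an integer identifier. The maximal bisimulation modulo $P$ on $\M^\ast$ is then exactly the coarsest partition refining this initial colouring that is stable under $R\cup R'$, by the standard correspondence between bisimulation on a single relation and the relational coarsest stable partition. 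Crucially, once the blocks carry integer identifiers the refinement step never inspects $P$ again, so no further factor of $|P|$ is introduced.

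Finally I would run the Paige--Tarjan refinement on this initial partition; it computes the coarsest stable refinement in time $O\bigl((|R|+|R'|)\cdot\log(|W|+|W'|)\bigr)$, i.e. $O\bigl((|\M|+|\M'|)\cdot\log(|\M|+|\M'|)\bigr)$, and the algorithm answers ``yes'' iff $a$ and $a'$ land in the same final block, an $O(1)$ lookup. Summing the initial-partition cost and the refinement cost yields the total bound $O\bigl(|P|\cdot(|\M|+|\M'|)\cdot\log(|\M|+|\M'|)\bigr)$. The only delicate point --- and the sole place the factor $|P|$ enters --- is the initial-partition step: I must read and compare the $P$-restricted valuations without enumerating the up-to-$2^{|P|}$ possible colours, which the sort on bit-vector keys achieves; everything after that is the unmodified Paige--Tarjan procedure, whose correctness and running time I inherit directly.
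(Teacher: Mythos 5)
Your proposal is correct and takes essentially the same route as the paper: the paper states this theorem as a variation of the Paige--Tarjan partition-refinement algorithm \cite{paige1987three} adapted to account for the input set $P$, offering no further proof, and your disjoint-union construction, bit-vector pre-partition by $P$-valuations (the source of the $|P|$ factor), and subsequent unmodified coarsest-stable-partition refinement is precisely the intended adaptation.
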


\subsection{The Complexity of Satisfiability}

For logic $l$, the satisfiability problem for $l$, or $l$-satisfiability is the problem that asks, given a formula $\varphi$, if $\varphi$ is satisfiable. Similarly, the provability problem for $l$ asks if $\vdash_l \varphi$. 

The classical complexity results for \ML\ are due to Ladner \cite{ladnermodcomp}, who established \PSPACE-completeness for the satisfiability of \k, \t, \d, \kf, \df, and \sr\ and \NP-completeness for the satisfiability of \sv.  Halpern and R\^{e}go later characterized the \NP--\PSPACE\ gap  by the presence or absence of Negative Introspection \cite{Halpern2007Characterizing}, resulting in Theorem \ref{thm:ladhalp}.

\begin{theorem}\label{thm:ladhalp}
	If $l\in \{\k, \t, \d, \kf, \df, \sr\}$, then 
	$l$-provability is \PSPACE-complete and $l+5$-provability is \coNP-complete.
\end{theorem}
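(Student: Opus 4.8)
The plan is to pass to satisfiability and argue there, since by Theorem~\ref{thm:completeness} provability coincides with validity, which is exactly the complement of satisfiability. Because \PSPACE\ is closed under complement, it suffices to show that $l$-satisfiability is \PSPACE-complete for $l \in \{\k,\t,\d,\kf,\df,\sr\}$; and since \coNP\ is the complement class of \NP, for the remaining logics it suffices to show that $l+5$-satisfiability is \NP-complete.

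For the \PSPACE\ upper bounds I would use Ladner's witness-search algorithm. The starting point is a tree-model property: a satisfiable $\varphi$ has a model whose underlying frame, after unravelling, is a tree in which every node is labelled by a Hintikka-style subset of $\overline{sub}(\varphi)$, each $\Diamond$-formula at a node being witnessed by a child. For \k, \t, and \d\ the depth of such a tree is bounded by $md(\varphi)$, so one can explore it branch by branch with a depth-first recursion that at any moment stores only the current branch --- a polynomial amount of information --- guessing at each node its labelling set and checking local consistency, seriality (for \d), and reflexivity (for \t). The transitive logics \kf, \df, \sr\ are the delicate case: unravelling no longer bounds the depth by $md(\varphi)$, and one must instead bound the length of each explored branch. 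The key observation is that along a single branch the sequence of labelling sets may be assumed non-repeating on the relevant boxed formulas, so a loop-check lets one cut any branch once its length exceeds $|\varphi|$; this keeps the stored branch polynomial and the whole search in \PSPACE. I expect this loop-checking argument for the transitive logics to be the main obstacle, since it is precisely where the naive tree-model depth bound fails.

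For the matching \PSPACE\ lower bound I would reduce from the validity of quantified Boolean formulas. Ladner's encoding represents a prenex QBF by a modal formula whose modal depth mirrors the quantifier prefix: each alternation is simulated one modal level deeper, a universal quantifier forcing both truth values to be explored (via $\Box$ over two successors) and an existential one allowing a chosen value (via $\Diamond$). Propositional variables record the assignment built along a branch, and a fixed bookkeeping conjunct propagates earlier choices downward so that the matrix can be evaluated at the leaves. This reduction is polynomial and already yields \PSPACE-hardness for \k; I would then verify robustness under the extra frame conditions --- reflexivity, seriality, transitivity --- either because the constructed formula is satisfiable in the richer frame class exactly when it is satisfiable in \k, or through the standard local modifications (relativising each $\Box$ to a fresh level predicate) that neutralise reflexive or transitive edges.

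Finally, for the logics with Negative Introspection I would establish a polynomial small-model property and conclude membership in \NP. Euclideanness, together with transitivity in \kfv\ and \sv, forces a rigid shape on the reachable part of any model: from the evaluation point one reaches, after at most one step, a single cluster on which accessibility is universal (for \sv\ the entire reachable model is one such cluster). In such a model $\varphi$ is satisfiable iff it is satisfiable in one whose cluster keeps at most one world per $\Diamond$-subformula of $\varphi$, hence at most $|\varphi|+1$ worlds overall. One can therefore guess a model of polynomial size together with the evaluation point and verify $\M,a \models \varphi$ by model checking in polynomial time, placing satisfiability in \NP. The \NP-hardness --- equivalently \coNP-hardness of provability --- is immediate, as propositional formulas embed into modal ones and propositional satisfiability is already \NP-complete. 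Combining the two bounds gives \NP-completeness of $l+5$-satisfiability, that is, \coNP-completeness of $l+5$-provability.
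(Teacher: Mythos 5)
Your proposal is correct in outline, but it is worth noting that the paper does not prove this theorem at all: Theorem~\ref{thm:ladhalp} is quoted from the literature, with the \PSPACE\ results and the \NP\ bound for \sv\ attributed to Ladner and the general characterization of the \NP--\PSPACE\ gap by Negative Introspection attributed to Halpern and R\^{e}go. What you have written is essentially a reconstruction of those cited proofs, and each ingredient is the standard one: the passage from provability to satisfiability via Theorem~\ref{thm:completeness} and closure of \PSPACE\ under complement; Ladner's depth-first, branch-at-a-time witness search with Hintikka-set labels for \k, \t, \d, and the loop-check bounding branch length for the transitive logics \kf, \df, \sr\ (correctly identified as the delicate point, since unravelling does not bound depth by $md(\varphi)$ there); the QBF encoding for \PSPACE-hardness together with its robustness across the frame classes between \k\ and \sr; and the flat/small-model property for euclidean logics giving the \NP\ upper bound, with \NP-hardness inherited from propositional satisfiability. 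Your structural claim about euclidean models --- one step leads into a single cluster with universal internal accessibility --- is exactly what the paper itself establishes later as Lemma~\ref{lem:NImodels}, and your polynomial bound on the cluster size corresponds to Corollary~\ref{cor:smallNImodels}, both of which the paper likewise extracts from Halpern and R\^{e}go's construction. So there is no gap relative to what the statement requires; the only caveat is that your argument is a sketch whose two technical cores (the loop-check correctness for transitive logics and the correctness of the QBF reduction under added frame conditions) are asserted rather than carried out, which is the same level of detail at which the paper defers to its references.
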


\section{The Completeness Problem and \NI}
\label{sec:NI} 

\emph{The completeness problem for $l$} asks, given a formula $\varphi$, if $\varphi$ is complete for $l$.
In this section, we explain how to adjust Halpern and R\^{e}go's techniques from \cite{Halpern2007Characterizing} to prove similar complexity bounds for the completeness problem for logics with \NI.
In the course of proving the \coNP\ upper bound for logics with \NI, Halpern and R\^{e}go give in \cite{Halpern2007Characterizing} a construction that provides a small model for a satisfiable formula. From parts of their construction, we can extract Lemma \ref{lem:NImodels} and Corollary \ref{cor:smallNImodels}. 

For a logic $l+5$, we call a pointed model $(\M,s)$ for $l+5$ \emph{flat} when 
\begin{itemize}
	\item $\M = (\{s\} \cup W,R,V)$;
	\item $R = R_1 \cup R_2$, where 
		$R_1 \subseteq \{s\} \times W$ and 
		$R_2$ is an equivalence relation on $W$; and
	\item if $l \in \{\t,\sr\}$, then $s \in W$.
\end{itemize}
Lemma \ref{lem:NImodels} informs us that flat models are a normal form for models of logics with axiom $5$.
%
\begin{lemma}\label{lem:NImodels}
	Every pointed $l+5$-model 
	$(\M,s)$ is bisimilar to a flat pointed $l+5$-model.
\end{lemma}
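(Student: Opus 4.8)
The plan is to pass to the submodel generated by $s$ and then read the flat structure directly off the euclidean condition, so that no further surgery on the model is needed.

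First I would replace $(\M,s)$ by its point-generated submodel $\M' = (Reach(s), R', V')$, where $R'$ and $V'$ are the restrictions of $R$ and $V$ to $Reach(s)$. Taking $\R = \{(x,x) \mid x \in Reach(s)\}$, it is immediate from the definition of bisimulation that $\R$ is a bisimulation from $\M'$ to $\M$, so $(\M',s) \sim (\M,s)$. Moreover seriality, reflexivity, transitivity and the euclidean property are all inherited by point-generated subframes (any witness required for a state of $Reach(s)$ already lies in $Reach(s)$), so $\M'$ is again a model for $l+5$. It therefore suffices to prove that the frame of $\M'$ is flat.

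The key step is a structural lemma about euclidean frames. Write $W_1 = \{w \mid s R' w\}$ for the immediate successors of $s$, and let $C$ be the set of states reachable from $s$ in at least one $R'$-step, so $Reach(s) = \{s\} \cup C$. I claim $C$ is a single cluster, i.e.\ $x R' y$ for all $x,y \in C$. Euclideanness at once makes $W_1$ a cluster: for $a,b \in W_1$ we have $s R' a$ and $s R' b$, hence $a R' b$, and $a R' a$ taking $b=a$. To extend this to all of $C$, I would prove by induction on the least number of $R'$-steps from $W_1$ to $x$ that every $x \in C$ satisfies $a R' x$, $x R' a$ and $x R' x$ for all $a \in W_1$. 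In the inductive step one picks a predecessor $x'$ of $x$ with $x' R' x$ at smaller distance: from $x' R' x$ (used twice) one gets $x R' x$; from $x' R' x$ together with $x' R' a$ (the latter by the induction hypothesis) one gets $x R' a$; and finally from $x R' a$ together with $x R' x$ one gets $a R' x$. Once this holds, any $x,y \in C$ share a common predecessor $a \in W_1$ with $a R' x$ and $a R' y$, so one last application of euclideanness gives $x R' y$.

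Finally I would read off flatness. Since $C$ is closed under $R'$, any edge $c R' s$ with $c \in C$ would force $s \in C$; hence either $s \in C$, or $s$ has no incoming edge and no loop. In the first case $Reach(s) = C$, and we set $W = C$, $R_2 = C \times C$ and $R_1 = \emptyset$; in the second case $s \notin C$, and we set $W = C$, $R_2 = C \times C$ and $R_1 = \{s\} \times W_1 \subseteq \{s\} \times W$. In both cases $R_2$ is an equivalence relation on $W$ and $R' = R_1 \cup R_2$, so $\M'$ is flat. For the side condition, if $l \in \{\t,\sr\}$ then $R'$ is reflexive, so $s R' s$ and thus $s \in C = W$, exactly as required. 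The main obstacle is the structural lemma, and specifically propagating the cluster property outward from the immediate successors $W_1$ to states reached in two or more steps: this is where the euclidean condition must be iterated along a path rather than applied a single time.
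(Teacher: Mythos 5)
Your proof is correct, and it opens exactly as the paper's does: pass to the point-generated submodel, observe that the identity relation is a bisimulation, and note that the frame conditions are inherited. The divergence is in how flatness is then extracted. The paper takes $W$ to be the set of states of the generated submodel that have some predecessor (the same set as your $C$) and shows, with three one-step applications of euclideanness and no induction, that the restriction of the accessibility relation to $W$ is an equivalence relation: each $w \in W$ has a predecessor $w'$, so $w' R w$ used twice gives $w R w$ (reflexivity on $W$); then $a R b$ together with $a R a$ gives $b R a$ (symmetry); then $a R b R c$ together with the symmetry just proved gives $a R c$ (transitivity). That is all the definition of a flat model requires --- $R_2$ need only be \emph{an} equivalence relation on $W$, not the total relation --- so the paper never needs your stronger claim. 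You instead prove that $C$ is a single complete cluster, $C \times C \subseteq R'$, by propagating the cluster property outward from the immediate successors $W_1$ by induction on path length; that induction is genuinely necessary for your stronger statement, since a single application of euclideanness only relates siblings. What each approach buys: the paper's argument is shorter and purely local, while yours establishes a sharper structure theorem for point-generated euclidean frames (a root attached to one total cluster), of which flatness is an immediate corollary. One pedantic point: your final step (``any $x,y \in C$ share a common predecessor $a \in W_1$'') presupposes $W_1 \neq \emptyset$; this holds whenever $C \neq \emptyset$, and the case $C = \emptyset$ is trivially flat, but it is worth a sentence to say so.
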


\begin{proof}
Let $W'$ be the set of states of $\M$ reachable from $s$ and $R$ the restriction of the accessibility relation of $\M$ on $W'$. It is easy to see that the identity relation is a bisimulation from $\M$ to $\M'$, so $(\M,s) \sim (\M',s)$; let 
$W = \{w \in W' \mid \exists w' R w \}$.
Therefore $W' = W \cup \{s\}$ and if $l \in \{\t,\sr\}$, then $s \in W$. 
Since $\M$ is an $l+5$-model, $R$ is euclidean. Therefore, the restriction of $R$ on $W$ is reflexive. This in turn means that $R$ is symmetric in $W$: if $a,b \in W$ and $a R b$, since $a R a$, we also have $b R a$. Finally, $R$ is transitive in $W$: if $a R b R c$ and $a,b,c \in W$, then $b R a$, so $a R c$. Therefore $R$ is an equivalence relation when restricted on $W$.
\end{proof}

The  construction from \cite{ladnermodcomp} and \cite{Halpern2007Characterizing} continues to filter the states of the flat model, resulting in a small model for a formula $\varphi$. 
Using this construction, Halpern and Rêgo prove Corollary \ref{cor:smallNImodels} \cite{Halpern2007Characterizing}; the \NP\ upper bound for $l+5$-satisfiability of Theorem \ref{thm:ladhalp} is a direct consequence.

\begin{corollary}
	\label{cor:smallNImodels}
	Formula
	$\varphi$ is $l+5$-satisfiable 
	if and only if 
	it is satisfied in a flat $l+5$-model of 
	$O(|\varphi|)$ states. 
\end{corollary}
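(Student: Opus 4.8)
The right-to-left direction is immediate: a flat $l+5$-model is in particular an $l+5$-model, so if $\varphi$ is satisfied at a state of such a model, then $\varphi$ is $l+5$-satisfiable. The plan for the nontrivial direction is to start from an arbitrary satisfying model, flatten it via Lemma~\ref{lem:NImodels}, and then prune the result down to $O(|\varphi|)$ states by a witness-selection argument that exploits the flat structure. Concretely, suppose $\M,s\models\varphi$. By Lemma~\ref{lem:NImodels}, $(\M,s)$ is bisimilar to a flat $l+5$-model $(\M',s)$ with $\M'=(\{s\}\cup W, R_1\cup R_2, V)$, and since bisimilar pointed models have the same $L$-theory (Theorem~\ref{thm:eqbisim}), $\M',s\models\varphi$. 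As in the proof of Lemma~\ref{lem:NImodels}, the reachable part of $\M'$ consists of $s$ together with a single $R_2$-cluster $W$ in which every state is accessible from every other. The crucial consequence is that at any cluster state the truth of a formula $\Diamond\psi$ (resp. $\Box\psi$) depends only on whether $\psi$ is realized somewhere in the cluster (resp. everywhere in it), and at $s$ it depends only on the $R_1$-successors.

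I would then select a small set of witnesses indexed by $\overline{sub}(\varphi)$. For every $\Diamond\psi\in sub(\varphi)$ true at $s$ I keep one $R_1$-successor satisfying $\psi$; for every $\Box\psi\in sub(\varphi)$ false at $s$ I keep one $R_1$-successor satisfying $\neg\psi$; and I keep analogous witnesses for the diamond-demands realized inside the cluster and the box-refutations occurring inside the cluster. Since each such witness is charged to a distinct formula of $\overline{sub}(\varphi)$, this yields a set $W_0\subseteq W$ with $|W_0| = O(|\varphi|)$. Let $\M_0$ be the submodel induced on $\{s\}\cup W_0$, with $R_1,R_2$ restricted accordingly (placing $s$ inside $W_0$ when $l\in\{\t,\sr\}$, so that $s$ lies in the cluster).

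A routine induction on the structure of subformulas then shows that for every $\chi\in sub(\varphi)$ and every state $x$ of $\M_0$ we have $\M_0,x\models\chi$ iff $\M',x\models\chi$: the literal and boolean cases are inherited from the shared valuation; the $\Diamond$-cases hold because the relevant witnesses were retained and remain accessible (all cluster states see one another, and the kept $R_1$-edges are preserved); and the $\Box$-cases hold because restricting to a submodel only removes successors, while the retained box-refutations prevent any $\Box$-formula from becoming spuriously true. In particular $\M_0,s\models\varphi$, and $\M_0$ has $O(|\varphi|)$ states.

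The step I expect to require the most care is verifying that $\M_0$ is still an $l+5$-model, i.e. that the restricted relation still lies in $F(l+5)$. Restricting $R_2$ to $W_0$ leaves the total relation on a single cluster, which is reflexive, symmetric, transitive, and euclidean; the edges out of $s$ cause no violation of the euclidean and transitivity conditions, because $s$ has no incoming edges unless $l\in\{\t,\sr\}$, in which case $s$ is placed inside the cluster. When axiom $D$ is present, seriality is guaranteed because $\Diamond\top$ holds at $s$, so at least one $R_1$-successor is kept, and every cluster state sees itself. Checking these frame conditions case by case for each logic $l+5$ is the main obstacle; everything else reduces to bounding the witness count and carrying out the truth-preservation induction.
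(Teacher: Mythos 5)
Your proof follows essentially the same route as the paper's: the paper does not prove Corollary \ref{cor:smallNImodels} itself but attributes it to the construction of Halpern and R\^{e}go \cite{Halpern2007Characterizing}, which ``filters the states of the flat model'' obtained from Lemma \ref{lem:NImodels} --- precisely your flatten-then-select-witnesses argument. The right-to-left direction, the $O(|\varphi|)$ witness count, and the truth-preservation induction are all sound.

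However, two steps need repair, and one of them genuinely fails as written. First, seriality: for $\d+5$ and $\text{\logicize{KD45}}$ you argue that ``$\Diamond\top$ holds at $s$, so at least one $R_1$-successor is kept,'' but your selection retains only witnesses indexed by formulas in $\overline{sub}(\varphi)$, and $\Diamond\top$ need not be a subformula of $\varphi$. Take $\varphi = p$ in $\d+5$: no witnesses are selected, $W_0=\emptyset$, and $\M_0$ is a single point with no successors --- not serial, hence not a $\d+5$-model, so the construction as stated does not prove the corollary for these logics. The fix is easy (retain one arbitrary $R_1$-successor whenever $R_1\neq\emptyset$, which seriality of $\M'$ guarantees; the bound stays $O(|\varphi|)$), but it must be made explicit. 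Second, the single-cluster claim: you say that ``as in the proof of Lemma \ref{lem:NImodels}'' the set $W$ is a single $R_2$-cluster, but that proof only shows that $R_2$ is an equivalence relation on $W$. The stronger fact does hold for the model built there, because every state of $W$ is reachable from $s$, and a short induction on path length shows that in a euclidean relation all states reachable from $s$ in at least one step are pairwise related; since your uniformity claim about diamond-demands and box-refutations across cluster states rests on this, that argument should be supplied. Relatedly, your justification of transitivity (``$s$ has no incoming edges'') does not address the actual threat, namely $s\to y\to z$ with $s\not\to z$; the clean observation is that transitivity and euclideanness are universal conditions and are therefore inherited by induced submodels of $\M'$ (concretely, when axiom $4$ is present, transitivity of $\M'$ forces $R_1=\{s\}\times W$ whenever $R_1\neq\emptyset$, so every retained cluster state remains an $R_1$-successor of $s$).
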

Since we are asking whether a formula is complete, instead of whether it is satisfiable, we want to be able to find two small non-bisimilar models for $\varphi$ when $\varphi$ is incomplete.
For this, we need a characterization of bisimilarity between flat models.

\begin{lemma}\label{lem:flat_bisimulation}
	Flat pointed models
	$(\M,a) = (\{a\}\cup W,R,V)$ and $(\M',a') = (\{a'\}\cup W',R',V')$ are bisimilar modulo $P$ if and only if $V_P(a) = V_P(a')$ and: 
	\begin{itemize}
		\item for every $b \in W$, there is some $b' \in W'$ such that $V_P(b) = V'_P(b')$; 
		\item for every $b' \in W'$, there is some $b \in W'$ such that $V_P(b) = V'_P(b')$;
		\item 
		for every $b \in W$, if $a R b$, then there is a $b' \in W'$ such that  $a' R b'$ and $V_P(b) = V'_P(b')$; and
		\item  for every $b' \in W'$, if  $a' R b'$, then there is a $b \in W'$ such that $a R b$ and $V_P(b) = V'_P(b')$.
	\end{itemize}
\end{lemma}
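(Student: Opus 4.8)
The plan is to prove both directions directly from the structural description of flat models, without detouring through the Hennessy--Milner Theorem. I will lean on two features of a flat model $(\{a\}\cup W, R, V)$ that come from its construction in Lemma~\ref{lem:NImodels}: it is generated from its root (every state of $W$ is reachable from $a$), and its cluster $W$ is a single equivalence class, so $R_2$ relates any two states of $W$. Consequently every $b\in W$ has exactly $W$ as its set of successors --- its $R_1$-edges, if any, already point into $W$ --- while the root $a$ sees through $R_1$ only a (possibly proper) subset of $W$. The upshot is that inside the cluster a bisimulation can detect nothing beyond the set of valuations $\{V'_P(c)\mid c\in W\}$ realised there, which is exactly what the first two bullets record, whereas the last two bullets record which of these valuations $a$ reaches in one step.

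For the ($\Leftarrow$) direction I would exhibit the explicit relation
\[
\R = \{(a,a')\}\cup\{(b,b')\in W\times W'\mid V_P(b)=V'_P(b')\}
\]
and check that it is a bisimulation modulo $P$. The valuation clause holds for $(a,a')$ by the root equality $V_P(a)=V'_P(a')$ and for the remaining pairs by construction. For the forth clause I split on the source: an edge $a R t$ lands in $W$ and the third bullet produces $t'\in W'$ with $a' R' t'$ and $V_P(t)=V'_P(t')$; an edge $b R t$ out of a pair $(b,b')\in W\times W'$ again lands in $W$, the first bullet supplies $t'\in W'$ with matching valuation, and $b' R' t'$ holds automatically because $b'$ sees all of $W'$. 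The back clause is symmetric, using the second and fourth bullets. This direction is routine once the cluster observation is in hand.

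The ($\Rightarrow$) direction is where the work lies, and the first two bullets --- quantifying over all of $W$ rather than just the successors of $a$ --- are the main obstacle, since a bisimulation $\R$ with $a\,\R\,a'$ directly constrains only the successors of $a$. To reach an arbitrary $b\in W$ I would use generatedness to pick an entry point $c\in W$ with $a R c$ (such $c$ exists because $W\neq\emptyset$ and $a$ reaches $W$); the forth clause then gives $c'$ with $a' R' c'$ and $c\,\R\,c'$, and $c'\in W'$ because every successor of $a'$ lies in $W'$. Since $c$ sees the whole cluster we have $c R b$, so a second application of forth yields $b'$ with $c' R' b'$ and $b\,\R\,b'$; as $c'\in W'$ sees exactly $W'$, this $b'$ lies in $W'$ and satisfies $V_P(b)=V'_P(b')$, which is the first bullet. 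The second bullet follows symmetrically from the back clause, the root equality is immediate from $a\,\R\,a'$, and the third and fourth bullets follow by a single application of forth/back to the edges leaving the root. I expect the one delicate point to be the bookkeeping that each witness the bisimulation returns genuinely lands in $W'$ (resp.\ $W$) rather than at the root --- exactly the step where I use that all successors of the root lie in the cluster.
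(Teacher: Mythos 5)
Your proposal is correct in substance and, for the right-to-left direction, is exactly the paper's proof: the paper exhibits the same relation $\R = \{(a,a')\}\cup\{(b,b')\in W\times W'\mid V_P(b)=V'_P(b')\}$ and simply asserts that the four bullets ``correspond exactly'' to the clauses defining bisimulation. What you add is the verification the paper leaves implicit; in particular, your left-to-right argument --- enter the cluster through some successor of the root, then reach an arbitrary $b\in W$ by a second application of the forth clause inside the clique --- is precisely the content hidden behind the paper's ``it is not hard to see.'' That step is the only part of the lemma requiring real thought, because the first two bullets quantify over all of $W$ while bisimilarity of the roots directly constrains only their successors, and your bookkeeping that every witness lands in $W'$ (all edges of a flat model point into the cluster) is sound.

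One caveat deserves attention. You assume that a flat model is generated by its root and that $W$ is a single $R_2$-class, citing the construction of Lemma \ref{lem:NImodels}. Neither property is part of the paper's \emph{definition} of flat: there, $R_2$ is only required to be an equivalence relation (possibly with several classes), and $R_1$ need not reach every state of $W$. Your proof genuinely uses both properties (e.g., ``$b' R' t'$ holds automatically because $b'$ sees all of $W'$''), so read against the literal definition it proves a restricted statement. But this gap is the paper's rather than yours: the lemma is in fact false for flat models having states of $W$ unreachable from the root --- an unreachable state can violate the first bullet without affecting bisimilarity, and conversely an unreachable singleton cluster can make all four conditions hold for models that are not bisimilar --- and the paper's own one-line proof tacitly needs the same two properties, since its relation $\R$ fails the forth condition at a pair $(b,b')$ when the matching-valuation witness is not a successor of $b'$. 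Both properties do hold for the flat models produced by Lemma \ref{lem:NImodels} and used in Corollary \ref{cor:twosmallnonbis}, which is all the paper requires. A final trimming remark: in a euclidean model every state reachable from the root in at least one step is related to every other such state, so single-clusterness follows from generatedness; you could derive it rather than assume it, reducing your extra hypotheses to one.
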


\begin{proof}
If these conditions are met, we can define bisimulation $\R$ such that $a \R a'$ and for $b \in W$ and $b' \in W'$, $b \R b'$ iff $V_P(b) = V'_P(b')$; on the other hand, if there is a bisimulation, then it is not hard to see by the definition of bisimulation that these conditions hold --- for both claims, notice that the conditions above, given the form of the models, correspond exactly to the conditions from the definition of bisimulation.	
\end{proof}

This gives us Corollary \ref{cor:twosmallnonbis}, which is a useful characterization of incomplete formulas.

\begin{corollary}\label{cor:twosmallnonbis}
	Formula $\varphi$ is incomplete for $l+5$ if and only if it has two non-bisimilar flat pointed models for $l+5$ of at most $O(|\varphi|)$ states.
\end{corollary}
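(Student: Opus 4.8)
My plan is to prove the two directions separately, with essentially all the work in the forward direction. For the converse ($\Leftarrow$), suppose $\varphi$ has two flat pointed $l+5$-models that satisfy $\varphi$ and are not bisimilar modulo $P := P(\varphi)$. These are in particular two pointed $l+5$-models of $\varphi$ that are not bisimilar modulo $P$, so the contrapositive of Proposition~\ref{prp:completeness_is_bisimilarity} immediately yields that $\varphi$ is incomplete for $l+5$; neither flatness nor the size bound is used here.

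For the forward direction ($\Rightarrow$), assume $\varphi$ is incomplete. By Proposition~\ref{prp:completeness_is_bisimilarity} there are two pointed $l+5$-models of $\varphi$ that are not bisimilar modulo $P$. Applying Lemma~\ref{lem:NImodels} to each, and using that a full bisimulation is in particular a bisimulation modulo $P$, that bisimilarity preserves satisfaction of $\varphi$, and that $\sim_P$ is transitive, I may assume these models are flat, say $(\M,s)$ and $(\M',s')$, still satisfying $\varphi$ and still not bisimilar modulo $P$. It remains to produce such a pair with only $O(|\varphi|)$ states each. Rather than filtrate $(\M,s)$ and $(\M',s')$ directly --- their small-model filtrations need not remain non-bisimilar --- I would extract a short distinguishing formula and re-apply the small-model construction.

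The key step is to read off from Lemma~\ref{lem:flat_bisimulation} a formula $\psi \in L(P)$ of modal depth at most $2$ and size $O(|\varphi|)$ with $\M,s \models \psi$ and $\M',s' \not\models \psi$ (swapping the two models if necessary). Since $(\M,s) \not\sim_P (\M',s')$, one of the conditions of Lemma~\ref{lem:flat_bisimulation} fails, giving three cases. If the roots disagree on a literal $\ell$ over $P$, take $\psi = \ell$. If some $P$-valuation $v$ is realized among the worlds of one model but not the other, take $\psi = \Diamond\delta_v \vee \Diamond\Diamond\delta_v$; and if $v$ is realized among the root's successors of one but not the other, take $\psi = \Diamond\delta_v$. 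Here $\delta_v := \bigwedge_{p \in v} p \wedge \bigwedge_{p \in P \setminus v} \neg p$ is the conjunction of the $|P| \le |\varphi|$ literals pinning a world's $P$-valuation to be exactly $v$. These short formulas suffice because of the crucial structural fact that, in the flat models delivered by Lemma~\ref{lem:NImodels}, every world is reachable from the root in at most two steps (the root reaches its successors, whose equivalence class exhausts the remaining worlds); reflexivity in the cases $l \in \{\t,\sr\}$ moreover lets the single $\Diamond$ inspect the root itself. Hence a valuation occurring anywhere in the model is witnessed by a diamond-formula of depth at most two, which is exactly what keeps $\psi$ of size $O(|\varphi|)$.

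With such a $\psi$ fixed, both $\varphi \wedge \psi$ and $\varphi \wedge \neg\psi$ are $l+5$-satisfiable, witnessed by $(\M,s)$ and $(\M',s')$ respectively, and $|\varphi \wedge \psi| = O(|\varphi|)$. Applying Corollary~\ref{cor:smallNImodels} to each of $\varphi \wedge \psi$ and $\varphi \wedge \neg\psi$ produces flat $l+5$-models of $O(|\varphi|)$ states, one satisfying $\varphi \wedge \psi$ and the other $\varphi \wedge \neg\psi$. Both satisfy $\varphi$, and since they disagree on $\psi \in L(P)$ they have different $L(P)$-theories, so the Hennessy--Milner Theorem (Theorem~\ref{thm:eqbisim}) shows they are not bisimilar modulo $P$; this is the required pair. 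The step I expect to require the most care is the extraction of the short $\psi$: the case analysis via Lemma~\ref{lem:flat_bisimulation}, together with the bounded-reachability property of flat models that bounds $\psi$'s modal depth and hence its size, is precisely what prevents the models obtained from Corollary~\ref{cor:smallNImodels} from growing beyond $O(|\varphi|)$. The remaining points --- preservation of $\varphi$ under the bisimilarity of Lemma~\ref{lem:NImodels}, and the reflexive case where the root is itself a world --- are routine.
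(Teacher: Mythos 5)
Your proof is correct and takes essentially the same route as the paper's: flatten the two non-bisimilar models via Lemma~\ref{lem:NImodels}, use Lemma~\ref{lem:flat_bisimulation} to extract a distinguishing formula $\psi \in L(P)$ of modal depth at most $2$ and size $O(|\varphi|)$, then apply Corollary~\ref{cor:smallNImodels} to $\varphi \wedge \psi$ and $\varphi \wedge \neg\psi$ and conclude non-bisimilarity of the resulting small models from Theorem~\ref{thm:eqbisim}. The only (immaterial) difference is that in the ``valuation realized somewhere in one model but not the other'' case you take $\Diamond\delta_v \vee \Diamond\Diamond\delta_v$ where the paper takes just $\Diamond\Diamond\delta_v$; both work because of the two-step reachability in flat models that you correctly identify.
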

\begin{proof}
If $\varphi$ has two non-bisimilar pointed models for $l+5$, then by Theorem \ref{thm:eqbisim}, it is incomplete. On the other hand, if $\varphi$ is incomplete, again by Theorem \ref{thm:eqbisim} and  Lemma \ref{lem:NImodels}, $\varphi$ has two non-bisimilar flat pointed models, $(\M,a) = (\{a\}\cup W,R,V)$ and $(\M',a') = (\{a'\}\cup W',R',V')$. 
By Lemma \ref{lem:flat_bisimulation} and without loss of generality, we can distinguish three cases:
\begin{itemize}
	\item there is some $ p \in V_P(a) \setminus V_P(a')$: in this case, let $\psi = p$;
	\item 
	there is some $b \in W$, such that $a R b$ and for all $b' \in W'$ such that  $a' R b'$, $V_P(b) \neq V'_P(b')$: in this case, let 
	$$\psi = \Diamond (\bigwedge V_P(b) \wedge \neg \bigvee (P \setminus V_P(b)));$$
	\item there is some $b \in W$, such that for all $b' \in W'$, $V_P(b) \neq V'_P(b')$: in this case, let 
	$$\psi = \Diamond \Diamond (\bigwedge V_P(b) \wedge \neg \bigvee (P \setminus V_P(b))).$$
\end{itemize}
In all these cases, both $\varphi \wedge \psi$ and $\varphi \wedge \neg \psi$ are satisfiable and of size $O(|\varphi|)$, so by Corollary \ref{cor:smallNImodels}, each is satisfied in a non-bisimilar flat pointed model for $l+5$ of at most $O(|\varphi|)$ states.
 \end{proof}

Our first complexity result is a direct consequence of Corollary \ref{cor:twosmallnonbis}
and Theorem \ref{thm:bisimulation_is_easy}:

\begin{corollary}\label{cor:completeness_in_NP}
	The completeness problem for logic $l+5$ is in \coNP.
\end{corollary}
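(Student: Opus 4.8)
The plan is to show that the completeness problem for $l+5$ lies in $\coNP$ by exhibiting a nondeterministic polynomial-time algorithm for the complement, i.e. for the \emph{incompleteness} problem, and then appealing to the fact that $\NP = \text{co-}\coNP$. The whole argument rests on the characterization provided by Corollary \ref{cor:twosmallnonbis}: a formula $\varphi$ is incomplete for $l+5$ exactly when it has two non-bisimilar flat pointed models for $l+5$, each of size $O(|\varphi|)$. This is precisely the shape of statement that yields a $\coNP$ bound, since the witness for incompleteness is a pair of small models, and verifying the witness should be doable in polynomial time.

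Concretely, I would describe the nondeterministic procedure as follows. On input $\varphi$, the algorithm first computes $P = P(\varphi)$, which is immediate from $\varphi$. It then guesses two finite pointed models $(\M,a)$ and $(\M',a')$, each with $O(|\varphi|)$ states; by Corollary \ref{cor:twosmallnonbis} these can be taken to be flat, and the size bound guarantees that each guess can be written down using only polynomially many bits in $|\varphi|$. The algorithm must then verify three things: (i) that $(\M,a)$ and $(\M',a')$ are genuinely models for $l+5$, i.e. that their frames satisfy the relevant seriality/reflexivity/transitivity/euclidean conditions imposed by the axioms of $l+5$ (and, where required, the flatness constraints); (ii) that $\M,a \models \varphi$ and $\M',a' \models \varphi$; and (iii) that $(\M,a) \not\sim_P (\M',a')$. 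The procedure accepts iff all these checks succeed.

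The key observation is that each verification step runs in deterministic polynomial time. Checking the frame conditions is a straightforward inspection of the accessibility relation, polynomial in $|\M|$. Model checking $\M,a \models \varphi$ is the standard labelling algorithm, polynomial in $|\M| \cdot |\varphi|$ since it suffices to evaluate each subformula of $\varphi$ at each state. Most importantly, deciding $(\M,a) \sim_P (\M',a')$ is handled by Theorem \ref{thm:bisimulation_is_easy}, which gives running time $O(|P| \cdot (|\M| + |\M'|) \cdot \log(|\M| + |\M'|))$; since $|P| \le |\varphi|$ and the models are of size $O(|\varphi|)$, this is polynomial in $|\varphi|$. Negating the bisimilarity answer is trivial. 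Hence the entire verification is polynomial-time, so the incompleteness problem is in $\NP$ and the completeness problem is in $\coNP$.

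I do not anticipate a genuine mathematical obstacle here, since Corollary \ref{cor:twosmallnonbis} already supplies the crucial small-model property and Theorem \ref{thm:bisimulation_is_easy} supplies the efficient bisimulation test; the result is essentially a packaging of these two facts into a complexity statement. The only point demanding minor care is confirming that the guessed models can be required to be flat and of size $O(|\varphi|)$ simultaneously while still being models for the specific logic $l+5$ (including the placement of $s$ in $W$ when $l \in \{\t, \sr\}$), so that the size bound makes the guess polynomially representable and the correctness of the reduction follows directly from the corollary rather than needing any new model surgery.
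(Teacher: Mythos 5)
Your proposal is correct and is exactly the paper's argument: the paper states this corollary as a direct consequence of Corollary~\ref{cor:twosmallnonbis} (the two-small-non-bisimilar-models characterization of incompleteness) and Theorem~\ref{thm:bisimulation_is_easy} (the efficient bisimulation test), and your guess-and-verify procedure is just the explicit packaging of those two facts. The only difference is that you spell out the polynomial-time verification steps (frame conditions, model checking, bisimilarity) which the paper leaves implicit.
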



In the following, when $P$ is evident from the context, we will often omit any reference to it and instead of bisimulation modulo $P$, we will call the relation simply bisimulation.

\section{The Completeness Problem and Triviality}
\label{sec:completeness}

The first question we need to answer concerning the completeness problem for $l$ is whether there are any satisfiable and complete formulas for $l$.
If the answer is negative, then the problem is trivial. We examine this question with parameters the logic $l$ and whether $P$, the set of propositional variables we use, is empty or not. If for some logic $l$ the problem is nontrivial, then we give a complete formula $\varphi_P^l$ that 
uses exactly the propositional variables in $P$. We  see that for $P = \emptyset$, completeness can become trivial for another reason: for some logics, when $P = \emptyset$, all formulas are complete.
On the other hand, when $P \neq \emptyset$,  $\bigwedge P$ is incomplete for every logic.

\subsection{Completeness and \k}
Whether $P = \emptyset$ or not, completeness is nontrivial for \k\ and \kf:
let
$\varphi^\k_P = \varphi^\kf_P = \bigwedge P \wedge \Box \bot$ for every finite $P$.
Formula $\top$ is incomplete for \k\ and \kf.
\begin{lemma}
	Formula $\bigwedge P \wedge \Box \bot$ is complete and satisfiable for \k\ and for \kf.
\end{lemma}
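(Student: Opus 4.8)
The plan is to show that $\varphi = \bigwedge P \wedge \Box \bot$ is both satisfiable and complete for each of \k\ and \kf. First I would establish satisfiability by exhibiting a concrete pointed model. Consider the single-state model $\M = (\{a\}, \emptyset, V)$ with $V(a) = P$, so that $a$ has no outgoing edges. Then $\M,a \models \bigwedge P$ trivially, and since $a$ has no $R$-successors, $\M,a \models \Box \bot$ vacuously. This $\M$ has a transitive (indeed empty) accessibility relation, so it is a model for both \k\ and \kf; hence $\varphi$ is satisfiable for both logics.

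For completeness, I would invoke Proposition \ref{prp:completeness_is_bisimilarity}: it suffices to show that any two pointed models $(\M,a)$, $(\M',a')$ for $l$ (where $l \in \{\k,\kf\}$) that both satisfy $\varphi$ are bisimilar modulo $P$. The key observation is that satisfying $\Box \bot$ at a state forces that state to have no successors: if $a$ had some $b$ with $a R b$, then $\M,b \models \bot$, which is impossible. So in both models the distinguished point is a dead end. Moreover, satisfying $\bigwedge P$ forces $V_P(a) = P = V_P(a')$. I would then define $\R = \{(a,a')\}$ and verify the three bisimulation-modulo-$P$ clauses directly: the valuation condition holds since $V_P(a) = P = V_P(a')$, and both forth and back conditions hold vacuously because neither $a$ nor $a'$ has any successor. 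Thus $(\M,a) \sim_P (\M',a')$, and by Proposition \ref{prp:completeness_is_bisimilarity} the formula is complete.

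I do not anticipate a serious obstacle here; the argument is essentially routine once one notices that $\Box \bot$ pins the point down to a successorless state, collapsing all models to the same trivial shape up to bisimulation. The only mild subtlety worth flagging is that the claim is asserted simultaneously for \k\ and \kf, so I would note that the witnessing model (the empty-relation single state) lies in $F(\k) \cap F(\kf)$, and that the completeness argument via bisimulation makes no use of any frame condition beyond what both logics share — the successorless structure is automatically transitive, so no extra work is needed to handle the axiom $4$ case separately.
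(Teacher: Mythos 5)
Your proof is correct and follows essentially the same route as the paper's: exhibit the single successorless state $(\{a\},\emptyset,V)$ with $V(a)=P$ for satisfiability, then observe that $\Box\bot$ forces any satisfying point to be a dead end, so the singleton relation $\{(a,a')\}$ is a bisimulation modulo $P$. Your version is slightly more explicit (checking the valuation clause and noting that the empty relation is transitive, covering \kf), but the underlying argument is identical.
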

\begin{proof}
	A model that satisfies $\varphi_P^\k$ is $\M = (\{a\},\emptyset,V)$, where $V(a) = P$. If there is another model $\M',a' \models \varphi^\k_P$, then $\M',a' \models \Box\bot$, so there are no accessible worlds from $a'$ in $\M'$;
	therefore, $\R = \{(a,a')\}$ is a bisimulation.
\end{proof}

Notice that if $\varphi$ is complete for $l$, then it is complete for every extension of $l$. Therefore,  $\varphi^\k_P$ is complete for all other logics.
However, we are looking for \emph{satisfiable and complete} formulas for the different logics, so finding one complete formula for \k\ is not enough. On the other hand, if $l'$ is an extension of $l$ (by a set of axioms) and a formula $\varphi$ is complete for $l$ and satisfiable for $l'$, then we know that $\varphi$ is satisfiable and complete for all logics between (and including) $l$ and $l'$.
Unfortunately, the following lemma demonstrates that we cannot use this convenient observation to reuse $\varphi^\k_P$ --- except perhaps for $\kv$ and $\kfv$, but these can be handled just as easily together with the remaining logics with Negative Introspection.

\subsection{Completeness and Consistency}

When 
$l$ has axiom $T$ or $D$, but not $4$ or $5$, 
$P$ determines if
a satisfiable formula is complete:
\begin{lemma}\label{lem:d_t_completeness}
	Let $l$ be either \d\ or \t.
	A satisfiable formula $\varphi \in L$ is complete with respect to $l$ if and only if $P(\varphi) = \emptyset$.
\end{lemma}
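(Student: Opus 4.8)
The plan is to argue through Proposition \ref{prp:completeness_is_bisimilarity}, reading ``$\varphi$ is complete'' as ``all pointed $l$-models of $\varphi$ are bisimilar modulo $P(\varphi)$''. For the direction $P(\varphi)=\emptyset\Rightarrow\varphi$ complete, observe that when $P=\emptyset$ a bisimulation modulo $\emptyset$ places no constraint on valuations. Since the frames of \d\ and \t\ are serial, the total relation $W\times W'$ between any two pointed $l$-models $(\M,a)$ and $(\M',a')$ is a bisimulation modulo $\emptyset$: the valuation clause $V_\emptyset(s)=V'_\emptyset(s')$ is vacuous, and the forth and back clauses are witnessed by seriality, since every state has a successor and that successor is related to everything. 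Hence any two pointed $l$-models are bisimilar modulo $\emptyset$, so every $\varphi\in L(\emptyset)$ is complete; in particular every satisfiable such $\varphi$ is. Note this direction needs no satisfiability hypothesis.

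For the converse I show that a satisfiable $\varphi$ with $P(\varphi)\neq\emptyset$ is incomplete. Fix $p\in P(\varphi)$, set $d=md(\varphi)$, and fix a finite $l$-model $(\M,a)\models\varphi$. The guiding principle is that the truth of $\varphi$ at $a$ depends only on the model up to modal depth $d$ (bounded bisimulation): any change made strictly beyond distance $d$ from $a$ preserves $\varphi$, and because \d\ and \t\ lack transitivity we have complete freedom there, subject only to seriality (resp.\ reflexivity). I will exploit this to produce two $l$-models of $\varphi$ that are not bisimilar modulo $P(\varphi)$, whence $\varphi$ is incomplete by Proposition \ref{prp:completeness_is_bisimilarity}. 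Let $\Diamond^{k}\theta$ abbreviate $\theta$ prefixed by $k$ nested diamonds.

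For $l=\d$ this is clean. Unravel $(\M,a)$ into a tree $B$ of depth $d$; since $B$ is $d$-bisimilar to $(\M,a)$ we have $B\models\varphi$, and there is a leaf at distance exactly $d$ because seriality gives a path of length $d$ from $a$. Serialize $B$ by giving every leaf $\ell$ a fresh reflexive successor $s_\ell$ with $V_P(s_\ell)=\emptyset$, and call the result $\M_2$. Then $\M_2$ is a \d-model, $\M_2\models\varphi$ (the new states lie at distance $d+1$), and every path of length exactly $d+1$ from $a$ ends at some $s_\ell$, where $p$ is false; thus $\M_2,a\models\neg\Diamond^{d+1}p$. Let $\M_1$ be $\M_2$ with the valuation of one $s_{\ell^{*}}$ changed so that $p$ becomes true there; then $\M_1\models\varphi$ as well and $\M_1,a\models\Diamond^{d+1}p$. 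So neither $\varphi\to\Diamond^{d+1}p$ nor $\varphi\to\neg\Diamond^{d+1}p$ is valid for \d, and $\varphi$ is incomplete.

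For $l=\t$ reflexivity forces a self-loop at every state, so a shallow $p$-state is reachable at distance $d+1$ too; consequently $\Diamond^{d+1}p$ can no longer be toggled freely and a nested witness is needed. Here the plan is to use the reflexive unraveling $B^{*}$ of $(\M,a)$ to depth $d$ (the tree $B$ with a self-loop added at every node, which is a \t-model $d$-bisimilar to $(\M,a)$) and to toggle, at a leaf $\ell$ of type $\sigma=V_P(\ell)$, the presence of a fresh reflexive successor of a type $\tau\neq\sigma$. Writing $\delta_\rho=\bigwedge\rho\wedge\neg\bigvee(P\setminus\rho)$ for the formula true exactly at states of $P$-type $\rho$, the intended distinguishing formula is $\Diamond^{d}(\delta_\sigma\wedge\Diamond\delta_\tau)$, of modal depth $d+1$. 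The main obstacle, and the only place where reflexivity forces real care, is to guarantee that this local change actually moves the two models into different bisimulation classes: since propositional types repeat across the model, one must choose $\ell$ and $\tau$ so that no node within distance $d$ of type $\sigma$ already carries a $\tau$-successor, using the unique access paths of the unraveling to pin down $\ell$. Once such a choice is secured, both models satisfy $\varphi$ but disagree on $\Diamond^{d}(\delta_\sigma\wedge\Diamond\delta_\tau)$, and incompleteness follows as before.
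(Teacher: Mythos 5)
Your left-to-right direction ($P(\varphi)=\emptyset$ implies complete) and your treatment of \d\ are correct, and the \d\ argument is essentially the paper's own: the paper also unravels a model of $\varphi$ to depth $d=md(\varphi)$ and toggles the valuation of a state lying beyond depth $d$ (a single sink $x$ reachable from all depth-$d$ paths, with $V'_1(x)=\emptyset$ versus $V'_2(x)=P$), which is your ``pendant'' idea with one shared pendant. The genuine gap is in your case $l=\t$, which you leave conditional on being able to ``choose $\ell$ and $\tau$ so that no node within distance $d$ of type $\sigma$ already carries a $\tau$-successor.'' Such a choice need not exist, no matter which model of $\varphi$ you fix. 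Take $P=\{p\}$ and $\varphi = p \wedge (p\to\Diamond\neg p)\wedge(\neg p\to\Diamond p)\wedge\Box\bigl((p\to\Diamond\neg p)\wedge(\neg p\to\Diamond p)\bigr)$, of modal depth $d=2$, satisfied at the $p$-state of the two-state reflexive model with all four edges. In \emph{every} reflexive model of $\varphi$, each state at distance at most $1$ from the root satisfying $p$ has a $\neg p$-successor and vice versa; hence in the depth-$2$ reflexive unraveling every internal node of type $\{p\}$ has a type-$\emptyset$ child and every internal node of type $\emptyset$ has a type-$\{p\}$ child. Since there are only two types, your side condition fails for every leaf $\ell$ and every $\tau\neq\sigma$, so the step ``once such a choice is secured'' is never reached; and indeed $\Diamond^{d}(\delta_\sigma\wedge\Diamond\delta_\tau)$ is then true in \emph{both} of your models (in the unmodified $B^{*}$ one idles on self-loops and then uses the offending internal node), so it cannot serve as the distinguishing formula.

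The construction itself is salvageable; what fails is only the formula-based certificate and the side condition propping it up. A reflexive unraveling has the structural property that every internal node has a child of its own type (coming from the self-loop in $\M$). Using this, one shows by induction from the leaves upward that \emph{no} state $v$ of $B^{*}$ is bisimilar to the pendant-augmented leaf: such a $v$ would need a successor bisimilar to the hereditarily-$\tau$ pendant $s$ (impossible for a leaf of $B^{*}$, whose only successor is itself and has type $\sigma\neq\tau$), while its same-type child cannot be bisimilar to $s$ and so would itself have to be bisimilar to the augmented leaf, pushing the requirement down to a leaf of $B^{*}$. Since the augmented leaf is reachable from the root and bisimulations propagate along paths by the forth condition, $(\M_1,a)\not\sim_P(\M_2,a)$ for \emph{any} choice of leaf and $\tau\neq\sigma$ --- no side condition needed. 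For comparison, the paper sidesteps the reflexive case entirely: it carries out the $P\neq\emptyset$ construction for \d\ only and disposes of \t\ with the remark that one may assume $l=\d$. Your instinct that \t\ needs genuine extra care is sound, but as submitted the \t\ half of your proof does not go through.
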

\begin{proof}
	When $P = \emptyset$, all models are bisimilar through the total bisimulation; therefore, all formulas $\varphi$, where $P(\varphi) = \emptyset$ are trivially complete. We now consider the case for $P \neq \emptyset$; notice that we can assume that $l = \d$, as $\d$ is contained in \t.
	Let the modal depth
	of $\varphi$ be $d$ and let $\M,a \models \varphi$, where $\M = (W,R,V)$; let $x \notin W^*$,
	 $a_0 = a$, and 
	$$
	\Pi_d = \{a_0\cdots a_k \in W^* \mid k \leq d \text{ and for all } 0 \leq i < k,\ a_i R a_{i+1}\}.
	$$ 
	Then, we define $\M'_1 = (W',R',V'_1)$ and $\M'_2 = (W',R',V'_2)$, where
	\begin{align*}
	W' \ = & \
	\Pi_d
	\cup \{x\};\\
	R' \ =& \ \{ (\alpha,\alpha b) \in W'^2 \mid b \in W \}\ \cup\ \{ (a_0 a_1\cdots a_d,x)\in W'^2 \}  \
	\cup \ \{(x,x) \}
	\\
	V'_i(\alpha b) \ =& \ V(b), \text{ for } i=1,2, \ 0 \leq |\alpha|< d ;
	\\
	V'_1(x) \ = & \ \emptyset; \ \text{ and } \ V'_2(x)\ =\ P .
	\end{align*}
	
	To prove that
	$\M'_1,a \models \varphi$ and $\M'_2,a \models \varphi$, we prove that for $\psi \in sub(\varphi)$, for every $i = 1,2$ and $w = a_0\cdots a_k \in \Pi_d$, where $k \leq d - md(\psi)$,
	$\M'_i,w \models \psi$ if and only if  $\M,a_k \models \psi$. We use induction on $\psi$. If $\psi$ is a literal or a constant, the claim is immediate and so are the cases of the $\wedge, \vee$ connectives. If $\psi = \Box \psi'$, then $md(\psi') = md(\psi)-1$; $\M'_i,w \models \psi$ iff for every $w R' w'$, $\M'_i,w' \models \psi'$ iff for every $a_k R' b$, $\M,b \models \psi'$ (by the Inductive Hypothesis) iff $\M,a_k \models \psi$; the case of $\psi = \Diamond  \psi'$ is symmetric.

	If $(\M'_1,a) \sim (\M'_2,a)$ through bisimulation $\R$ from $\M'_1$ to $\M'_2$, then 
	notice that in both models any sufficiently long path from $a$ will end up at $x$; therefore,
	by the conditions of bisimulation, 
	$x \R x$, which is
	a contradiction, since $V'_1(x)\neq V'_2(x)$.
	So, $\varphi$ is satisfied in two non-bisimilar models for $\d$.
\end{proof}

\subsection{Completeness, Consistency, and Positive Introspection}
For every finite $P$, let $\varphi_P^\df = \varphi_P^\sr  = \bigwedge P \wedge \Box \bigwedge P$. As the following lemma demonstrates, $\varphi_P^\df$ is a complete formula for \df\ and \sr.  
\begin{lemma}
	For every finite $P$, $\varphi_P^\df$ is  complete for \df\ and \sr;  all formulas in $L(\emptyset)$ are complete for \df\ and \sr.
\end{lemma}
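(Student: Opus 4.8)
The plan is to invoke Proposition \ref{prp:completeness_is_bisimilarity}, which reduces completeness of a formula to the requirement that any two pointed models satisfying it be bisimilar modulo $P$. Accordingly, I would fix two pointed models $(\M,a)$ and $(\M',a')$ for \df\ (or \sr) with $\M,a \models \varphi_P^\df$ and $\M',a' \models \varphi_P^\df$, and exhibit a bisimulation modulo $P$ between them.

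The central observation is that in a transitive model the conjunct $\Box \bigwedge P$ of $\varphi_P^\df$ propagates $\bigwedge P$ to \emph{every} reachable state. First I would show that every $w \in Reach_{\M}(a)$ satisfies $\bigwedge P$: the state $a$ does, by the conjunct $\bigwedge P$; and if $w$ is reached from $a$ by a nonempty path, then transitivity (axiom $4$) collapses that path to a single edge $a R w$, so the conjunct $\Box \bigwedge P$ at $a$ forces $w \models \bigwedge P$. Hence $V_P(w) = P$ for every $w \in Reach_{\M}(a)$, and likewise $V'_P(w') = P$ for every $w' \in Reach_{\M'}(a')$. I would then take the total relation on the reachable parts, $\R = Reach_{\M}(a) \times Reach_{\M'}(a')$, and check the three clauses of bisimulation modulo $P$: the valuation clause holds because every related pair has $P$-valuation exactly $P$; the forth and back clauses use seriality, since every state has a successor and a successor of a reachable state is again reachable, so a matching successor can always be supplied on the other side. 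Seriality is available uniformly, as \df\ has axiom $D$ and \sr\ has reflexivity (axiom $T$), which implies seriality. Since $(a,a') \in \R$, this yields $(\M,a) \sim_P (\M',a')$, and completeness follows.

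For the second assertion, about $L(\emptyset)$, I would note that when $P = \emptyset$ the valuation clause of bisimulation modulo $P$ is vacuous, so the total relation $W \times W'$ between any two pointed \df\ (or \sr) models is itself a bisimulation modulo $\emptyset$, again by seriality for the forth and back clauses. Hence any two pointed models for the logic are bisimilar modulo $\emptyset$, and by Proposition \ref{prp:completeness_is_bisimilarity} every formula in $L(\emptyset)$ is complete.

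I expect the step that must be stated most carefully to be the propagation argument: it relies essentially on transitivity, and this is precisely what separates \df\ and \sr\ from \d\ and \t. Without axiom $4$, the conjunct $\Box \bigwedge P$ constrains only the immediate successors of $a$, leaving states two or more steps away free to disagree on $P$ --- which is exactly why Lemma \ref{lem:d_t_completeness} found no complete formulas with nonempty $P$ for \d\ and \t. The remaining clauses are routine given seriality.
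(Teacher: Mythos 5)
Your proof is correct and follows essentially the same route as the paper: the paper also takes $\R$ to be the total relation between the states reachable from $a$ and those reachable from $a'$ and asserts it is a bisimulation, leaving implicit exactly the verification you spell out (propagation of $\bigwedge P$ to all reachable states via transitivity, and the forth/back clauses via seriality). The only cosmetic difference is the $P=\emptyset$ case, which the paper dispatches by observing that $\varphi_\emptyset^\df$ is then a tautology and invoking the first claim, while you re-run the total-bisimulation argument directly; both are sound.
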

\begin{proof}
	Let $\M,a \models \varphi_P^\df$ and $\M',a' \models \varphi_P^\df$; let $\R$ be the relation that connects all states of $\M$ that are reachable from $a$ (including $a$) to all states of $\M'$ that are reachable from $a'$ (including $a'$); it is not hard to verify that $\R$ is a bisimulation.
	Notice that if $P = \emptyset$, then $\varphi_P^\df$ is a tautology, thus all formulas are complete.
\end{proof}

It is  straightforward to see that $\varphi_P^\df$ is satisfiable for every modal logic $l$:  consider a model based on any frame for $l$, where $\bigwedge P$ holds at every state of the model. Therefore:

\begin{corollary}
	$\varphi^\df$ is satisfiable and complete for every extension of $\df$.\footnote{Although for the purposes of this paper we only consider a specific set of modal logics, it is interesting to note that the corollary 
		can be extended to a much larger class of logics.
	}
\end{corollary}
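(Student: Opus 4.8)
The plan is to assemble the corollary from pieces already in place; no new construction is required. First I would invoke the preceding lemma, which states that $\varphi_P^\df = \bigwedge P \wedge \Box \bigwedge P$ is complete for $\df$. Second, I would appeal to the monotonicity observation made earlier in this section: if $\varphi$ is complete for $l$, then it is complete for every extension of $l$. The justification is that completeness (Definition~\ref{def;complete}) is phrased purely in terms of derivability, so if $l'$ extends $l$ by adding axioms, then $\vdash_l \chi$ implies $\vdash_{l'} \chi$ for every $\chi$; hence each disjunct of ``$\vdash_l \varphi_P^\df \rightarrow \psi$ or $\vdash_l \varphi_P^\df \rightarrow \neg \psi$'' survives the passage from $\df$ to any extension. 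Combining these two facts gives that $\varphi_P^\df$ is complete for every extension of $\df$.

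For satisfiability, I would use the universal-model construction sketched immediately before the statement. Given any extension $l$ of $\df$, I pick a frame $(W,R)\in F(l)$ and set $V(w)=P$ for every $w\in W$. Then at any state $a$ we have $\M,a\models \bigwedge P$ trivially, and $\M,a\models \Box \bigwedge P$ because every $R$-successor also satisfies $\bigwedge P$; hence $\M,a\models \varphi_P^\df$. The only thing to verify is that $F(l)$ is nonempty, which holds for each of the logics we consider (and, as the footnote indicates, for every consistent normal modal logic of this kind): the single reflexive point, for instance, is serial, reflexive, transitive, and euclidean, so it is a frame for every logic between $\df$ and $\sv$.

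There is essentially no hard step here: the corollary follows directly from the lemma, the derivability-monotonicity remark, and the one-line satisfiability argument. The only points deserving a moment's care are the degenerate case $P=\emptyset$ --- where $\varphi_P^\df$ reduces to $\top$, so it is a tautology and therefore both satisfiable and vacuously complete in every logic --- and the nonemptiness of the frame class used in the satisfiability step. Both are routine, so I expect the actual write-up to be very short.
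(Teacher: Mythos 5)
Your proposal is correct and follows essentially the same route as the paper: the paper also combines the preceding lemma (completeness of $\varphi_P^\df$ for \df), the earlier remark that completeness transfers to extensions by monotonicity of derivability, and the observation that a model over any frame for $l$ in which every state satisfies $\bigwedge P$ satisfies $\varphi_P^\df$. Your extra care about nonemptiness of the frame class and the $P=\emptyset$ case only makes explicit what the paper leaves implicit.
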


\subsection{Consistency and \NI}
For logic $l = l'+5$, let $\varphi^l_P = \bigwedge P \wedge \Diamond \Box \bigwedge P $. 
\begin{lemma}
	For any logic $l = l'+5$,
	$\varphi^l_P$ is a satisfiable complete formula for $l$.
\end{lemma}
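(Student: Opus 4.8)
The plan is to treat satisfiability and completeness separately, leaning on the bisimulation characterization already in place. For satisfiability I would simply exhibit the single reflexive point $\M_0 = (\{w\},\{(w,w)\},V)$ with $V(w) = P$: its accessibility relation is an equivalence relation, hence a frame for every logic $l = l'+5$ we consider, and since $w$ satisfies $\bigwedge P$ while its unique successor $w$ again satisfies $\bigwedge P$, we get $\M_0, w \models \bigwedge P \wedge \Diamond \Box \bigwedge P = \varphi^l_P$. For completeness I would invoke Proposition \ref{prp:completeness_is_bisimilarity} and reduce to showing that any two pointed $l$-models $(\M,a)$ and $(\M',a')$ satisfying $\varphi^l_P$ are bisimilar modulo $P$. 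By Lemma \ref{lem:NImodels} I may assume both are flat, say $(\M,a) = (\{a\}\cup W, R, V)$ and $(\M',a') = (\{a'\}\cup W', R', V')$, which makes the criterion of Lemma \ref{lem:flat_bisimulation} applicable.

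The heart of the argument, and the step I expect to be the main obstacle, is to show that the conjunct $\Diamond \Box \bigwedge P$ forces \emph{every} state of $W$ to satisfy $\bigwedge P$, that is $V_P(b) = P$ for all $b \in W$ (and symmetrically for $W'$). For this I would first establish that in a euclidean model the reachable states that carry a predecessor form a single total cluster: for all $x,y \in W$ one has $xRy$. This follows from the euclidean condition $xRy \wedge xRz \Rightarrow yRz$ in three moves: every such state is reflexive (apply euclideanness to a state together with one of its predecessors), the immediate successors of $a$ form a clique, and an induction along a reachability path repeatedly applies euclideanness to pull each reachable state into that clique. Granting this, let $b \in W$ be the successor of $a$ witnessing $\Diamond \Box \bigwedge P$. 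Because $W$ is a total cluster, every state of $W$ is a successor of $b$, and in a flat model $b$ has no other successors; so $b \models \Box \bigwedge P$ yields $V_P(c) = P$ for all $c \in W$. Combined with $V_P(a) = P$ from the conjunct $\bigwedge P$, and with the fact that $a$ has at least the successor $b$, the identical conclusions hold for $\M'$.

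Finally I would read off bisimilarity directly from Lemma \ref{lem:flat_bisimulation}. All four of its conditions reduce to comparing $P$-valuations: $V_P(a) = P = V_P(a')$; every state of $W$ and every successor of $a$ has valuation $P$, while $W'$ (respectively the successors of $a'$) is nonempty with the same property, so the required matchings hold trivially in both directions. Hence $(\M,a) \sim_P (\M',a')$, and by Proposition \ref{prp:completeness_is_bisimilarity} the formula $\varphi^l_P$ is complete for $l$.

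The single genuine subtlety is the single-cluster fact. The bare definition of a flat model only requires $R_2$ to be an \emph{arbitrary} equivalence relation on $W$, and without collapsing $W$ to one cluster the witness $b$ would control only the valuations in its own class, leaving open the possibility of a state in another class with $V_P \neq P$ and breaking the valuation-set match demanded by Lemma \ref{lem:flat_bisimulation}. Deriving the total-cluster structure from euclideanness is therefore exactly what makes the conjunct $\Diamond \Box \bigwedge P$ pin the whole reachable part of the model down to bisimulation, and it is the step I would write out most carefully.
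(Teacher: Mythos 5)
Your route is the same as the paper's: the paper also certifies satisfiability with the single reflexive point $(\{a\},\{(a,a)\},V)$, $V(a)=P$, and derives completeness from Lemma \ref{lem:NImodels} (read through Proposition \ref{prp:completeness_is_bisimilarity} and Lemma \ref{lem:flat_bisimulation}); you are essentially filling in details that the paper compresses into one line. However, the step you yourself single out as the crux contains a genuine gap: euclideanness does \emph{not} collapse all of $W$ into one cluster; it only does so for the part of $W$ reachable from the root. A flat model, as defined, may contain states unreachable from $s$, and nothing in the frame conditions or in $\varphi^l_P$ constrains them. Concretely, for $P \neq \emptyset$, take the paper's own satisfiability witness and add an isolated reflexive point: $\M = (\{a,b\},\{(a,a),(b,b)\},V)$ with $V(a)=P$, $V(b)=\emptyset$, rooted at $a$. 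This is flat for every $l'+5$ (take $W=\{a,b\}$, $R_1=\emptyset$, $R_2=\{(a,a),(b,b)\}$, an equivalence relation on $W$; the accessibility relation is itself an equivalence relation, hence serial, reflexive, transitive, and euclidean), and $\M,a \models \bigwedge P \wedge \Diamond\Box\bigwedge P$. Yet $a$ and $b$ are unrelated, so your claim ``for all $x,y\in W$, $xRy$'' fails, $V_P(b)=\emptyset\neq P$ refutes your claim that $\Diamond\Box\bigwedge P$ forces $V_P=P$ on all of $W$, and the first matching condition of Lemma \ref{lem:flat_bisimulation} fails when this model is compared against the one-point model --- even though the two pointed models \emph{are} bisimilar, since unreachable states are irrelevant to bisimilarity of pointed models. (This incidentally shows that Lemma \ref{lem:flat_bisimulation} itself tacitly assumes every state of $W$ is reachable; your proof inherits that unstated hypothesis rather than discharging it.)

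The repair is small but must be stated. Either first replace each flat model by its generated submodel --- the restriction to states reachable from the root --- which is bisimilar to it via the identity relation and is easily checked to be flat again; or invoke not just the statement but the construction in the proof of Lemma \ref{lem:NImodels}, which produces flat models in which every state of $W$ has a predecessor and is reachable from $s$. On such models your euclidean induction (reflexivity of successors, the clique of immediate successors of the root, and the walk along reachability paths) is correct, $W$ is indeed a single total cluster, and the rest of your argument --- the witness of the $\Diamond$ seeing all of $W$, all valuations collapsing to $P$, and the conditions of Lemma \ref{lem:flat_bisimulation} holding trivially --- goes through as you wrote it.
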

\begin{proof}
	By Lemma \ref{lem:NImodels}, $\varphi^l_P$ is complete.    
	It is satisfied in $(\{a\},\{(a,a)\},V)$, where $V(a)=P$.
\end{proof}
When $P = \emptyset$, we can distinguish two cases. If $l' \in  \{\d,\df,\t,\sr\}$, then $\varphi^l_\emptyset$ is a tautology, therefore all formulas in $L(P)$ are complete for $l$.\footnote{This is also a corollary of Lemma \ref{lem:d_t_completeness}, as these are extensions of \d\ and \t.} If $l' \in \{\k,\kf\}$, by Lemma \ref{lem:NImodels}, an $l$-model would either satisfy $\varphi^l_P$ or $\Box \bot$, depending on whether the accessibility relation is empty or not.
 Therefore, if $P = \emptyset$ the completeness problem for \kv\ and \kfv\ is not trivial, but it is easy to solve: a formula with no propositional variables is complete for $l\in \{\kv, \kfv\}$ if it is satisfied in at most one of the two non-bisimilar modulo $\emptyset$ models for $l$.

\begin{corollary}
	If $P=\emptyset$, the completeness problem for \kv\ and \kfv\ is in \P.
\end{corollary}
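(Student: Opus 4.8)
The plan is to reduce the completeness problem, for $P = \emptyset$ and $l \in \{\kv, \kfv\}$, to a bounded number of model-checking instances on models of constant size. The starting observation is that when valuations are irrelevant (i.e.\ modulo $\emptyset$) there are, up to bisimulation, only two pointed $l$-models. Concretely, I would fix $(\M_0, s) = (\{s\}, \emptyset, V)$ and $(\M_1, s) = (\{s, w\}, \{(s,w),(w,w)\}, V)$ and show that every pointed $l$-model is bisimilar modulo $\emptyset$ to exactly one of them. Both $\M_0$ and $\M_1$ are $l$-models for $l \in \{\kv, \kfv\}$, since their accessibility relations are vacuously (respectively trivially) transitive and euclidean; and $(\M_0,s) \not\sim_\emptyset (\M_1, s)$ because $\M_0, s \models \Box\bot$ while $\M_1, s \models \Diamond\top$, so by Theorem \ref{thm:eqbisim} they lie in genuinely distinct bisimulation classes.

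The core step is establishing this dichotomy. Given any pointed $l$-model, Lemma \ref{lem:NImodels} lets me replace it by a bisimilar flat model $(\{s\}\cup W, R, V)$ with $R = R_1 \cup R_2$, $R_1 \subseteq \{s\}\times W$, and $R_2$ an equivalence relation on $W$; since $P = \emptyset$, valuations are ignored throughout. I would then split on whether $s$ has an $R$-successor. If it does not, the reachable submodel from $s$ is the singleton $\{s\}$, which is bisimilar modulo $\emptyset$ to $(\M_0,s)$. Otherwise I claim the relation $\R$ pairing $s$ with $s$ and every state of $W$ with $w$ is a bisimulation modulo $\emptyset$ onto $(\M_1,s)$: the forth/back conditions at $(s,s)$ hold because $s$ has at least one $W$-successor and $w$'s only successor is $w$, while the conditions at a pair $(u,w)$ with $u \in W$ hold because $R_2$ is reflexive on $W$ (as shown within the proof of Lemma \ref{lem:NImodels}) and every successor of $u$ lies in $W$. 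This is the step I expect to be the main obstacle: it is where the euclidean structure is actually used to collapse all ``successor'' models into one class, and the verification needs some care in the case $s \in W$, where $s$ is additionally paired with $w$.

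With the dichotomy in hand, the decision procedure is immediate. By Proposition \ref{prp:completeness_is_bisimilarity}, a formula $\varphi$ with $P(\varphi) = \emptyset$ is complete for $l$ if and only if all pointed $l$-models satisfying it are pairwise bisimilar modulo $\emptyset$; since there are only the two bisimulation classes represented by $(\M_0, s)$ and $(\M_1, s)$, this holds exactly when $\varphi$ is satisfied in at most one of them. (An unsatisfiable $\varphi$ is satisfied in neither and is vacuously complete, which is consistent with the criterion.) The algorithm therefore evaluates $\M_0, s \models \varphi$ and $\M_1, s \models \varphi$ by the standard labelling model-checking procedure and declares $\varphi$ complete precisely when these two evaluations do not both return true. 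As $\M_0$ and $\M_1$ have constant size, each check runs in time polynomial in $|\varphi|$, so the whole procedure runs in \P.
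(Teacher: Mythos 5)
Your proposal is correct and follows essentially the same route as the paper: by Lemma \ref{lem:NImodels} every pointed model for \kv\ or \kfv\ falls, modulo $\emptyset$, into one of exactly two bisimulation classes (point with no successor vs.\ point with a successor), and completeness reduces to checking that $\varphi$ holds in at most one of two fixed constant-size representatives via model checking. The paper leaves the dichotomy and the bisimulation verification implicit; you have simply filled in those details.
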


\subsection{Completeness and Modal Logics}
A logic $l$ has a nontrivial completeness problem if for $P \neq \emptyset$, there are complete formulas for $l$. From the logics we examined, only \d\ and \t\  have trivial completeness problems.
Table \ref{table:tableoftriviality} summarizes the results of this section and of Section \ref{sec:complexity} regarding the completeness problem. As the table demonstrates, we can distinguish the following cases. For \k, the completeness problem is non-trivial and \PSPACE-complete; this does not change when we add axiom $4$. Once we add axiom $D$ to \k, but not $4$ or $5$, the completeness problem becomes trivial; adding the stronger axiom $T$ does not change the situation. Adding both $4$ and $D$ or $T$ to \k\ makes completeness \PSPACE-complete again, except when $P = \emptyset$. Regardless of other axioms, if the logic has \NI, completeness is \coNP-complete --- unless $P=\emptyset$, when the situation depends on whether the logic has $D$ (or the stronger $T$) or not.

\section{The Complexity of Completeness}
\label{sec:complexity}

\begin{table}[t]
	\centering
	\begin{tabular}{|r | c | c |}
		\hline 
		\ML\  \  & $P=\emptyset$ & $P \neq \emptyset$\\
		\hline 
		\k, \kf\  &  \PSPACE-complete		& 	\PSPACE-complete 	\\
		\d, \t\  & trivial (all)		&	trivial (none)	\\
		\df, \sr\  & 	trivial (all)	&	\PSPACE-complete	\\
		\kv, \kfv\  &	in \P &	\coNP-complete	\\
		$l+5$, $l \neq \k,\kf$  &	trivial (all) &	\coNP-complete	\\
		\hline
	\end{tabular}
	\caption{The complexity of the completeness problem for different modal logics. Trivial (all) indicates that all formulas in this case are complete for the logic; trivial (none) indicates that there is no satisfiable and complete formula for the logic.}
	\label{table:tableoftriviality}
\end{table}

Our main result is that for a modal logic $l$, the completeness problem has the same complexity as provability for $l$, as long as we allow for propositional variables in a formula and the completeness problem for $l$ is nontrivial (see also Table \ref{table:tableoftriviality}).
For the lower bounds, we consider hardness under polynomial-time reductions. 
As the hardness results are relative to complexity classes that include \coNP, these reductions suffice.

\subsection{A Lower Bound}

We  present a lower bound for the complexity of the completeness problem: we show that the completeness problem is at least as hard as provability for a logic, as long as it is nontrivial. 

\begin{theorem}\label{thm:hardness}
	Let $l$ be a logic that has a nontrivial completeness problem and let $C$ be a complexity class. If $l$-provability is $C$-hard, then the completeness problem for $l$ is $C$-hard.
\end{theorem}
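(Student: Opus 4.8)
The plan is to give a polynomial-time many-one reduction from $l$-provability to the completeness problem for $l$. Since, by Theorem~\ref{thm:completeness}, $\vdash_l \varphi$ holds exactly when $\varphi$ is valid for $l$, it suffices to reduce $l$-validity to completeness. The idea is to take the given formula $\varphi$ and build a formula $\Theta$ that collapses to a known complete formula when $\varphi$ is valid, but that acquires a second, non-bisimilar model whenever $\varphi$ has a countermodel.

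Concretely, I would fix a fresh propositional variable $q \notin P(\varphi)$, put $S = P(\varphi) \cup \{q\}$, and use the nontriviality hypothesis: since $S \neq \emptyset$, there is a satisfiable formula $\varphi_S^l$ that is complete for $l$ and uses exactly the variables in $S$ (these are the formulas constructed in Section~\ref{sec:completeness}, each of which has $\bigwedge S$ as a conjunct, so its root satisfies every variable of $S$, in particular $q$). I then define
\[
\Theta \ = \ \varphi_S^l \ \lor\ (\neg q \wedge \neg \varphi).
\]
Observe that $P(\Theta) = S = P(\varphi_S^l)$, and that $\Theta$ has size $O(|\varphi|)$ and is computable in polynomial time, since $\varphi_S^l$ is linear in $|S|$.

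For correctness I would argue the two directions. If $\varphi$ is valid, then $\neg\varphi$, and hence $\neg q \wedge \neg \varphi$, is unsatisfiable for $l$, so $\vdash_l \Theta \leftrightarrow \varphi_S^l$; since $\varphi_S^l$ is complete and $P(\Theta) = P(\varphi_S^l)$, the formula $\Theta$ is complete as well. If $\varphi$ is not valid, then $\neg\varphi$ is satisfiable for $l$; as $q$ does not occur in $\varphi$, I may take a pointed $l$-model of $\neg\varphi$ and set $q$ false at its root, obtaining an $l$-model of $\neg q \wedge \neg\varphi$, and therefore of $\Theta$, whose root falsifies $q$. On the other hand, $\varphi_S^l$ is satisfiable and forces $q$ at its root, so $\Theta$ has another model whose root satisfies $q$. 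These two models of $\Theta$ disagree on $q \in P(\Theta)$ at their roots, so they cannot be bisimilar modulo $P(\Theta)$; by Proposition~\ref{prp:completeness_is_bisimilarity}, $\Theta$ is incomplete. Hence $\Theta$ is complete for $l$ if and only if $\varphi$ is valid, i.e.\ provable, for $l$, giving the desired reduction and thus $C$-hardness of the completeness problem.

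The step needing the most care, and where the nontriviality hypothesis is essential, is ensuring $P(\Theta) = P(\varphi_S^l)$: completeness is always relative to the variables occurring in the formula, so the ``canonical'' disjunct must already pin down the behaviour of \emph{every} variable of $\Theta$ up to bisimulation. This is precisely why I take the complete formula over the full set $S$ rather than over $\{q\}$ alone; with a complete formula on a smaller variable set, the valuations of the remaining variables of $\Theta$ would be unconstrained and $\Theta$ could fail to be complete even when $\varphi$ is valid. The fresh marker $q$ then performs the orthogonal task of separating the two disjuncts' models, so that any genuine countermodel of $\varphi$ yields a model of $\Theta$ that is provably non-bisimilar to the canonical one.
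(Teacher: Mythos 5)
Your proposal is correct, and it is a genuinely different reduction from the paper's. The paper's reduction stays inside the variable set $P = P(\varphi)$: it first model-checks $\varphi$ on the small canonical model $(\M_l,a_l)$ of $\varphi^l_P$, outputs the incomplete formula $\bigwedge P$ if $\M_l,a_l \not\models \varphi$, and otherwise outputs the implication $\varphi \rightarrow \varphi^l_P$; the delicate direction there is showing that if this implication is complete then every pointed $l$-model satisfies $\varphi$ --- either because it is bisimilar to $(\M_l,a_l)$, or because, being non-bisimilar to a model of the complete implication, it must falsify the implication and hence satisfy $\varphi$. Your construction $\Theta = \varphi_S^l \lor (\neg q \wedge \neg\varphi)$ replaces this case analysis with a fresh marker variable $q$: the disjunction is uniform (no preprocessing model-checking step), the incompleteness witness when $\varphi$ has a countermodel is immediate (the two models disagree on $q$ at the root, so Proposition~\ref{prp:completeness_is_bisimilarity} applies), and --- a small bonus --- your reduction also covers the case $P(\varphi) = \emptyset$, which the paper's proof explicitly assumes away, since $S = P(\varphi)\cup\{q\}$ is never empty. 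The trade-off is that you enlarge the variable set, whereas the paper's reduction keeps the output formula over exactly $P(\varphi)$, which fits its presentation of results parameterized by whether $P$ is empty; also note that your argument leans on the concrete shape of the formulas $\varphi_S^l$ from Section~\ref{sec:completeness} (that $\bigwedge S$ is a conjunct, forcing $q$ at the root), but this is the same reliance the paper itself has, and it could in any case be removed: since $\varphi_S^l$ is complete and $q \in S$, it proves either $q$ or $\neg q$, and you may flip the polarity of $q$ in the second disjunct accordingly. Both proofs ultimately rest on the same ingredients: the satisfiable complete formulas of Section~\ref{sec:completeness}, Theorem~\ref{thm:completeness}, and the bisimulation characterization of completeness.
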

\begin{proof}
	To prove the theorem we present a reduction from $l$-provability to the completeness problem for $l$. From a formula $\varphi$, the reduction constructs in polynomial time a formula $\varphi_c$, such that $\varphi$ is provable if and only is $\varphi_c$ is complete.
	For each logic $l$ with nontrivial  completeness and finite set of propositional variables $P$, in Section \ref{sec:completeness} we provided a complete formula $\varphi^l_P$.  This formula is satisfied in a model of at most two states, which can be generated in time $O(|P|)$. Let $(\M_l,a_l)$ be such a pointed model for $\varphi^l_P$. We assume that $P \neq \emptyset$.
	
	Any pointed model that satisfies $\varphi^l_P$ is bisimilar to $(\M_l,a_l)$. Given a formula $\varphi \in L(P)$, we can determine in linear time if $\M_l,a_l \models \varphi$. Then, there are two cases: 
	\begin{description} 
		\item[$\M_l,a_l \not\models \varphi$,] 
			in which case $\varphi$ is not provable and we set $\varphi_c = \bigwedge P$. 
		\item[$\M_l,a_l \models \varphi$,] so $\neg \varphi \wedge \varphi^l_P$ is not satisfiable, in which case we set $\varphi_c = \varphi \rightarrow \varphi^l_P$.
			We demonstrate that $\varphi$ is provable if and only if $\varphi \rightarrow \varphi^l_P$ is complete. 
			\subitem
			If $\varphi$ is provable, then $\varphi \rightarrow \varphi^l_P$ is equivalent to $\varphi^l_P$, which is complete. 
			\subitem 
			On the other hand, if $\varphi \rightarrow \varphi^l_P$ is complete and $(\M,a)$ is any pointed model, we show that $\M,a  \models \varphi$, implying that if $\varphi \rightarrow \varphi^l_P$ is complete, then $\varphi$ is provable. If $(M,a) \sim_P (M_l,a_l)$, then from our assumptions $\M,a \not \models \neg \varphi$, thus $\M,a  \models  \varphi$. On the other hand, if $(M,a) \not\sim_P (M_l,a_l)$, since $(M_l,a_l) \models \varphi \rightarrow \varphi^l_P$ and $\varphi \rightarrow \varphi^l_P$ is complete, $\M,a \not \models \varphi \rightarrow \varphi^l_P$, therefore $\M,a \models \varphi$. \qedhere
	\end{description}
\end{proof}


Theorem \ref{thm:hardness} applies to more than the modal logics that we have defined in Section \ref{sec:back}.
For Propositional Logic, completeness amounts to the problem of determining whether a formula does not have \emph{two} distinct satisfying assignments, therefore it is \coNP-complete. 
By similar reasoning, completeness for First-order Logic is undecidable, as satisfiability is undecidable.

\subsection{Upper Bounds}


The easiest cases are the logics with axiom $5$. 
Immediately from
Theorem \ref{thm:hardness} and Corollary \ref{cor:completeness_in_NP}:

\begin{proposition}\label{prp:completenessforNI}
	The completeness problem for logic $l+5$ is \co\NP-complete.
\end{proposition}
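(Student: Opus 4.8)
The plan is to prove \coNP-completeness by assembling a matching upper and lower bound, both of which are essentially already available from earlier results. For the upper bound I would simply invoke Corollary \ref{cor:completeness_in_NP}, which places the completeness problem for every logic $l+5$ in \coNP; nothing further is required here.

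For the lower bound, the strategy is to apply the generic hardness reduction of Theorem \ref{thm:hardness}. That theorem has two hypotheses that I must verify. First, $l+5$-provability must be \coNP-hard: this follows directly from Theorem \ref{thm:ladhalp}, which states that $l+5$-provability is \coNP-complete (hence \coNP-hard) for each base logic $l \in \{\k,\t,\d,\kf,\df,\sr\}$, covering exactly the logics with Negative Introspection under consideration. Second, the completeness problem for $l+5$ must be nontrivial: this was established in Section \ref{sec:completeness}, where for $P \neq \emptyset$ the formula $\varphi^{l+5}_P = \bigwedge P \wedge \Diamond \Box \bigwedge P$ was shown to be both satisfiable and complete for $l+5$, so satisfiable complete formulas exist and the problem is nontrivial. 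With both hypotheses in place, Theorem \ref{thm:hardness} yields \coNP-hardness of the completeness problem, and combining this with the \coNP\ upper bound gives \coNP-completeness.

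I do not anticipate a genuine obstacle, since the proposition is a direct corollary of the two cited results; the only point requiring care is the implicit restriction to the nontrivial regime $P \neq \emptyset$. For $P = \emptyset$ the picture changes (the problem lies in \P\ for \kv\ and \kfv\ and is trivial for the remaining logics with $5$, as recorded in Table \ref{table:tableoftriviality}), so the statement is to be read as concerning the nontrivial case, which is precisely the setting in which the nontriviality hypothesis of Theorem \ref{thm:hardness} is met. The uniform construction of $\varphi^{l+5}_P$ guarantees this hypothesis across all the relevant logics at once, so the argument goes through without case analysis.
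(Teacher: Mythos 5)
Your proof is correct and follows exactly the paper's own argument: the upper bound from Corollary \ref{cor:completeness_in_NP} combined with the lower bound from Theorem \ref{thm:hardness}, whose hypotheses (coNP-hardness of $l+5$-provability via Theorem \ref{thm:ladhalp}, and nontriviality via the complete formula $\varphi^{l+5}_P$ from Section \ref{sec:completeness}) you verify explicitly where the paper treats them as immediate. Your remark about the implicit restriction to $P \neq \emptyset$ is a sensible clarification consistent with Table \ref{table:tableoftriviality}.
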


For the logics without axiom $5$, by Theorem \ref{thm:ladhalp}, satisfiability and provability are both \PSPACE-complete. So, completeness is \PSPACE-hard, if it is nontrivial. It remains to show that it is also in \PSPACE.
To this end we present a procedure that decides completeness for a modal formula. We call it the CC Procedure. Parts of this procedure are similar to the tableaux by Fitting \cite{melfittableaux} and Massacci \cite{Massacci2000a} for \ML, in that the procedure explores local views of a tableau branch. For more on  tableaux the reader can see \cite{DAgostino1999}. 
The CC Procedure is a non-deterministic polynomial time algorithm that uses an oracle from \PSPACE. It accepts exactly the incomplete formulas, thus establishing that the completeness problem for these logics is in \PSPACE.
We have treated the case for logics with axiom $5$, and the completeness problem for $\d$ and \t\ is trivial. Therefore, form now on, we fix a logic $l$ that can either be $\k$, or have axiom $4$ and be one of \kf, \df, and \sr. 



\subsubsection*{The CC Procedure for \ML\ $l$ on $\varphi$}
Intuitively, the procedure tries to construct two models for $\varphi$ and at the same time demonstrate that these models are not bisimilar. 
We first give a few definitions that we need to describe the procedure.

For our procedure, \emph{states} are sets of formulas from $\overline{sub}(\varphi)$.
The procedure generates structures that we call \emph{views}.
A view $S$ is a pair $\left(p(S),C(S)\right)$ of a (possibly empty) set $C(S)$ of states, that are called the \emph{children-states} of $S$ and a distinguished state $p(S)$ called the \emph{parent-state} of $S$.
Each view is allowed to have up to $|\varphi|$ children-states.


\begin{definition}
We call a set $s$ of formulas   \emph{$l$-closed} if the following conditions hold:
\begin{itemize}
	\item 
	if $\varphi_1 \wedge \varphi_2 \in  s$, then $\varphi_1, \varphi_2 \in  s$; 
	\item 
	if $\varphi_1 \vee \varphi_2 \in  s$, then $\varphi_1 \in s$ or $\varphi_2 \in  s$; 
	\item 
	if $\Box \psi \in s$ and $l$ has axiom $T$, then $\psi \in s$; 
	\item 
	for every $p \in P$, either $p \in s$ or $\neg p \in s$.
\end{itemize}

We call a view $S$ \emph{$l$-complete} (or \emph{complete} if $l$ is fixed) if the following conditions hold:
\begin{itemize}
	\item the parent-state and every child-state of that view are $l$-closed;
	\item for every $\Diamond \psi \in p(S)$, $\psi \in \bigcup C(S)$;
	\item for every $\Box \psi \in p(S)$, $\psi \in \bigcap C(S)$;
	\item if $l$ has axiom $4$, then for every $\Box \psi \in p(S)$, $\Box\psi \in \bigcap C(S)$;
	\item if $l$ has axiom $D$, then $C(S) \neq \emptyset$.
\end{itemize}

For state $a$,  $th(a) = \bigwedge a$. 
A state $a \subseteq \overline{sub}(\varphi)$ is 
\emph{maximal}
if it is a maximally consistent subset of 
$\overline{sub}(\varphi)$.
A child-state $c$ of a view $S$ is \k-maximal when it is a maximally consistent subset of 
$\overline{sub}_{d}(\varphi)$, where $d = \max \{ md(c') \mid c' \in C(S) \}$.
A view $S$ is \emph{consistent} when every state of $S$ is a consistent set of formulas.
A view $S'$ \emph{completes} view $S$ when: 
	$S'$ is $l$-complete; 
	$p(S) \subseteq p(S')$; 
	for every $a \in C(S)$ there is an $a' \in C(S')$ such that $a \subseteq a'$; 
	and: if $l = \k$, then every $a' \in C(S')$ is $\k$-maximal; if $l$  has axiom $4$, then every $a' \in C(S')$ is maximal.
\end{definition}

A view gives a local view of a model, as long as it is consistent. 
The procedure generates  views and ensures
that they are complete --- so that all relevant information is present in each view --- and consistent --- so that the view  indeed represents parts of a model.\footnote{As the reader may notice, the procedure does not have to generate a full view. Only generating a maximal state $c$ and verifying that it can be a child-state in a view that has $a$ as parent-state suffices.}
If the parent-state can represent two non-bisimilar states of two models (say, $s$ and $t$), then the procedure should be able to provide a child, representing a state accessible from $s$ or $t$ that is not bisimilar to any state accessible from $s$ or $t$, respectively. 
Since the states are (\k-)maximal, two states that are not identical can only be satisfied in non-bisimilar models.
The procedure is given in Table \ref{tab:CCprocedure}.

\begin{table}
	\centering
	\begin{tabular}{|r|p{.7\textwidth}|}
		\hline 
		Initial Conditions:
		&
		Non-deterministically generate maximal states $a$ and $b$ that include $\varphi$; if there are none,  then return ``\texttt{reject}''.\\
		& If $a \neq b$, then return ``\texttt{accept}.''\\
		& Initialize $N$ to $\constant$.
		\\
		%
		Construction:
		&
		Non-deterministically generate a consistent view $S$ that completes $(a,\emptyset)$, having up to $|\varphi|$ children-states. 
		\\
		Condition: &
		If $C(S) = \emptyset$, then return ``\texttt{reject}.''
		\\ &
		If there is a child-state $c \in C(S)$, such that $\not \vdash_l th(a) \rightarrow \Diamond th(c)$, then return ``\texttt{accept}.''
		\\
		Next: &
		Otherwise, non-deterministically pick a child $c \in C(S)$ and set $a:=c$.
		\\ & 
		If $N>0$, then set $N:=N-1$ and continue from ``Construction.''
		\\
		If $N = 0$, & then return ``\texttt{reject}''.\\
		\hline
	\end{tabular}
	\caption{The CC Procedure on $\varphi$ for logic $l \in \{\k, \kf, \df, \sr \}$.}
	\label{tab:CCprocedure}
\end{table}

This section's main theorem is Theorem \ref{thm:main} and informs us our procedure can determine the completeness of formula $\varphi$ in at most $\constant$ steps. That the completeness problem for logics without axiom 5 is in \PSPACE\ is a direct corollary.

\begin{theorem}\label{thm:main}
	The CC Procedure accepts $\varphi$ if and only if $\varphi$ is incomplete.
\end{theorem}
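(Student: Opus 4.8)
The plan is to prove the two directions of the biconditional separately, establishing that the CC Procedure accepts $\varphi$ exactly when $\varphi$ is incomplete. By Proposition \ref{prp:completeness_is_bisimilarity}, incompleteness of $\varphi$ is equivalent to the existence of two non-bisimilar pointed $l$-models both satisfying $\varphi$. So the whole argument is really about showing that an accepting run of the procedure corresponds to a witnessing pair of non-bisimilar models, and conversely.

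\paragraph*{Soundness (if the procedure accepts, then $\varphi$ is incomplete).}
First I would analyze the two ways the procedure can accept. The easy case is acceptance in the Initial Conditions: if two distinct maximal states $a \neq b$ both contain $\varphi$, then since states are maximally consistent subsets of $\overline{sub}(\varphi)$, there is some $\psi \in \overline{sub}(\varphi)$ with $\psi \in a$ and $\neg\psi \in b$ (or vice versa); using Theorem \ref{thm:completeness} I would build two $l$-models for $\varphi$ disagreeing on $\psi$, which are then non-bisimilar modulo $P$ by the Hennessy--Milner Theorem (Theorem \ref{thm:eqbisim}), so $\varphi$ is incomplete. The harder acceptance case is in the Condition step, when a child-state $c$ satisfies $\not\vdash_l th(a) \rightarrow \Diamond th(c)$. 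Here the intuition in the text is that the parent-state $a$, reached along a consistent chain of views from the initial $a$, represents states in two models that are forced to agree up to this point but can be made to differ: because $th(a) \to \Diamond th(c)$ is \emph{not} provable, by Theorem \ref{thm:completeness} there is an $l$-model satisfying $th(a) \wedge \neg\Diamond th(c)$, while consistency of $c$ as a child-state guarantees an $l$-model where a $c$-successor does exist. I would splice these local disagreements back along the chain of parent-states (using that each view \emph{completes} the previous parent, so the $\Box$/$\Diamond$/axiom-$4$ conditions propagate the theory correctly) to obtain two global $l$-models both satisfying $\varphi$ at the root but non-bisimilar, witnessed exactly at the states corresponding to $a$ where one has a $c$-successor and the other does not.

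\paragraph*{Completeness (if $\varphi$ is incomplete, then the procedure accepts).}
Conversely, suppose $\varphi$ is incomplete, so by Proposition \ref{prp:completeness_is_bisimilarity} there are $l$-models $(\M,a)$ and $(\M',a')$ with $\M,a \models \varphi$, $\M',a' \models \varphi$, and $(\M,a) \not\sim_P (\M',a')$. I would use a standard characterization of failure of bisimilarity on finite models: non-bisimilarity modulo $P$ is witnessed at some finite bisimulation-depth, i.e. there is a shortest ``distinguishing path'' of length bounded by the number of states. The strategy is to show the procedure has a non-deterministic run that tracks a pair of corresponding states along this path. At the root, either the two models already differ on a propositional variable or an immediate subformula --- giving $a \neq b$ and acceptance in the Initial Conditions --- or they agree locally, in which case the run descends one modal level by generating the appropriate completing view and selecting the child $c$ leading toward the distinguishing witness. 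I would argue that at the first place the two models genuinely diverge, the corresponding child-state $c$ satisfies $\not\vdash_l th(a)\rightarrow\Diamond th(c)$ (one model has such a successor, the other does not, and this is exactly captured by non-provability via Theorem \ref{thm:completeness}), so the procedure accepts at the Condition step. The key quantitative point is that the distinguishing depth is bounded by $\constant$, so the counter $N$ does not expire before acceptance; I would justify this bound from the finite model sizes guaranteed by the finite frame property together with the number of distinct maximal states over $\overline{sub}(\varphi)$.

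\paragraph*{Main obstacle.}
The principal difficulty is the gluing argument in the soundness direction: turning the local, state-by-state non-provability witness $\not\vdash_l th(a)\rightarrow\Diamond th(c)$ --- together with the chain of completing views produced during the run --- into two genuine \emph{global} $l$-models for $\varphi$ that are non-bisimilar, while respecting the frame conditions of $l$ (seriality for $D$, reflexivity for $T$, transitivity for $4$). The axiom-$4$ case is the most delicate, since transitivity forces $\Box$-formulas to persist to all successors (this is precisely why the completing condition demands $\Box\psi \in \bigcap C(S)$ and full maximality of children when $l$ has axiom $4$), and I would need to check that the model built by unwinding the view chain really lands in $F(l)$ and that bisimilarity genuinely fails at the intended state rather than being accidentally restored elsewhere. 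Establishing the tight step bound $\constant$ and verifying that $\k$-maximality (rather than full maximality) suffices in the $\k$ case to keep child-states small enough for the polynomial-time/PSPACE-oracle budget is the second nontrivial ingredient.
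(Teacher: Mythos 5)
There is a genuine gap, and it is in the direction you treat most casually: incomplete $\Rightarrow$ accepts. Your key claim --- that ``at the first place the two models genuinely diverge, the corresponding child-state $c$ satisfies $\not\vdash_l th(a)\rightarrow\Diamond th(c)$'' --- is false. States of the procedure are (maximally consistent) subsets of $\overline{sub}(\varphi)$ only, so $th(c)$ is a bounded approximation of a model state's theory: two model states can be non-bisimilar and still satisfy the same maximal state $c$. Hence the other model may perfectly well have a successor satisfying $th(c)$ (just not one bisimilar to the intended witness), in which case $th(a)\wedge\neg\Diamond th(c)$ need not be satisfiable and the Condition step does not fire. This is exactly why the paper's proof does not stop at a ``first divergence'': it maintains the invariant that the chosen child is satisfied both in the witness $f(s,t)$ \emph{and} in a non-bisimilar counterpart, and proves termination by contradiction --- if \emph{every} choice-function yields a non-terminating run, the counterpart relation together with bisimilarity assembles into a bisimulation between $(A,w)$ and $(B,w')$. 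Your quantitative argument is also wrong: the distinguishing depth of two non-bisimilar finite models is bounded by their \emph{sizes}, and models of $\varphi$ can be exponentially large, so this gives nothing like $\constant$. The paper obtains the bound entirely differently: for \k\ the modal depth of parent-states strictly decreases each round, and for the axiom-$4$ logics it uses the monotonicity of the sets $D(a)$ and $B(a)$ together with Lemma \ref{lem:three_max_states} to show a minimal-length choice-function terminates within $\constant$ steps; transitivity is what makes this necessary, since with axiom $4$ modal depth need not decrease.

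In the other direction (accepts $\Rightarrow$ incomplete) your plan --- build two non-bisimilar global models by splicing the local witness back along the chain of views --- is a legitimate contrapositive of what the paper does (the paper instead proves, via Lemma \ref{lem:completetocomplete}, that a complete parent-state forces complete child-states, so by Lemma \ref{lem:consistentchildren} the test $\not\vdash_l th(a)\rightarrow\Diamond th(c)$ can never fire on a complete input). But the splicing you defer as the ``main obstacle'' is precisely the mathematical content of the paper's Lemmas \ref{lem:consistentchildren} and \ref{lem:consistentchildren4}: to keep the disagreement from being ``accidentally restored'' one needs, e.g.\ for \k, a completing view in which every \emph{other} child contains $\neg th(c)$ so that $\Box\bigvee_{c'}th(c')$ forces $\neg\Diamond(th(c)\wedge\psi)$; and for the transitive logics one needs the bespoke model surgery of Lemma \ref{lem:consistentchildren4}, where the glued relation must be re-closed transitively without destroying the theories of existing states. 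Naming this obstacle is not overcoming it. As it stands, one direction of your proposal rests on a false claim and an invalid step bound, and the other leaves its central construction unproved.
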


\begin{proof}[Sketch of Proof]
	We sketch the proof for the case where $l = \k$. The full proof is given in Section \ref{sec:proof}.
	We first assume that $\varphi$ is incomplete.
	Therefore, there are two non-bisimilar pointed models $(\M_1,a_0)$ and $(\M_2,b_0)$ that satisfy $\varphi$ --- w.l.o.g. we assume that $\M_1 = \M_2$ with accessibility relation $R$, and we omit them from the description.
	The procedure can ensure that at every step $i$ in the resulting play, there are non-bisimilar states $a_i,b_i$, that satisfy the chosen child-state --- this is the invariance condition.
%
	This condition is already true at the beginning and it can be maintained recursively by the procedure on each step by choosing a state $a_i$ accessible from $a_{i-1}$ (or, similarly, state $b_i$ accessible from $b_{i-1}$) that is non-bisimilar to all states accessible from $b_{i-1}$ --- $a_i$ always exists due to the invariance. Then, the procedure can produce and choose a maximal child-state $a$ that is satisfied in $a_i$. Then, either $\Diamond th(a)$ is not derivable from the parent-state, or it is satisfied in some $b_i$ that is non-bisimilar to $a_i$, and the invariance is maintained.
	Notice that the procedure can reduce the modal depth of the states every 
	time the ``Construction'' step takes place and when the modal depth of the parent-state is 0, it can just construct child-state $P$ that is not derivable from that parent-state.

	We now assume that $\varphi$ is complete. 
	Then, simply notice that if a view is $\k$-complete and consistent, then it represents a state in a model and its accessible states. If a parent-state is complete, then for every child $a$ of a consistent $\k$-complete view, $\Diamond th(a)$ can be derived from the parent-states. 
	For more details, see Section \ref{sec:proof}.
\end{proof}

\begin{corollary}
	The completeness problem for \k, \kf, \df, and \sr\ is \PSPACE-complete.
\end{corollary}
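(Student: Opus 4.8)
The plan is to use Proposition~\ref{prp:completeness_is_bisimilarity}: $\varphi$ is incomplete exactly when it has two non-bisimilar (modulo $P = P(\varphi)$) pointed $l$-models, so I want to show that the CC Procedure has an accepting run precisely in that case. I first note that the procedure is a nondeterministic polynomial-time computation: it runs for at most $N = \constant$ iterations, each guessing a view with at most $|\varphi|$ children over the polynomially many formulas of $\overline{sub}(\varphi)$, and its only expensive test, ``$\not\vdash_l th(a)\rightarrow\Diamond th(c)$'', is an $l$-provability query answerable by a \PSPACE\ oracle by Theorem~\ref{thm:ladhalp}. Hence an accepting-run characterization yields membership in $\NP^{\PSPACE} = \PSPACE$, and together with the lower bound of Theorem~\ref{thm:hardness} the corollary follows.

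For the direction ``incomplete $\Rightarrow$ accepts'' I would take two non-bisimilar pointed $l$-models of $\varphi$ and, passing to their disjoint union, assume they are states $a_0 \not\sim_P b_0$ of a single model $\M=(W,R,V)$ with $\M,a_0\models\varphi$ and $\M,b_0\models\varphi$. The guiding invariant is that at each iteration the current parent-state $a$ is the maximal ($l$-)type over $\overline{sub}(\varphi)$ of two states $a_i\not\sim_P b_i$ of $\M$ that both satisfy $th(a)$. At the start, either the types of $a_0$ and $b_0$ differ --- then they disagree on some $\psi\in\overline{sub}(\varphi)\subseteq L(P)$ and the Initial Conditions step accepts --- or they coincide and the invariant holds. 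To maintain it, I use that $a_i\not\sim_P b_i$, so (up to swapping the roles of $a$ and $b$) some successor $a_{i+1}$ of $a_i$ is not bisimilar to any successor of $b_i$; letting $c$ be its type and placing it in the guessed view, either $\not\vdash_l th(a)\rightarrow\Diamond th(c)$ and the procedure accepts, or $th(a)\rightarrow\Diamond th(c)$ is valid, so $b_i$ (which satisfies $th(a)$) has a successor $b_{i+1}\models th(c)$; as $b_{i+1}$ is a successor of $b_i$ it is non-bisimilar to $a_{i+1}$, and both realize $c$, re-establishing the invariant. For $l=\k$ the child-states need only track $\overline{sub}_{d}(\varphi)$ for $d$ one less than the parent's modal depth, so the modal depth of the parent-state strictly decreases; once it reaches $0$, $th(a)$ is a satisfiable propositional formula with a successor-free $\k$-model, whence $\not\vdash_\k th(a)\rightarrow\Diamond th(c)$ for any child $c$ and acceptance is forced. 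Since this happens within $md(\varphi)+1\le\constant$ iterations, the counter $N$ never expires first.

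For ``complete $\Rightarrow$ rejects'' I would show that neither acceptance site can fire. Since every maximal state is $l$-consistent, hence $l$-satisfiable by Theorem~\ref{thm:completeness}, completeness of $\varphi$ forces all maximal states containing $\varphi$ to yield bisimilar models, so by Theorem~\ref{thm:eqbisim} they agree on all of $\overline{sub}(\varphi)$ and there is a unique such state: the Initial Conditions step cannot produce $a\neq b$. For the Condition step I maintain the invariant that every parent-state $a$ reached by the procedure has $th(a)$ complete; this holds initially because $th(a)\equiv\varphi$. A consistent $l$-complete view completing $(a,\emptyset)$ is realizable, i.e.\ it witnesses a model of $th(a)$ with a successor satisfying $th(c)$; as $th(a)$ is complete, all its models are bisimilar and hence all satisfy the $L(P)$-formula $\Diamond th(c)$, so $\vdash_l th(a)\rightarrow\Diamond th(c)$ and the acceptance clause fails for every child. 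Descending into such a $c$ preserves the invariant: if some $l$-model satisfied $th(c)$ but were not bisimilar to the type-$c$ successor inside the unique model of $\varphi$, I could graft it in place of that successor without changing the $\overline{sub}(\varphi)$-type of the root, producing a second, non-bisimilar model of $\varphi$ and contradicting completeness; hence $th(c)$ is again complete.

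I expect the transitive logics \kf, \df, and \sr\ to be the main obstacle. There, axiom $4$ makes the view definition retain the full set $\overline{sub}(\varphi)$ in each child and propagate every $\Box\psi$ of the parent into all children, so the modal-depth potential used for $\k$ no longer decreases and termination within $\constant$ steps must be argued differently --- e.g.\ through the monotonicity of the set of $\Box$-formulas forced along a transitive path, which can strictly grow at most $|\varphi|$ times, together with a separate analysis of what happens once this set stabilizes inside a cluster. The same transitivity complicates the grafting argument of the complete direction, since replacing a successor's subtree must respect transitivity (and, for \df, seriality) of the resulting frame. Making these surgeries frame-preserving, and verifying the ``realizability'' of consistent $l$-complete views used above, are the steps I expect to require the most care.
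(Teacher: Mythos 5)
Your overall architecture is the paper's own: hardness from Theorem~\ref{thm:hardness}, membership by showing the CC Procedure of Table~\ref{tab:CCprocedure} is a nondeterministic polynomial-time computation with a \PSPACE\ oracle that accepts exactly the incomplete formulas (i.e., Theorem~\ref{thm:main}), and the accounting $\NP^{\PSPACE}=\coNP^{\PSPACE}=\PSPACE$. Your argument is essentially correct and matches the paper for $l=\k$: the invariant of maintaining a non-bisimilar pair realizing the current parent-state, plus the strictly decreasing modal depth, forces acceptance within $md(\varphi)+1 \leq \constant$ steps. The genuine gap is in the ``incomplete $\Rightarrow$ accepts'' direction for \kf, \df, and \sr, which you flag but do not close. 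Your invariant only shows the procedure can always \emph{continue} (find a child $c$ realized at a successor non-bisimilar to all successors on the other side); without the depth potential, nothing in your argument shows that the acceptance test $\not\vdash_l th(a)\rightarrow\Diamond th(c)$ ever fires, and if it never fires the counter $N$ expires and the run \emph{rejects}. The paper closes this with a two-stage argument: first, if \emph{no} choice function leads to acceptance, the relation $\R = \sim \cup\, Z$ (where $Z$ links states to their ``counterparts'' along non-terminating runs) is shown to be a bisimulation between the two models, contradicting their non-bisimilarity; second, the accepting run is compressed to length $\constant$ using the monotone evolution of the sets $B(a)$ and $D(a)$ of box and diamond formulas together with Lemma~\ref{lem:three_max_states}, which handles the delicate situation (arising only for \sr) where two picked states share the same $(B,D)$-signature. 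Your one-sentence sketch (``the box-set can grow at most $|\varphi|$ times, plus an analysis inside a cluster'') names the right potential, but the cluster analysis is precisely where Lemma~\ref{lem:three_max_states} does the work, and you do not supply it.

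Your ``complete $\Rightarrow$ rejects'' direction likewise rests on two claims you defer rather than prove: (i) realizability of consistent $l$-complete views, which is the paper's Lemma~\ref{lem:consistentchildren} and for the transitive logics needs the transitive (or reflexive-transitive) closure construction to stay inside the frame class; and (ii) your ``grafting'' step for propagating completeness from parent-state to child-state, which is the paper's Lemmas~\ref{lem:consistentchildren4} and~\ref{lem:completetocomplete}. Under axiom $4$ the grafting is not a local surgery: the replacement state's successors interact transitively with the rest of the model, and making the result frame-preserving (and serial for \df, reflexive for \sr) is exactly the content of the canonical-model-style construction in Lemma~\ref{lem:consistentchildren4}, together with the observation that the problematic underivable formula can be taken of the form $\Box\chi$. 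Since three of the four logics in the statement are exactly the ones whose treatment you postpone, the proposal as it stands establishes the corollary only for \k; the remaining cases are the substance of the paper's Section~\ref{sec:proof}.
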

\begin{proof}
	\PSPACE-hardness is a consequence of Theorem \ref{thm:hardness}. 
	The CC Procedure is a non-deterministic polynomial-time algorithm with an oracle from \PSPACE. 
	Each condition that it needs to check is either a closure condition or a condition for the consistency or provability of formulas of polynomial size with respect to $|\varphi|$; therefore, they can be verified either directly or with an oracle from \PSPACE.
	Thus, the completeness problem for these logics is in 
	$ \coNP^\PSPACE = \PSPACE$.
\end{proof}

\section{The Proof of Theorem \ref{thm:main}}
\label{sec:proof}

We prove that the CC Procedure has a way to accept $\varphi$ if and only if $\varphi$ is satisfied in two non-bisimilar models. By Theorem \ref{thm:eqbisim}, the theorem follows.

\paragraph{We first assume that there are two non-bisimilar pointed models $(A,w)$ and $(B,w')$, such that $A,w \models \varphi$ and $B, w' \models \varphi$. We prove that the CC Process accepts $\varphi$ in $\constant$ steps.} 
We call these models the underlying models; the states of the underlying models are called model states to distinguish them from states that the process uses.
%
Let $A = (W^A,R^A,V^A)$ and $B = (W^B,R^B,V^B)$; we can assume that $W^A \cap W^B = \emptyset$.
Let $f: W^A \times W^B \to W^A \cup W^B$ be a partial function that maps every pair $(s,t)$ of non-bisimilar pairs to a model state $c$ accessible from $s$ or $t$ that is non-bisimilar to every state accessible from $t$ or $s$, respectively.
We call $f$ a choice-function.
We can see that the procedure can maintain that the maximal state it generates each time is satisfied in two non-bisimilar states $s,t$, one from $A$ and the other from $B$, respectively: at the beginning these are $w$ and $w'$. At every step, the procedure can pick a child $c$ that is satisfied in $f(s,t)$. If $\not \vdash_l th(a) \rightarrow \Diamond th(c)$, then the procedure terminates and accepts the input. Otherwise, $c$ is satisfied in $f(s,t)$ and in another state that is non-bisimilar to $f(s,t)$. Let that other state be called a counterpart of $f(s,t)$.

If $l = \k$, then at every step, the procedure can reduce the modal depth of $a$, and therefore, after at most $|\varphi|$ steps, the procedure can simply choose $P = P(\varphi)$ as a state. Since $\Diamond \bigwedge P$ is not derivable from any consistent set of modal depth $0$, the procedure can terminate and accept the input. We now assume that $l \neq \k$.

We demonstrate that if $\varphi$ is incomplete, then the CC Procedure will accept $\varphi$ after a finite number of steps. 
As we have seen above, the procedure, given non-bisimilar pointed models $(A,a)$ and $(B,b)$ of $\varphi$, always has a child to play according to $f$.
For convenience, we can assume that models $A$ and $B$ have no cycles, so the choice-function never repeats  a choice during a process run.
If for every choice of $f$, the process does not terminate, then
we show that $(A,w) \sim (B,w')$, reaching a contradiction. Let $\R = \sim \cup Z$, where $\sim$ is the bisimilarity relation between the states of $A$ and the states of $B$, and $x Z y$ when for some choice-function, there is an infinite execution of the procedure, in which $y$ is a counterpart of $x$, or $x$ a counterpart of $y$.
If $x \R y$, either 
	$(A,x) \sim (B,y)$, so $V^A_P(x) = V^B_P(y)$, or 
	$x Z y$, so, again, $V^A_P(x) = V^B_P(y)$, since $x$ and $y$ satisfy he same maximal state.
If $x \R y$ and $x R^A x'$, then if $(A,x) \sim (B,y)$, immediately there is some $y R^B y'$ so that $(A,x') \sim (B,y')$; 
if $x$ is a counterpart of $y$ or $y$ is a counterpart of $x$ during a non-terminating run, then
for every 
$x'$ accessible from $x$ (the case is symmetric for a $y'$ accessible from $y$), either $x'$ is bisimilar to some $y'$ accessible from $y$, or 
we can alter the choice-function $f$ that the procedure uses so that
$x' = f(x,y)$. 
Since for that altered $f$, the procedure does not terminate, $x'$ has a counterpart as well. 
Therefore, the bisimulation conditions are satisfied and $\R$ is a bisimulation.
If for every choice-function, the procedure never terminates, then $(A,w) \sim (B,w')$, and we have reached a contradiction.
Therefore, there is 
	a choice-function $f$ that ensures the procedure terminates after a finite number of steps. We call that number of steps the length of choice-function $f$.

For 
every state $a$,
let 
$$D(a) = \{  \Diamond \psi \in 
a 
\} $$ 
and 
$$B(a) = \{ \Box \psi \in  a  \}.$$
Then, 
$0 \geq |D(a)| \leq k_1$ and $0 \leq |B(a)| \leq k_2$, where $0 \leq k_1 + k_2 \leq |\varphi| - 1$. 
Notice that according to the definition of $f$ above, as the 
process runs,
$D(a)$ decreases and $B(a)$ increases --- though, not necessarily strictly. 

\begin{lemma}\label{lem:three_max_states}
	Let $l \in \{\kf, \df, \sr \}$ and let $a,b,c$ be maximal states. If $B(a) = B(b)$, $D(a) = D(b)$, $\vdash th(a) \rightarrow_l \Diamond th(c)$, and $\not \vdash_l th(b) \rightarrow \Diamond th(c)$, then $c = a \neq b$ and $l = \sr$.
\end{lemma}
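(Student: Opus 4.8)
The plan is to argue semantically, using Theorem~\ref{thm:completeness} to replace each derivability claim $\vdash_l th(x) \to \Diamond th(y)$ by the (un)satisfiability of $th(x) \wedge \Box\neg th(y)$ over $l$-frames. The crucial observation is that for a maximal state the theory $th(x)$ splits into a \emph{propositional part} (its literals) and a \emph{modal part} $B(x) \cup D(x)$, and that the box and diamond formulas are exactly what constrains and forces the successors of $x$. Since $B(a)=B(b)$ and $D(a)=D(b)$, the states $a$ and $b$ impose identical constraints on successors, so the only way the two hypotheses can differ is through a frame condition that lets the propositional part of a state feed back into its own successors --- namely reflexivity.

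First I would treat $l \in \{\kf,\df\}$ and show that $\vdash_l th(a)\to\Diamond th(c)$ and $\not\vdash_l th(b)\to\Diamond th(c)$ cannot coexist, thereby excluding these logics and forcing $l = \sr$. The key step is a root-replacement construction: given any $l$-model $(M,s)$ with $M,s \models th(b)\wedge\Box\neg th(c)$, I attach a fresh root $s'$ whose literals are those of $a$ and whose successor set is exactly that of $s$. Because $\kf$- and $\df$-frames need not be reflexive, no self-loop at $s'$ is forced (transitivity cannot create a loop at a fresh node, since nothing points into $s'$, and seriality is met through the inherited successors), so the enlarged frame stays transitive and serial. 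As $B(a)=B(b)$ and $D(a)=D(b)$, the truth value at $s'$ of every box- and diamond-subformula agrees with the corresponding value at $s$, and together with maximality (which supplies the needed closure of $a$ under subformulas) an induction gives $M,s'\models th(a)$; moreover every successor of $s'$ still satisfies $\neg th(c)$, so $M,s'\models\Box\neg th(c)$. Hence $th(a)\wedge\Box\neg th(c)$ is satisfiable, i.e. $\not\vdash_l th(a)\to\Diamond th(c)$, contradicting the hypothesis. By the symmetry of the construction in $a$ and $b$, derivability of $th(\cdot)\to\Diamond th(c)$ depends only on the common modal part, ruling out $\kf$ and $\df$; therefore $l=\sr$.

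Then I would handle $l=\sr$ and extract $c=a$. Here reflexivity forces the self-loop $s' R s'$, so $s'$ is one of its own successors, and the same construction now yields $M,s'\models\Box\neg th(c)$ only if $s'\models\neg th(c)$. Since two distinct maximal states have mutually inconsistent theories, $M,s'\models th(a)$ precludes $M,s'\models th(c)$ precisely when $a\neq c$. Thus if $c\neq a$ the construction again produces an $\sr$-model of $th(a)\wedge\Box\neg th(c)$, contradicting $\vdash_{\sr} th(a)\to\Diamond th(c)$; hence $c=a$. Finally, reflexivity gives $\vdash_{\sr} th(a)\to\Diamond th(a)$, whereas the hypothesis now reads $\not\vdash_{\sr} th(b)\to\Diamond th(a)$; if $a=b$ these two statements would clash, so $a\neq b$, completing the conclusion $c=a\neq b$ and $l=\sr$.

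The main obstacle I expect is the careful bookkeeping in the root-replacement step: I must verify simultaneously that swapping only the valuation at the new root (while copying $s$'s successor set) preserves the relevant frame condition of $l$ and leaves the truth value of every box- and diamond-subformula unchanged, so that $M,s'\models th(a)$ genuinely follows from $B(a)=B(b)$, $D(a)=D(b)$, and maximality. The reflexive case is where the argument actually bites, since it is exactly the $\sr$ self-loop that breaks the $a$/$b$ symmetry exploited in the non-reflexive case and pins $c$ down to $a$.
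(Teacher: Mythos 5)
Your proposal is correct and takes essentially the same approach as the paper's proof: the same fresh-root construction (a new state carrying $a$'s literals and exactly $s$'s successors) rules out $\kf$ and $\df$, and the self-loop forced by reflexivity in \sr\ pins down $c=a$, with maximality plus $B(a)=B(b)$, $D(a)=D(b)$ driving the same truth-preservation induction. The only cosmetic difference is that you phrase the \sr\ case contrapositively (if $c\neq a$ then $s'\models\Box\neg th(c)$), where the paper argues directly that the $\Diamond th(c)$-witness must be $s'$ itself.
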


\begin{proof}
	If $\not \vdash_l th(b) \rightarrow \Diamond th(c)$, then $b \cup \{\Box \neg th(c) \}$ is consistent, so satisfiable in a pointed model $(\M,s)$.
	Let $\M'$ be the result of adding a new state $s'$ to $\M$ with the same accessible states as $s$, but satisfying exactly the propositional variables in $a$. This is an $l$-model, unless $l = \sr$, and it is the case that $\M',s' \models \Box \neg th(c)$; furthermore, by straightforward induction on $\psi$, for every $\psi \in \overline{sub}(\varphi)$, $\M',s' \models \psi$ iff $\psi \in a$. 
	Therefore, $\M',s' \models th(a)$ and $\M',s' \not\models \Diamond th(c)$, so $l = \sr$.
	By making $s'$ accessible from itself, we get an \sr-model $\M''$.
	For every formula $\psi \in \overline{sub}(\varphi)$, $\M'',s' \models \psi$ iff $\psi \in a$: literals and boolean cases are immediate; if $\Diamond \psi$ or $\neg \Box \psi$ is in $a$, it is also in $b$, and $\M'',s \models th(b)$; if $\Box \psi \in a$, then $\psi \in a$ so by I.H. $\M'',s' \models \psi$, and $\M'',s \models \Box \psi$, so $\psi$ holds at every accessible state from $s'$ ---  and similarly for $\neg \Diamond \psi$. Therefore, $\M'',s' \models th(a)$, so $\M'',s' \models \Diamond th(c)$. Since the only state accessible from $s'$ and not from $s$ is $s'$, $\M'',s' \models th(c)$, and therefore, $a=c$.
\end{proof}

We can safely assume that the procedure never repeats the same choice of child-state --- otherwise, it could continue from the second repetition and shorten its run.
If during an execution, the CC Procedure picks states $a$, and in a following step, a state $b$, so that $B(a) = B(b)$ and $D(a) =  D(b)$, and immediately after $b$ the procedure picks child-state $c$, we claim that either the procedure could pick $c$ right after $a$ without affecting its run, or $a$ and $b$ are consecutive picked states and after picking $c$, the procedure terminates. 
Since $c$ can be a child-state for a view that has $b$ as parent-state, it satisfies all necessary closure conditions for $l$-complete views, so it can appear as a child-state for a view that has $a$ as parent-state. 
If $\not\vdash_l th(a) \rightarrow \Diamond th(c)$, then the procedure can pick $c$ right after $a$ and terminate immediately; 
if $\vdash_l th(a) \rightarrow \Diamond th(c)$, but $\not\vdash_l th(b) \rightarrow \Diamond th(c)$, then the procedure terminates at $c$ and, by Lemma \ref{lem:three_max_states}, $l = \sr$ and $a = c$.
If $a$ and $b$ are not consecutive states, then there is a  maximal state $a'$ picked after $a$ and before $b$, so that $B(a') = B(b)$ and $D(a') =  D(b)$.
Similarly to the above, $a' = c$, and therefore, $a = a'$ --- so, the procedure repeated the same child-state choice.
Therefore, a minimal-length choice function can ensure that the CC Procedure
terminates after  $|\varphi| + 2$ steps.

\paragraph{On the other hand, we prove that if $\varphi$ is complete, then the CC Procedure can never accept $\varphi$.} For this, we use the following two lemmata:

\begin{lemma}\label{lem:consistentchildren}
	If a view $S$ is consistent and complete
	and $C(S) \neq \emptyset$, 
	then
	\begin{itemize}
		\item if $l$ does not have axiom $4$ ($l = \k$), then the following formula is consistent: 
		\[th(p(S)) \wedge \bigwedge_{c\in C(S)} \Diamond th(c) \wedge \Box \bigvee_{c\in C(S)} th(c); \]
		\item if $l$ has axiom $4$ ($l \in \{\kf, \df, \sr\}$), then the following formula is consistent:
		\[th(p(S)) \wedge \bigwedge_{c\in C(S)} \Diamond th(c). \] 
	\end{itemize}
\end{lemma}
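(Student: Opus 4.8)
The plan is to prove Lemma~\ref{lem:consistentchildren} by explicitly building a witnessing pointed model from the view $S$ and invoking soundness (Theorem~\ref{thm:completeness}) to conclude consistency. Since $S$ is consistent, each state $s$ of $S$ (the parent-state and every child-state) is a consistent subset of $\overline{sub}(\varphi)$, hence by Theorem~\ref{thm:completeness} there is a pointed model $(\M_s, r_s)$ for $l$ with $\M_s, r_s \models th(s)$; we may assume these models have pairwise disjoint state-sets. First I would assemble a single model $\M$ by taking a fresh root $w$, declaring the valuation at $w$ to agree with the literals in $p(S)$, and attaching copies of the child-models: for each $c \in C(S)$, add an edge from $w$ to the root $r_c$ of $\M_c$. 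The aim is to make $w$ satisfy $th(p(S))$ together with the required $\Diamond$/$\Box$ formulas.

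The heart of the argument is then a semantic verification at $w$. Each $\Diamond th(c)$ holds because $w$ reaches $r_c$ and $\M_c, r_c \models th(c)$. In the $l=\k$ case, the extra conjunct $\Box \bigvee_{c\in C(S)} th(c)$ holds because the only successors of $w$ are the attached roots $r_c$, each of which satisfies some $th(c)$, so the disjunction holds at every successor. The more delicate part is showing $\M, w \models th(p(S))$, i.e. that every formula in the parent-state is true at $w$. Literals and propositional combinations follow from the valuation choice and $l$-closure. For $\Box\psi \in p(S)$, completeness of $S$ gives $\psi \in \bigcap C(S)$, so $\psi \in c$ for every child, and the roots $r_c$ satisfy $th(c)$ hence $\psi$; thus $\Box\psi$ holds at $w$. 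For $\Diamond\psi \in p(S)$, completeness gives $\psi \in \bigcup C(S)$, so $\psi$ is witnessed at some $r_c$. The $\Box$ direction is what forces the root's successors to be \emph{exactly} the child-roots, which is why no spurious edges may be added at $w$.

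The $l \in \{\kf,\df,\sr\}$ case requires extra care with the frame conditions. For transitivity (axiom $4$), I would take the transitive closure of the accessibility relation after gluing, and here the clause ``if $l$ has axiom $4$, then $\Box\psi \in \bigcap C(S)$'' becomes essential: because each child-state already contains $\Box\psi$ whenever $p(S)$ does, and each $\M_c$ is a $4$-model, the formula $\Box\psi$ propagates correctly down the transitively-closed reachable states, so $\M,w \models \Box\psi$ survives the closure. For $\df$ I must ensure seriality, which is guaranteed since $C(S) \neq \emptyset$ gives $w$ a successor and the child-models are already serial; for $\sr$ I additionally reflexivize. This is also why the $\Box\bigvee th(c)$ conjunct is \emph{dropped} in the axiom-$4$ case: transitivity makes $w$ reach states strictly beyond the immediate child-roots, so one can no longer guarantee every successor satisfies one of the $th(c)$.

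The main obstacle I anticipate is verifying $\M,w \models th(p(S))$ cleanly under the frame-closure operations: the naive gluing gives the correct successors of $w$, but taking transitive (or reflexive-transitive) closure introduces new reachable states, and I must confirm by induction on formula structure that every $\Box$- and $\Diamond$-formula in $p(S)$ is still correctly evaluated. The $4$-closure clause in the definition of $l$-completeness is precisely the invariant that makes this induction go through, and pinning down that induction carefully—especially the interaction between the $\Box\psi \in \bigcap C(S)$ condition and the transitively-reachable states—is where the real work lies. Once the model is verified, consistency of each displayed formula follows immediately from soundness.
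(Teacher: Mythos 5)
Your proposal is correct and takes essentially the same approach as the paper's own proof: both construct a witnessing model by gluing the child-states' satisfying models onto a fresh root whose valuation is read off the parent-state's literals, take the transitive (or reflexive--transitive) closure as required by the frame conditions, verify $th(p(S))$ at the root by induction on formulas using the view's closure conditions (in particular the axiom-$4$ clause $\Box\psi \in \bigcap C(S)$), and conclude consistency by soundness. The only cosmetic difference is that the paper uses the parent-state $p(S)$ itself as the root rather than a fresh state $w$.
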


\begin{proof}
	For $c \in C(S)$, let $\M_c = (W_c,R_c,V_c)$ and $a_c \in W_c$, such that $\M_c,a_c \models th(c)$; we assume $p(S) \notin W_c$. Then let $\M = (W,R,V)$, where 
	\begin{align*}
	W \ =& \ \{p(S) \} \cup \bigcup_{c\in C(S)} W_c,
	&R' \ =& \ \{(p(S),a_c) \mid c \in C(S) \} \cup \bigcup_{c\in C(S)} R_c,
	\end{align*}
	$R$ is the transitive (if $l$ has  axiom $4$ and not $T$) or reflexive and transitive (if $l$ has both axioms) closure of $R'$, or just $R'$ (if $l$ has neither axiom) --- if $C(S) = \emptyset$, then the lemma is true immediately for $l \neq \k$, so we can assume that $R'$ is serial when $l = \d$ ---
	and
	\[ V(p(S)) = \{ p \in P \mid p \in p(S) \}\]
	and for $b \in W_c$, $ V(b) =  V_c (b).$
	Now,  $\M,a_c \models th(c)$, since it is not hard to see that 
	$(\M,a_c) \sim (\M_c,a_c)$.
	By straightforward induction on $\psi$, we can see that for all  $\psi \in p(S)$,
	$\M,p(S) \models \psi$, from which we can
	conclude that 
	\begin{align*}
	\M,p(S) \models& th(p(S)) \wedge \bigwedge_{c\in C} \Diamond th(c) \wedge \Box \bigvee_{c\in C} th(c),  &&\text{ if $l = \k$; or} \\
	\M,p(S) \models& th(p(S)) \wedge \bigwedge_{c\in C} \Diamond th(c),  &&\text{ if $l$ has axiom 4.}  \tag*{\qedhere} 
	\end{align*}
\end{proof}

\begin{lemma}\label{lem:consistentchildren4}
	Let
		$s$ be a consistent, and complete state, and for $l \neq \k$, also a maximal state; 
		$d$ a maximal state; and
	$\psi$ a formula.
If 
\begin{itemize} 
	\item $\vdash_l th(s) \rightarrow \Diamond th(d)$,
	\item 
	$th(d)$ is not equivalent to $th(s)$, and 
	\item 
	$d\cup \{\Box \psi\}$ is consistent,
	\end{itemize}
	then 
	$th(s) \wedge \Box (\neg th(d)\vee \Box\psi)$ is consistent.
\end{lemma}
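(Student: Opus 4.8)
The plan is to argue semantically. By Theorem \ref{thm:completeness} it suffices to exhibit a finite $l$-model $(\M,r)$ with $\M,r\models th(s)\wedge\Box(\neg th(d)\vee\Box\psi)$; equivalently, a model of $th(s)$ in which every accessible state satisfying $th(d)$ also satisfies $\Box\psi$. The hypothesis $\vdash_l th(s)\to\Diamond th(d)$ guarantees that such $d$-states actually occur, and $d\cup\{\Box\psi\}$ being consistent will supply the material needed to repair them.

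I would start from two finite $l$-models: $(\M_0,r)$ with $\M_0,r\models th(s)$, and $(\mathcal{N},v)$ with $\mathcal{N},v\models th(d)\wedge\Box\psi$. The structural fact I would record first is that, since $d$ is a maximal subset of $\overline{sub}(\varphi)$ and $v\models th(d)$, the point $v$ decides every formula of $\overline{sub}(\varphi)$ exactly as $d$ does, while additionally carrying $\Box\psi$. For the root I need that $r$ is never forced to be an accessible $d$-state: for $l\neq\k$ the state $s$ is maximal, so $th(d)\not\equiv th(s)$ gives $s\neq d$ and $\not\vdash_l th(s)\to th(d)$, letting me choose $(\M_0,r)$ with $r\models th(s)\wedge\neg th(d)$; for $l=\k$ I take $(\M_0,r)$ tree-like, so $r$ is not accessible from itself and its possibly satisfying $th(d)$ is harmless.

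The construction is a grafting surgery: in $\M_0$, for every state $w$ accessible from $r$ with $w\models th(d)$, I delete the outgoing structure of $w$ and attach at $w$ a fresh copy of the part of $\mathcal{N}$ reachable from $v$, identifying $w$ with $v$ and keeping $w$'s valuation (which already agrees with $d$, hence with $v$, on $P$); finally I close the accessibility relation as $l$ requires (nothing for $\k$, transitive closure for $\kf$ and $\df$, reflexive–transitive closure for $\sr$, seriality being inherited from the copies). Each repaired $w$ then satisfies $\Box\psi$. For the logics with axiom $4$ this even handles the new $d$-states inside the grafted copies for free: in a transitive model $v\models\Box\psi$ entails that every state reachable from $v$ again satisfies $\Box\psi$, so all accessible $d$-states of the final model carry $\Box\psi$, exactly as $\Box(\neg th(d)\vee\Box\psi)$ demands at $r$; for $\k$ only the direct successors are constrained and are repaired directly.

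The crux, and the step I expect to be hardest, is the invariance claim that the surgery changes no $\overline{sub}(\varphi)$-truth at any surviving state, so that $r\models th(s)$ is preserved. I would prove it by induction on formula structure, the delicate case being $\Box\beta\in\overline{sub}(\varphi)$ evaluated at a state $x$ that reaches a repaired $d$-state $w$. Here maximality of $d$ and axiom $4$ cooperate: if $x\models\Box\beta$ then $w\models\Box\beta$ (using $\Box\beta\to\Box\Box\beta$), so $\Box\beta\in d$, whence $v\models\Box\beta$ and every grafted descendant of $w$ satisfies $\beta$; dually, if $x\not\models\Box\beta$ with all original $\neg\beta$-witnesses lying beyond $d$-states, then $\Box\beta\notin d$, so the copy again produces a $\neg\beta$-witness. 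Thus the frozen $d$-states act as sealed boundaries whose $\overline{sub}(\varphi)$-theory is pinned down by $d$, and the transitive-closure edges added from ancestors of $w$ into the copy respect every $\Box$-demand for the same reason. This yields that the grafted model and $\M_0$ agree on $\overline{sub}(\varphi)$ at every old state, so $r\models th(s)$, while every accessible $d$-state now satisfies $\Box\psi$, giving the desired consistency.
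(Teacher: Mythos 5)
Your proof is correct, but it takes a genuinely different route from the paper's, even though both rest on the same two pillars: because $d$ is a maximal state, $th(d)$ pins down the entire $\overline{sub}(\varphi)$-theory of any state satisfying it, so the submodel behind such a state can be exchanged for a model of $th(d)\wedge\Box\psi$ without disturbing $\overline{sub}(\varphi)$-truths elsewhere; and axiom $4$ propagates the relevant $\Box$-constraints across the graft. The difference lies in where the surgery happens. The paper (for $l\neq\k$; for $l=\k$ it simply invokes Lemma~\ref{lem:consistentchildren}) first builds a canonical-style model whose states are $s$ together with all maximal states $s'$ such that $\vdash_l th(s)\rightarrow\Diamond th(s')$, related by the usual canonical condition, and proves a truth lemma for it; in that model, maximality makes $d$ the \emph{unique} state compatible with $th(d)$, so the surgery is a single local operation --- the edges entering $d$ are redirected to the distinguished point $x$ of a model of $th(d)\wedge\Box\psi$ --- followed by one invariance induction. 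You instead graft directly into an arbitrary finite model of $th(s)$, which obliges you to repair every reachable $th(d)$-state simultaneously and to check separately that the root itself is never repaired, which you do correctly: via $th(d)\not\equiv th(s)$ and maximality of $s$ when $l\neq\k$, and via a tree-shaped model when $l=\k$. Your version is more elementary (no canonical construction, no appeal to Lemma~\ref{lem:consistentchildren}, and $\k$ handled uniformly), while the paper's pays for the canonical model with an extra truth lemma but gets a cleaner invariance step in return, since only one state is modified and membership equals truth in its base model. One wrinkle you should repair: identifying $w$ with $v$ while keeping $w$'s valuation transfers truth from $\mathcal{N}$ only for formulas over $P$, yet $\psi$ in the statement is an arbitrary formula; give the repaired roots $v$'s full valuation instead (it agrees with $w$'s on $P$, so nothing else in your invariance argument changes). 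The paper sidesteps this issue by letting the graft root literally be the state $x$ of the auxiliary model rather than an identification of two states.
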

\begin{proof}
	If $l = \k$, then we can use Lemma \ref{lem:consistentchildren}; otherwise,
	we can use a similar construction as for Lemma \ref{lem:consistentchildren}.
	Let $$C = \{ 
		s' \subseteq \overline{sub}(\varphi) \mid\ 
		s' \text{ is maximal and }
		\vdash_l th(s) \rightarrow \Diamond th(s') 
		\};$$
	Notice that, due to maximality, if for some $a \in C$, $th(a) \wedge th(d)$ are consistent, then $a = d$.	
	Let $\M = (W,R,V)$, where: $W = \{s\} \cup C$; for $a,b \in W$, $a R b$ iff for every $\Box \chi \in a$, $\chi, \Box \chi \in b$; and $V(a) = P\cap a$.
	$\M$ is an $l$-model: $R$ is transitive, since if $\Box \chi \in a$ and $a R b$, also $\Box \chi \in b$; if $l$ has axiom $D$, then $R$ is serial, because for $a \in W$, $\{ \chi, \Box \chi \mid \Box\chi \in a \}$ is consistent; and if $l$ has axiom $T$, then $R$ is reflexive, since for $a$ to be maximally consistent, if $\Box \chi \in a$, then $\chi \in a$.
	
	By induction on $\chi$, for every $\chi \in \overline{sub}(\varphi)$ and $a \in W$, $\chi \in a$ iff $\M,a \models \chi$: constants, literals and boolean connectives are immediate; if $\Box \chi \in a$, then for every $a R b$, $\chi \in b$, so $\M,a \models \Box \chi$; if $\Diamond \chi \in a$, then  $v = \{ \chi', \Box \chi' \mid \Box\chi' \in a \} \cup \{\chi\}$ is consistent and $\vdash_l th(a) \rightarrow \Diamond th(v)$, so $\vdash_l th(s) \rightarrow \Diamond th(v)$ (by axiom 4), therefore there is some $b \supseteq v$ in $C$ and $a R b$.
	
	Let $D,x \models th(d) \wedge \Box \psi$, where $D = (W_d,R_d,V_d)$ is an $l$-model;
	let $\M' = (W\cup W_d,R',V')$, where $V'(a) = V(a)$ if $a \in W$ and $V'(a) = V_d(a)$ otherwise, and $R'$ is the transitive closure of
	\[
	\{(a,b) \in R \mid a = d \text{ or } b \neq d \} \cup 
	\{ (a,x) \mid a \neq d \text{ and } a R d \} \cup R_d. 
	\] 
	Then, by induction on the formulas, for every $\chi \in \overline{sub}(\varphi)$ and $a \in W$, $\M',a \models \chi$ iff $\M,a \models \chi$: constants, literals, boolean connectives are immediate; if $\chi = \Box \chi'$, then if not $a R d$, nothing changed and if $a R d$, $\chi, \chi' \in d$, so 
	$D,x \models \chi,\chi'$, meaning that if $a R y$, then $\M',y \models \chi'$; the case for $\chi = \Diamond \chi'$ is more straightforward.  
	Therefore, $\M',s \models th(s) \wedge \Box(\neg th(d) \vee \Box \psi)$. 
\end{proof}

\begin{lemma}\label{lem:completetocomplete}
	For a consistent view $S$ that completes itself, for every child $c \in C(S)$, 
	if $th(p(S))$ is complete, then so is $th(c)$.
\end{lemma}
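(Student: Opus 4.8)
The plan is to argue contrapositively, transferring a failure of completeness at the child back up to the parent. By Proposition \ref{prp:completeness_is_bisimilarity} and Theorem \ref{thm:eqbisim} it suffices to assume that $th(c)$ is incomplete and contradict the completeness of $th(p(S))$. First I dispose of a degenerate case: if $th(c)$ is provably equivalent to $th(p(S))$, then $th(c)$ inherits completeness from $th(p(S))$, since derivability of $th(c)\rightarrow\psi$ is invariant under provable equivalence. So from now on assume $th(c)$ is not provably equivalent to $th(p(S))$, and that $th(c)$ has two pointed models $(\M_1,x_1)$ and $(\M_2,x_2)$ that are not bisimilar modulo $P$. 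Because $th(p(S))$ is complete and $p(S)$ is $l$-closed (hence decides every $p\in P$), it is provably equivalent to $th(\hat c)$ for a maximal state $\hat c$, namely the set of $\overline{sub}(\varphi)$-formulas it entails; this $\hat c$ will supply the maximality that Lemma \ref{lem:consistentchildren4} demands of its first argument.

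The first key step is to force the distinguishing formula into a usable shape. By Theorem \ref{thm:eqbisim} the two models differ on some $\xi\in L(P)$. Every formula of $L(P)$ is a Boolean combination of $P$-literals and formulas of the form $\Box\theta$ or $\Diamond\theta$ with $\theta\in L(P)$; since $x_1$ and $x_2$ both realise the maximal (for $l=\k$, $\k$-maximal) state $c$, they agree on all $P$-literals, so they must already disagree on one such modal constituent. As $\Diamond\theta$ and $\Box\neg\theta$ are interdefinable, I may take this constituent to be of the form $\Box\theta$, relabelling so that $\M_1,x_1\models\Box\theta$ while $\M_2,x_2\models\Diamond\neg\theta$. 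In particular $c\cup\{\Box\theta\}$ is consistent.

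Next I manufacture a formula that $th(p(S))$ fails to decide. Since the view $S$ is consistent and complete with $c\in C(S)$, Lemma \ref{lem:consistentchildren} gives that $th(p(S))\wedge\Diamond th(c)$ is consistent; as $\Diamond th(c)\in L(P)$ and $th(p(S))$ is complete, this upgrades to $\vdash_l th(p(S))\rightarrow\Diamond th(c)$. On the \emph{universal} side, I apply Lemma \ref{lem:consistentchildren4} with $s=\hat c$, $d=c$, and the formula $\psi:=\theta$: its hypotheses hold ($\vdash_l th(\hat c)\rightarrow\Diamond th(c)$ since $th(\hat c)$ and $th(p(S))$ are provably equivalent, $th(c)$ not provably equivalent to $th(\hat c)$, and $c\cup\{\Box\theta\}$ consistent), so $th(p(S))\wedge\Box(\neg th(c)\vee\Box\theta)$ is consistent. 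On the \emph{existential} side, grafting $(\M_2,x_2)$ as the $c$-successor in the construction of Lemma \ref{lem:consistentchildren} yields that $th(p(S))\wedge\Diamond(th(c)\wedge\Diamond\neg\theta)$ is consistent. Writing $\alpha=\Diamond(th(c)\wedge\Diamond\neg\theta)$, its negation is exactly $\Box(\neg th(c)\vee\Box\theta)$, so both $th(p(S))\wedge\alpha$ and $th(p(S))\wedge\neg\alpha$ are satisfiable with $\alpha\in L(P)$. Hence $th(p(S))$ decides neither $\alpha$ nor $\neg\alpha$, contradicting its completeness; therefore $th(c)$ is complete.

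I expect two delicate points. The main obstacle is matching the shape required by Lemma \ref{lem:consistentchildren4}, which only controls $c$-successors through a premise of the form $\Box\theta$; this is precisely why the reduction in the second paragraph of the Hennessy--Milner distinguishing formula to a top-level $\Box$-constituent is essential — a raw $\Diamond$-difference on the wrong side would leave the universal construction unavailable. The second point is frame-condition bookkeeping: the grafting for the existential side, and the model built inside Lemma \ref{lem:consistentchildren4}, must remain genuine $l$-models (taking the transitive, or reflexive--transitive, closure when $l$ has axiom $4$, respectively $T$), and one must check that the modified $c$-successor still carries $th(p(S))$ at the root, which follows from the view-completeness conditions $\psi\in\bigcap C(S)$ and, for axiom $4$, $\Box\psi\in\bigcap C(S)$ whenever $\Box\psi\in p(S)$. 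The case $l=\k$ is handled analogously, using the $\k$-version of Lemma \ref{lem:consistentchildren} and the fact that $\k$-maximal states decide all formulas up to the relevant modal depth, with the whole argument confined to the bounded depth that the procedure maintains.
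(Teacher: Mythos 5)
Your proof is correct and takes essentially the same route as the paper's: reduce the child's undecided formula to a boxed constituent $\Box\theta$, use the construction of Lemma \ref{lem:consistentchildren} to get consistency of $th(p(S)) \wedge \Diamond(th(c)\wedge\Diamond\neg\theta)$, use Lemma \ref{lem:consistentchildren4} to get consistency of its negation $th(p(S))\wedge\Box(\neg th(c)\vee\Box\theta)$, and contradict the completeness of $th(p(S))$, treating $\k$ separately via the $\Box\bigvee$ conjunct of Lemma \ref{lem:consistentchildren}. The differences are presentational rather than structural: you obtain the boxed constituent semantically via the Hennessy-Milner theorem where the paper argues syntactically from maximality, and you are in fact more explicit than the paper about the side conditions of Lemma \ref{lem:consistentchildren4} (supplying maximality through $\hat c$ and upgrading consistency of $th(p(S))\wedge\Diamond th(c)$ to derivability).
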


\begin{proof}
	If $th(p(S))$ is complete and $th(c)$ is not, then $\not \vdash_l th(c) \rightarrow th(p(S))$ and there is some $\psi$ such that $th(c) \not \vdash \psi $ and $th(c) \not \vdash \neg\psi $. 
	Since $c$ is (\k-)maximal, for every other $b \in C(S)$, $b \cup \{ \neg th(c) \}$ is consistent.
	Therefore 
	there is a
	a  consistent view $S_1$ that completes $S$ and has a child-state $c_1 \supseteq \{\psi\} \cup c $ and  
	a 
	consistent view $S_2$ that completes $S$, which has
	a child-state $c_2  \supseteq \{ \neg \psi\} \cup c $ and $\neg th(c)$ is an element of every other child-state of $S_2$. 
	By Lemma \ref{lem:consistentchildren}, 
	if $l = \k$, then 
	$th(p(S)) \wedge \Diamond (th(c) \wedge \psi)$ is consistent and so is $th(p(S)) \wedge \neg \Diamond (th(c) \wedge \psi)$, which is a contradiction.
	
	If $l$ has axiom 4, then 
	for every other child (or maximally consistent subset of $\overline{sub}(\varphi)$) $c'$, $\vdash_l th(c') \rightarrow \neg th(c)$;
	$c$ is maximal and 
	we can assume $\psi = \Box \chi$ --- due to maximality, $c$ determines the truth-values of literals, and if it determined the truth-value of all boxed formulas, it would determine the truth-value of all formulas and would be complete.
	By Lemma
	\ref{lem:consistentchildren}, $th(p(S)) \wedge \Diamond (th(c) \wedge \neg \psi)$ is consistent. 
	By Lemma \ref{lem:consistentchildren4}, so is 
	$th(p(S)) \wedge \Box (\neg th(c) \vee \psi)$, again a contradiction.
\end{proof}	

By Lemma \ref{lem:completetocomplete}, all parent-states that appear during a run are complete. If at some point, the process picks a child-state $c$ and $a$ is the parent-state, then by Lemma \ref{lem:consistentchildren}, $th(a) \wedge \Diamond th(c)$ is consistent; since $a$ is complete, $\vdash_l th(a) \rightarrow \Diamond th(c)$. Therefore, there is no way for the procedure to accept if the input formula is complete.
\qed

\section{Variations and Other Considerations}
\label{sec:conclusions} 

There are several variations one may consider for the completeness problem.
One may define the  completeness of a formula in a different way, or consider a different logic, depending on the intended application. 
One may also wonder whether we could attempt a solution to the completeness problem by using Fine's normal forms
\cite{fine1975normal}.


%

\subsection{Satisfiable and Complete Formulas}

It may be more appropriate, depending on the case, to check whether a formula is \emph{satisfiable and complete}. In this case, if the modal logic does not have axiom 5, we can simply alter the CC Process so that it accepts right away if the formula is not satisfiable. Therefore, the problem remains in \PSPACE; for \PSPACE-completeness, notice that the reduction for Theorem \ref{thm:hardness} constructs  satisfiable formulas.

For logics with  axiom $5$ (and plain Propositional Logic), the language 
of satisfiable and complete formulas is \US-complete, where a language $U$ is in \US\ 
when
there is a nondeterministic Turing machine $T$, so that for every instance $x$ of $U$, $x \in U$ if and only if $T$ has exactly one accepting computation path for $x$\footnote{We note that \US\ is different from \UP; for \UP, if $T$ has an accepting path for $x$, then it is \emph{guaranteed} that it has a unique accepting path for $x$.} \cite{Blass198280}: Unique\SAT\ is a complete problem for \US\ and a special case of this variation of the completeness problem.

\subsection{Completeness with Respect to a Model}

A natural variation of the completeness problem would be to consider completeness of a formula over a satisfying model. That is, the problem would ask: 
given a formula $\varphi$ and pointed model $(\M,s)$, such that $\M,s \models \varphi$, is the formula complete? 
For this variation, we are given one of $\varphi$'s pointed models, so it is a reasonable expectation that the problem became easier. Note that in many cases, this problem may even be more natural than the original one, as we are now testing whether the formula completely describes the pointed model (that is, whether the formula is characteristic for the model).

Unfortunately, 
this variation has exactly the same complexity as the original completeness problem. 
We can easily reduce completeness with respect to a model to plain completeness by dropping the model from the input.
On the other hand, the reduction from provability to completeness of Section \ref{sec:complexity} still works in this case, as it can easily be adjusted to additionally provide the satisfying model of the complete formula  $\varphi_P^l$.

\subsection{Completeness and Normal Forms for Modal Logic}

In \cite{fine1975normal}, Fine introduced normal forms for \ML. 
The sets $F_P^d$ are  defined recursively on the depth $d$, which is a nonnegative integer,  and depend on the set of propositional variables  $P$  (we use a variation on the presentation from \cite{moss2007finite}):
\begin{align*}
F_P^0 = & \left\{ \bigwedge_{p\in S} p \wedge \bigwedge_{p\notin S} \neg p \mid S \subseteq P \right\}; \ \text{ and}\\
F_P^{d+1} = & \left\{ \varphi_0 \wedge \bigwedge_{\varphi\in S} \Diamond \varphi \wedge \Box\bigvee_{\varphi\in S}  \varphi  \mid S \subseteq F_P^d, \ \varphi_0 \in F_P^0 \right\}.
\end{align*}
For example, formula $\varphi^\k_P$ from Section \ref{sec:completeness} is a normal form in $F_P^1$.
\begin{theorem}[from \cite{fine1975normal}]\label{thm:fine}
	For every  modal formula $\varphi$ of modal depth at most $d$, if $\varphi$ is consistent for \k, then there is some $S \subseteq F_P^d$, so that $\vdash_\k \varphi \leftrightarrow \bigvee S$.
\end{theorem}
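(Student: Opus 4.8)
The plan is to prove the statement by induction on the modal depth $d$, after first establishing two auxiliary facts about the sets $F_P^d$ that drive the whole argument. The first is a \emph{partition property}: for every $d$ we have $\vdash_\k \bigvee F_P^d$, and for distinct $\chi,\chi'\in F_P^d$ we have $\vdash_\k \neg(\chi\wedge\chi')$, so that in any model each world satisfies exactly one level-$d$ normal form. The second is that every $\chi\in F_P^d$ is \k-consistent. Both are proved by a routine induction on $d$; for consistency one builds a model by attaching, to a root satisfying the $\varphi_0$-part, one witness successor for each member of the chosen $S\subseteq F_P^{d-1}$, using the inductive consistency of those members. For the base case $d=0$ of the main statement, a formula of modal depth $0$ is a Boolean combination of $P$-literals, so by ordinary propositional reasoning (which \k\ contains) it is \k-equivalent to the disjunction of those complete conjunctions in $F_P^0$ whose assignment satisfies it; consistency guarantees this disjunction is nonempty.

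For the inductive step, assume the claim for depth $d$ and let $md(\varphi)\le d+1$. Viewing each maximal modal subformula $\Box\psi$ or $\Diamond\psi$ of $\varphi$ (one not nested inside another modality) as an atom, $\varphi$ is a Boolean combination of $P$-literals and such atoms, where every such $\psi$ has $md(\psi)\le d$. Applying the inductive hypothesis to each $\psi$ gives $S_\psi\subseteq F_P^d$ with $\vdash_\k \psi\leftrightarrow\bigvee S_\psi$ (taking $S_\psi=\emptyset$, i.e.\ $\bot$, when $\psi$ is inconsistent). I would then rewrite the modalities over these normal forms: using that $\Diamond$ distributes over disjunction in \k, $\vdash_\k \Diamond\psi\leftrightarrow\bigvee_{\eta\in S_\psi}\Diamond\eta$; and using the partition property, $\vdash_\k \Box\psi\leftrightarrow\bigwedge_{\eta\in F_P^d\setminus S_\psi}\neg\Diamond\eta$, since a world satisfies $\Box\bigvee S_\psi$ exactly when no successor satisfies a level-$d$ normal form outside $S_\psi$. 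After these substitutions $\varphi$ is \k-equivalent to a Boolean combination of $P$-literals and the finitely many atoms $\{\Diamond\eta\mid \eta\in F_P^d\}$.

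Finally I would put this Boolean combination into \emph{full} disjunctive normal form over those atoms: write $\varphi$ as a disjunction of conjunctions, each of which decides every $p\in P$ and every $\Diamond\eta$. A single such complete conjunction fixes a truth assignment to $P$, giving some $\varphi_0\in F_P^0$, together with a set $S\subseteq F_P^d$ of the $\eta$ whose diamond is asserted positively; its negative part $\bigwedge_{\eta\in F_P^d\setminus S}\neg\Diamond\eta$ collapses, again by the partition property, to $\Box\bigvee S$, so the whole conjunction equals $\varphi_0\wedge\bigwedge_{\eta\in S}\Diamond\eta\wedge\Box\bigvee S$, which is precisely a member of $F_P^{d+1}$. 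Collecting the normal forms arising from the satisfying complete conjunctions as the desired set $S$ yields $\vdash_\k\varphi\leftrightarrow\bigvee S$, and consistency of $\varphi$ forces $S\neq\emptyset$. The main obstacle is the inductive step's reduction of $\Box$ and $\Diamond$ to the diamond-atoms as \emph{provable} \k-equivalences; this is exactly where the partition property and the distribution laws for the modalities must be invoked carefully, and where setting up the auxiliary lemmas in advance pays off, leaving the concluding disjunctive-normal-form manipulation as routine propositional bookkeeping.
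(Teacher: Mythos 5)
The paper does not actually prove Theorem \ref{thm:fine}: it is imported verbatim from Fine \cite{fine1975normal} (with the presentation of the normal forms following \cite{moss2007finite}), so there is no in-paper proof to compare yours against. Judged on its own, your proof is correct, and it is essentially the classical argument for this theorem: induction on the depth bound, with the two level-$d$ facts you isolate --- exhaustiveness $\vdash_\k \bigvee F_P^d$ and mutual exclusivity of distinct members of $F_P^d$ (the latter is exactly the property the paper quotes from Fine after the theorem statement) --- doing the real work. The key steps all go through in \k: $\vdash_\k \Diamond\psi \leftrightarrow \bigvee_{\eta\in S_\psi}\Diamond\eta$ by distribution of $\Diamond$ over disjunction, $\vdash_\k \Box\psi \leftrightarrow \bigwedge_{\eta\in F_P^d\setminus S_\psi}\neg\Diamond\eta$ by exhaustiveness plus exclusivity (syntactically: from $\vdash_\k \eta \rightarrow \neg\bigvee S_\psi$ and $\vdash_\k \Box\bigvee F_P^d$), and the collapse of $\bigwedge_{\eta\notin S}\neg\Diamond\eta$ to $\Box\bigvee S$ inside a complete conjunction is the same manipulation once more. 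Two minor remarks. First, your auxiliary fact that every normal form is \k-consistent is never actually used in the argument; it is needed only if one wants uniqueness or nonredundancy of the representation, not for existence. Second, the consistency hypothesis on $\varphi$ is likewise inessential to the statement as given: an inconsistent $\varphi$ is handled by $S=\emptyset$ (since $\bigvee\emptyset=\bot$), and consistency serves only to guarantee $S\neq\emptyset$, which the theorem does not demand. Neither point is an error; both are harmless slack.
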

Furthermore, as Fine \cite{fine1975normal} demonstrated, normal forms are mutually exclusive: no two distinct normal forms from $F_P^d$ can be true at the same state of a model.
Normal forms are not necessarily complete by our definition (for example, consider $p \wedge \Diamond p \wedge \Box p $ for $P = \{p\}$), but, at least  for \k, it is not hard to distinguish the complete ones; by induction on $d$, 
	$\varphi \in F_P^d$ is complete for \k\ if and only if $md(\varphi) < d$. 
Therefore, for \k, the satisfiable and complete formulas are exactly the ones that are equivalent to such a complete normal form.
However, we cannot use this observation to test formulas for completeness by guessing a complete normal form and verifying that it is equivalent to our input formula, 
as normal forms can be of very large size: 
$|F_P^0| = 2^{|P|}$; $|F_P^{d+1}| = |P| \cdot 2^{|F_P^{d}|}$; and if $\psi \in F_P^d$, $|\psi|$ can be up to $|P| + 2|F_P^{d-1}|$. 
We would be guaranteed a normal form of reasonable (that is, polynomial w.r.to $|\varphi|$) size to compare to $\varphi$ only if $\varphi$ uses a small (logarithmic with respect to $|\varphi|$) number of variables and its modal depth is very small compared to $|\varphi|$ (that is, $md(\varphi) = O(\log^*(|\varphi|))$).

\subsection{Completeness up to Depth}

Fine's normal forms \cite{fine1975normal} can inspire us to consider a relaxation of the definition of completeness. We call a formula $\varphi$ \emph{complete up to its depth} for a logic $l$ exactly when for every formula $\psi \in L(P(\varphi))$ of modal depth at most $md(\varphi)$, either $\vdash_l \varphi \rightarrow \psi$ or $\vdash_l \varphi \rightarrow \neg \psi$.
Immediately from Theorem \ref{thm:fine}:
\begin{lemma}\label{lem:norm_forms_are_complete_up_to_depth}
		All normal forms are complete up to their depths.
\end{lemma}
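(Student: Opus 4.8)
The plan is to obtain the lemma directly from Fine's normal form theorem (Theorem~\ref{thm:fine}) together with the mutual exclusivity of normal forms, via a short case analysis. Fix a normal form $\chi \in F_P^d$. I read ``its depth'' as the index $d$; since every formula in $F_P^d$ has modal depth at most $d$, we have $md(\chi) \le d$, so establishing completeness up to depth $d$ is the stronger statement and yields completeness up to $md(\chi)$ as a special case. It therefore suffices to take an arbitrary $\psi \in L(P)$ with $md(\psi) \le d$ and show that $\vdash_\k \chi \rightarrow \psi$ or $\vdash_\k \chi \rightarrow \neg\psi$.

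First I would dispose of the degenerate case: if $\psi$ is \k-inconsistent, then $\vdash_\k \neg\psi$ and hence $\vdash_\k \chi \rightarrow \neg\psi$. Otherwise $\psi$ is consistent and of modal depth at most $d$, so Theorem~\ref{thm:fine} supplies a set $S \subseteq F_P^d$ with $\vdash_\k \psi \leftrightarrow \bigvee S$. The entire argument now turns on whether $\chi$ itself occurs among the disjuncts of $\bigvee S$.

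In the first case, $\chi \in S$, so $\chi$ is one of the disjuncts and $\vdash_\k \chi \rightarrow \bigvee S$, whence $\vdash_\k \chi \rightarrow \psi$. In the second case, $\chi \notin S$, so every $\chi' \in S$ is a normal form in $F_P^d$ distinct from $\chi$. By the mutual exclusivity of distinct normal forms (recalled just after Theorem~\ref{thm:fine}), $\chi \wedge \chi'$ is unsatisfiable, and by the completeness direction of Theorem~\ref{thm:completeness} this unsatisfiability upgrades to the derivability $\vdash_\k \chi \rightarrow \neg\chi'$. Conjoining over all $\chi' \in S$ gives $\vdash_\k \chi \rightarrow \neg\bigvee S$, which together with the equivalence above yields $\vdash_\k \chi \rightarrow \neg\psi$.

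I expect no genuine obstacle, as the paper's ``Immediately from Theorem~\ref{thm:fine}'' suggests: the only points needing a moment's care are (i) the mismatch between the index $d$ of $F_P^d$ and the actual modal depth $md(\chi)$, handled by the observation $md(\chi) \le d$ so that the depth-$d$ claim subsumes the depth-$md(\chi)$ claim; and (ii) the passage from the \emph{semantic} mutual exclusivity of distinct normal forms to the \emph{syntactic} fact $\vdash_\k \chi \rightarrow \neg\chi'$, which is exactly what the completeness half of Theorem~\ref{thm:completeness} delivers.
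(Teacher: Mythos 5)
Your proof is correct and takes essentially the same route as the paper: the paper offers no explicit proof, asserting the lemma follows ``Immediately from Theorem~\ref{thm:fine},'' and your argument---Fine's theorem to write a consistent $\psi$ of depth at most $d$ as $\bigvee S$ with $S \subseteq F_P^d$, then case analysis on whether $\chi \in S$, using mutual exclusivity plus Theorem~\ref{thm:completeness} to get $\vdash_\k \chi \rightarrow \neg\chi'$ for the distinct disjuncts---is precisely that intended argument spelled out. Your handling of the side issues (inconsistent $\psi$, and $md(\chi) \le d$ so that completeness up to the index $d$ subsumes completeness up to $md(\chi)$) is also sound.
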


\begin{lemma}\label{lem:deep_completeness_is_character}
	Formula $\varphi$ is satisfiable and complete up to its depth for logic $l$ if and only if it is equivalent in $l$ to a normal form from $F_P^{md(\varphi)}$.
\end{lemma}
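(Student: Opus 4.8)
The plan is to prove Lemma \ref{lem:deep_completeness_is_character} by establishing both directions, leaning heavily on Fine's Theorem \ref{thm:fine} and the mutual exclusivity of normal forms. Let me set $d = md(\varphi)$ and $P = P(\varphi)$ throughout.

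For the backward direction, suppose $\varphi$ is equivalent in $l$ to some normal form $\nu \in F_P^{d}$. First I would note that satisfiability is immediate, since every normal form is consistent (one can read off a pointed model directly from its recursive structure, matching the construction in $\varphi^\k_P$). For completeness up to depth, I would invoke Lemma \ref{lem:norm_forms_are_complete_up_to_depth}: the normal form $\nu$ is complete up to its depth $d$, and since $\varphi$ is equivalent to $\nu$ and $md(\varphi) = d$, completeness up to depth transfers across the equivalence. Concretely, for any $\psi \in L(P)$ with $md(\psi) \leq d$, we have $\vdash_l \nu \to \psi$ or $\vdash_l \nu \to \neg\psi$, and replacing $\nu$ by the equivalent $\varphi$ gives the conclusion.

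For the forward direction, suppose $\varphi$ is satisfiable and complete up to its depth. By Theorem \ref{thm:fine} (noting that $l$-consistency follows from $\k$-consistency being weaker, or rather that the normal form decomposition can be carried out in the relevant logic), there is some $S \subseteq F_P^{d}$ with $\vdash_l \varphi \leftrightarrow \bigvee S$. The goal is to show $|S| = 1$. Since $\varphi$ is satisfiable, $S \neq \emptyset$. Suppose toward a contradiction that $S$ contains two distinct normal forms $\nu_1, \nu_2$. Because distinct normal forms in $F_P^{d}$ are mutually exclusive and each has modal depth exactly $d$, I would pick a formula $\psi$ of modal depth at most $d$ that $\nu_1$ entails but $\nu_2$ refutes — for instance, $\psi = \nu_1$ itself, which has $md(\psi) \leq d$. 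Then $\varphi \wedge \psi$ is satisfiable (witnessed by a model of $\nu_1$) and $\varphi \wedge \neg\psi$ is satisfiable (witnessed by a model of $\nu_2$, since mutual exclusivity gives $\vdash_l \nu_2 \to \neg\nu_1$). Hence neither $\vdash_l \varphi \to \psi$ nor $\vdash_l \varphi \to \neg\psi$ holds, contradicting completeness up to depth. Therefore $|S| = 1$ and $\varphi$ is equivalent to a single normal form from $F_P^{d}$.

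The main obstacle I anticipate is the precise handling of the logic $l$ as opposed to plain \k. Theorem \ref{thm:fine} is stated for \k, so I would need to argue that for the logics under consideration the normal form decomposition still yields a disjunction of members of $F_P^d$, possibly after quotienting by $l$-provable equivalence among normal forms (in stronger logics some \k-distinct normal forms may become $l$-equivalent or $l$-inconsistent). The cleanest route is to restrict $S$ to the $l$-consistent normal forms and to verify that mutual exclusivity — which is what drives the $|S|=1$ argument — survives in $l$, since $\vdash_\k \nu_1 \to \neg\nu_2$ for distinct $\nu_1,\nu_2 \in F_P^d$ immediately lifts to $\vdash_l$. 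Granting that, the satisfiability witnesses in the forward direction must be taken as $l$-models rather than arbitrary Kripke models, but the recursive model construction underlying normal forms can be adapted to respect the frame conditions of $l$, exactly as in the constructions of Section \ref{sec:completeness}.
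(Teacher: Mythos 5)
Your forward direction is correct, and it takes a somewhat different route from the paper's. Both start from Theorem \ref{thm:fine}, but the paper argues directly: since $\varphi$ is complete up to its depth, for each disjunct $\psi \in S$ either $\vdash_l \varphi \rightarrow \psi$ or $\vdash_l \varphi \rightarrow \neg \psi$; refuting every disjunct would contradict satisfiability (because $\vdash_l \varphi \rightarrow \bigvee S$), so $\vdash_l \varphi \rightarrow \psi$ for some $\psi \in S$, and since $\vdash_l \psi \rightarrow \bigvee S$ and $\bigvee S$ is $l$-equivalent to $\varphi$, the equivalence follows with no appeal to mutual exclusivity at all. Your route instead prunes $S$ to its $l$-consistent members and derives a contradiction from $|S| \geq 2$ using mutual exclusivity; this works, and your handling of the \k-to-$l$ lifting (dropping $l$-inconsistent disjuncts, noting that $\vdash_\k$ lifts to $\vdash_l$) is the right repair. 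The false assertion you make along the way --- that members of $F_P^d$ have modal depth exactly $d$ --- is harmless here, since all you need is $md(\nu_1) \leq d$ for $\nu_1$ to be a legal test formula.

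That same false assertion does, however, open a genuine gap in your backward direction. $F_P^d$ contains normal forms of depth strictly less than $d$: taking $S = \emptyset$ in the recursive definition produces, for instance, $\varphi_0 \wedge \Box\bot \in F_P^2$, which has depth $1$. For such a $\nu$, Lemma \ref{lem:norm_forms_are_complete_up_to_depth} only yields completeness up to depth $md(\nu) < d = md(\varphi)$, so your transfer step does not cover all test formulas of depth up to $md(\varphi)$, which is what the lemma demands. The paper closes exactly this case: a normal form in $F_P^{md(\varphi)}$ either has modal depth $md(\varphi)$, and then Lemma \ref{lem:norm_forms_are_complete_up_to_depth} suffices, or it has smaller depth, in which case it is \emph{fully} complete for \k\ (the paper's earlier observation that $\nu \in F_P^d$ is complete for \k\ if and only if $md(\nu) < d$), hence complete for every extension $l$, hence in particular complete up to depth $md(\varphi)$. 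Separately, your satisfiability argument --- ``every normal form is consistent; read off a pointed model from its structure'' --- is only valid for \k: in any logic with axiom $D$ or $T$, a normal form containing a $\Box\bot$ conjunct is $l$-inconsistent, so that step fails for, e.g., \df\ or \sr. (The paper's own proof is silent on satisfiability in this direction, so you attempted more than it does, but the argument you give does not establish it for general $l$.)
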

\begin{proof}
	From Theorem \ref{thm:fine}, if $\varphi$ is satisfiable, then it is equivalent to  some $\bigvee S$, where
$S \subseteq F_P^{md(\varphi)}$, but if it is also complete up to its depth, then it can derive a the normal form $\psi \in S$; so, $\vdash_l \varphi \rightarrow \psi$, but also $\vdash_l \psi \rightarrow \bigvee S$ and $\bigvee S$ is equivalent to $\varphi$. 
For the other direction, notice that every normal form in $F_P^{md(\varphi)}$ is either complete or has the same modal depth as $\varphi$, so by Lemma\ref{lem:norm_forms_are_complete_up_to_depth}, if $\varphi$ is equivalent to a normal form, inthe first case it is complete and in the second case it is complete up to its depth. 
\end{proof}

Therefore, all modal logics have formulas that are complete up to their depth. 
In fact, for any finite set of propositional variables $P$ and $d \geq 0$, we can define $\varphi_P^d = \bigwedge_{i=0}^d \Box^i \bigwedge P  $, which is equivalent in \t\ and \d\ to a normal form (by induction on $d$). Then, we can use a reduction similar to the one from the proof of Theorem \ref{thm:hardness} to prove that for every modal logic, completeness up to depth is as hard as provability.

\begin{proposition}\label{prp:up_to_depth_lower}
	For any complexity class $C$ and logic $l$,
	if $l$-provability is $C$-hard, then completeness up to depth is $C$-hard.
\end{proposition}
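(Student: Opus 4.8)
The plan is to reproduce the reduction from the proof of Theorem~\ref{thm:hardness}, but to replace the fully complete formula $\varphi^l_P$ with a formula that is only complete up to its depth. The key point is that such a formula exists for \emph{every} logic, including \d\ and \t, whereas fully complete formulas do not; this is what lets the argument reach the logics that Theorem~\ref{thm:hardness} could not. Given an instance $\varphi$ of $l$-provability, set $P = P(\varphi)$ and $d = md(\varphi)$; we may assume $P \neq \emptyset$, since replacing $\varphi$ by $\varphi \wedge (p \vee \neg p)$ preserves provability and modal depth. For $l \in \{\d,\t\}$ I would take $\chi = \varphi_P^d = \bigwedge_{i=0}^d \Box^i \bigwedge P$, which by the remark preceding the proposition is equivalent in $l$ to a normal form in $F_P^d$, hence satisfiable and complete up to its depth $d$ by Lemmata~\ref{lem:norm_forms_are_complete_up_to_depth} and~\ref{lem:deep_completeness_is_character}. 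For the remaining logics I would take $\chi = \varphi^l_P$, the fully complete formula from Section~\ref{sec:completeness}, which is complete up to \emph{any} depth. In every case $\chi$ is satisfiable, $P(\chi) = P$, $md(\chi) \geq d$, and $\chi$ is complete up to the depth $md(\varphi \to \chi)$ of the instance the reduction will produce; guaranteeing this last inequality is exactly where the depth-matched $\varphi_P^d$ is needed for \d\ and \t.

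With $\chi$ in hand I would use its completeness up to depth to decide, in polynomial time, which of $\varphi, \neg\varphi$ is $\chi$-derivable: construct a fixed small model $(\M,s) \models \chi$ (a single reflexive world satisfying $P$ for \d, \t; the two-state model of $\varphi^l_P$ otherwise) and model-check $\varphi$ at $s$. Since $\chi$ is complete up to depth $md(\chi) \geq md(\varphi)$, it derives exactly one of $\varphi, \neg\varphi$, and the model-check detects which. If $\chi \vdash_l \neg\varphi$, then $\varphi$ is not provable (a valid $\varphi$ would have to hold in the satisfiable $\chi$-model, contradicting $\chi \vdash_l \neg\varphi$), and the reduction outputs a fixed formula that is not complete up to its depth, for instance $\Diamond p$: this is satisfied both by a pointed model whose only successor satisfies $p$ and by one with an additional successor satisfying $\neg p$, and these disagree on $\Box p$, a formula of depth $1 = md(\Diamond p)$, with suitable witnesses available in each of our logics. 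Otherwise $\chi \vdash_l \varphi$ and the reduction outputs $\varphi_c = \varphi \rightarrow \chi$, whose modal depth is $d' = md(\varphi \to \chi)$.

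It then remains to verify, in the branch $\chi \vdash_l \varphi$, that $\varphi \rightarrow \chi$ is complete up to its depth if and only if $\varphi$ is provable. For the forward direction, if $\varphi$ is provable then $\varphi \rightarrow \chi \equiv_l \chi$; equivalent formulas have the same derivable consequences and $\chi$ is complete up to depth $d'$ (by the choice of $\chi$), so $\varphi \rightarrow \chi$ is complete up to depth $d' = md(\varphi\to\chi)$, i.e. up to its own depth. For the converse, suppose $\varphi$ is not provable, so $\neg\varphi$ is satisfiable in some $(\M_0,s_0)$, which then satisfies $\varphi \rightarrow \chi$; also fix $(\M_1,s_1) \models \chi$, which satisfies both $\varphi \rightarrow \chi$ and, since $\chi \vdash_l \varphi$, also $\varphi$. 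If $\M_0,s_0 \not\models \chi$, the two pointed models disagree on $\chi$ itself, a formula of depth $md(\chi) \leq d'$, while both satisfy $\varphi \rightarrow \chi$, so $\varphi \rightarrow \chi$ is incomplete up to depth $d'$; if instead $\M_0,s_0 \models \chi$, then $\chi \vdash_l \varphi$ forces $\M_0,s_0 \models \varphi$, contradicting $\M_0,s_0 \models \neg\varphi$. Either way the hypothesis is contradicted, so $\varphi$ is provable. As the construction is a polynomial-time many-one reduction and $C$ contains \coNP\ (so is closed under such reductions), $C$-hardness of $l$-provability transfers to completeness up to depth.

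The main obstacle I expect is the bookkeeping of modal depths rather than any single hard argument: one must ensure the complete-up-to-depth formula $\chi$ is complete up to the depth $md(\varphi \to \chi)$ of the emitted instance, which forces $md(\chi) \geq md(\varphi)$. This is free for the fully complete $\varphi^l_P$, but for \d\ and \t\ it is exactly why the depth-parametrised normal form $\varphi_P^d$ must be used, and checking that $\varphi_P^d$ really is a normal form of depth $d$ for those two logics (the inductive claim stated before the proposition) is the one point that must be pinned down carefully.
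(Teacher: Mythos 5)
Your proposal is correct, and the reduction itself is exactly the paper's: branch on a model check of $\varphi$ at a small model of a target formula $\chi$ (the fully complete $\varphi_P^l$ for logics other than \d\ and \t, the depth-matched $\varphi_P^d$ for \d\ and \t), output a trivially incomplete-up-to-depth formula in the negative branch, and $\varphi \rightarrow \chi$ in the positive one. Where you genuinely diverge is in verifying the backward direction of correctness. The paper argues through Fine's normal forms: if $\varphi \rightarrow \varphi_P^l(d)$ is complete up to its depth, then by Lemma \ref{lem:deep_completeness_is_character} it is equivalent to a single normal form; writing $\varphi$ as $\bigvee S$ via Theorem \ref{thm:fine} and using mutual exclusivity of normal forms, it concludes that either $S = F_P^d$ (so $\varphi$ is valid) or $\varphi$ is equivalent to $\neg \varphi_P^l(d)$, which the model check rules out. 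You instead give a direct semantic argument: if $\varphi$ is unprovable, then a model of $\neg\varphi$ and a model of $\chi$ both satisfy $\varphi \rightarrow \chi$ but disagree on $\chi$ itself --- a formula in $L(P)$ of depth at most $md(\varphi \rightarrow \chi)$ --- so $\varphi \rightarrow \chi$ is not complete up to its depth. This is more elementary (it needs only soundness and completeness of $l$, not Lemma \ref{lem:deep_completeness_is_character} or mutual exclusivity), and it makes the depth bookkeeping explicit where the paper leaves it implicit. One slip worth fixing: your blanket claim $md(\chi) \geq md(\varphi)$ is false for the logics where $\chi = \varphi_P^l$ (e.g., $md(\varphi_P^\k) = 1$ while $md(\varphi)$ can be arbitrary); nothing is lost, though, because in those cases $\chi$ is fully complete and hence, as you yourself note, complete up to \emph{any} depth, which is the only property your argument actually uses.
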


\begin{proof}
	The proof is similar to the one for Theorem \ref{thm:hardness} and is by reduction from $l$-provability. We are given a formula $\varphi \in L(P)$ --- and we assume that $P \neq \emptyset$. For $l \neq \t, \d$,  $\varphi_P^l(d) = \varphi_P^l$ as defined in Section \ref{sec:completeness}; for $l = \t$ or \d, let $\varphi_P^l(d) = \varphi_P^d$ as defined above. We also assume an appropriate $M_l,a_l \models \varphi_P^l(d)$. If $M_l,a_l \not\models \varphi$, let $\varphi_c = \bigwedge P \wedge \Box \top$; otherwise, let $\varphi_c = \varphi \rightarrow \varphi_P^l(d)$.
	For the second case, if $\varphi$ is provable, then $\varphi_c$ is equivalent to $\varphi_P^l(d)$, which is complete up to its depth. If $\varphi_c$ is complete up to its depth, then by Lemma \ref{lem:deep_completeness_is_character}, it is equivalent to a normal form $\psi \in F_P^d$. So, $\psi$ is equivalent to $\varphi_c = \varphi \rightarrow \varphi_P^l(d)$, which is equivalent to $\neg \bigvee S \vee \varphi_P^l(d)$ for some $S \subseteq F_P^d$, by Theorem \ref{thm:fine}. Since normal forms are mutually exclusive, $\bigvee S$ is equivalent to $\neg \bigvee (F_P^d \setminus S)$, so $\psi$ is equivalent to $\bigvee (F_P^d \setminus S) \vee \varphi_P^l(d)$. Therefore, either $S = F_P^d$ and $\psi$ is equivalent to  $\varphi_P^l(d)$, or $F_P^d \setminus S$ is a singleton of a normal form equivalent to $\varphi_P^l(d)$.
	In the first case, $\varphi$ is  provable, because for any model $\M,a$, by Theorem \ref{thm:fine}, $\M,a \models \bigvee F_P^d$, so $\M,a \models \varphi$.
	The second case cannot hold, because it would mean that $\varphi$ is equivalent to $\neg \varphi_P^l(d)$, but $M_l,a_l \models \varphi$.
\end{proof}
We demonstrate that this variation of the completeness problem is in \PSPACE\ when the logic is \k; it seems plausible that one can follow similar approaches that use normal forms for the remaining modal logics.
\begin{proposition}\label{prp:k_up_to_depth}
	A formula $\varphi$ is complete up to its depth for \k\ if and only if $\varphi \wedge \Box^{md(\varphi)+1}\bot$ is complete for \k.
\end{proposition}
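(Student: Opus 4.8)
The plan is to prove both directions of the biconditional by connecting ``complete up to depth $md(\varphi)$'' with genuine completeness of the augmented formula $\varphi \wedge \Box^{md(\varphi)+1}\bot$. The key observation is that conjoining $\Box^{md(\varphi)+1}\bot$ forces every model of the formula to have no states reachable at distance $md(\varphi)+1$ from the point of evaluation; that is, all branches terminate by depth $md(\varphi)$. Thus the models of $\varphi \wedge \Box^{md(\varphi)+1}\bot$ are exactly the models of $\varphi$ truncated at depth $md(\varphi)$, and in such models the only formulas that can distinguish two pointed models are those of modal depth at most $md(\varphi)$ --- anything deeper is trivially forced to agree (a $\Box$ at depth $md(\varphi)+1$ is vacuously true and a corresponding $\Diamond$ vacuously false at every surviving state).

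First I would establish the forward direction: assume $\varphi$ is complete up to its depth. I would argue that $\varphi \wedge \Box^{md(\varphi)+1}\bot$ is complete by taking any $\psi \in L(P(\varphi))$ and showing $\varphi \wedge \Box^{md(\varphi)+1}\bot$ derives $\psi$ or $\neg\psi$. Via Theorem \ref{thm:completeness} and Proposition \ref{prp:completeness_is_bisimilarity}, this reduces to showing that any two pointed models of $\varphi \wedge \Box^{md(\varphi)+1}\bot$ are bisimilar modulo $P$. Since both models satisfy $\Box^{md(\varphi)+1}\bot$, they have no states beyond depth $md(\varphi)$, so their entire bisimulation-relevant structure lives within depth $md(\varphi)$; and since $\varphi$ is complete up to its depth, by Lemma \ref{lem:deep_completeness_is_character} it is equivalent to a normal form in $F_P^{md(\varphi)}$, which pins down the model's behavior exactly up to depth $md(\varphi)$. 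Because the models have nothing below that depth, this determines them up to bisimulation entirely.

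For the converse, I would assume $\varphi \wedge \Box^{md(\varphi)+1}\bot$ is complete and show $\varphi$ is complete up to its depth. Here I take a $\psi$ with $md(\psi) \le md(\varphi)$ and aim to derive $\psi$ or $\neg\psi$ from $\varphi$. The crucial point is that for formulas of modal depth at most $md(\varphi)$, satisfaction in a model is unaffected by truncating the model at depth $md(\varphi)$: whether $\M,a \models \psi$ depends only on the states within distance $md(\varphi)$ of $a$. So if $\varphi \rightarrow \psi$ (equivalently its augmented version deciding $\psi$) holds across all truncated models, it holds across all models of $\varphi$ for this shallow $\psi$. Concretely, from any model $\M,a \models \varphi$ I would pass to its truncation $\M',a$, which satisfies $\varphi \wedge \Box^{md(\varphi)+1}\bot$ and agrees with $\M,a$ on all formulas of depth $\le md(\varphi)$; completeness of the augmented formula then decides $\psi$ uniformly, and this transfers back to $\M,a$.

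The main obstacle I anticipate is making the truncation argument fully rigorous: I must verify carefully that truncating a model at depth $md(\varphi)$ (deleting all states at distance $> md(\varphi)$ from the base point, or equivalently severing the edges leaving depth $md(\varphi)$) preserves satisfaction of every formula of modal depth at most $md(\varphi)$, and that the truncated model indeed satisfies $\Box^{md(\varphi)+1}\bot$ and remains a \k-model (which is automatic, as \k\ imposes no frame conditions --- this is exactly why the proposition is stated for \k\ and the authors only conjecture it for other logics). The standard induction on formula structure, tracking the available modal depth budget as one descends through the model, is the tool here; it closely parallels the depth-bounded induction already used in the proof of Lemma \ref{lem:d_t_completeness}. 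I would also need to confirm that $\varphi \wedge \Box^{md(\varphi)+1}\bot$ is satisfiable exactly when $\varphi$ is, so that the trivial (unsatisfiable) cases line up on both sides of the biconditional.
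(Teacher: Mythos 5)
Your route is genuinely different from the paper's. The paper's proof stays entirely syntactic, inside Fine's normal-form machinery: it observes that for a normal form $\psi \in F_P^d$, the conjunction $\psi \wedge \Box^{d+1}\bot$ is equivalent in \k\ to a normal form $\psi^{+1} \in F_P^{d+1}$ (replace each innermost $\Diamond \psi'$, $\psi' \in F_P^0$, by $\Diamond(\psi' \wedge \Box\bot)$), that this map sends distinct normal forms to distinct normal forms, and then applies Lemma \ref{lem:deep_completeness_is_character} on both sides. Your argument is instead model-theoretic: models of $\varphi \wedge \Box^{md(\varphi)+1}\bot$ have no reachable states beyond depth $md(\varphi)$, so completeness up to depth controls them entirely, and conversely every model of $\varphi$ can be replaced by a depth-$md(\varphi)$ approximation. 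What your approach buys is transparency and a direct appeal to Proposition \ref{prp:completeness_is_bisimilarity}, and it isolates exactly where the absence of frame conditions for \k\ is used; what the paper's approach buys is brevity and reuse of lemmas it already proved. Note, however, that your forward direction is not fully elementary either: the final step --- that two pointed models satisfying $\Box^{md(\varphi)+1}\bot$ which agree on all formulas of depth at most $md(\varphi)$ are bisimilar modulo $P$ --- is asserted rather than proved, and it needs its own induction on depth (using, e.g., mutual exclusivity of normal forms to match successors), so you end up re-deriving a piece of the normal-form theory in semantic guise.

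There is one concrete step that fails as written: your definition of truncation. Deleting all states at distance greater than $md(\varphi)$ from the base point, or severing the edges leaving depth $md(\varphi)$, does not in general produce a model of $\Box^{md(\varphi)+1}\bot$, and the two operations are not equivalent. If the model has a cycle reachable from the base point --- say a reflexive state --- then every reachable state can lie at distance at most $md(\varphi)$, so nothing is deleted and no edge is severed, yet arbitrarily long paths remain and $\Box^{md(\varphi)+1}\bot$ is violated. The correct construction must unravel first: take as states the paths of length at most $md(\varphi)$ starting at the base point, with the edge relation extending paths by one step --- exactly the $\Pi_d$ construction in the paper's proof of Lemma \ref{lem:d_t_completeness}, which you cite as a parallel, so the fix is at hand --- and then run the depth-budget induction to show that the root of this unravelled, truncated model agrees with the original pointed model on every formula of modal depth at most $md(\varphi)$ and satisfies $\Box^{md(\varphi)+1}\bot$. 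With that replacement your converse direction goes through, and your closing remark that $\varphi$ and $\varphi \wedge \Box^{md(\varphi)+1}\bot$ are equisatisfiable follows from the same construction.
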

\begin{proof}
Let $\psi \in F_P^d$ be a normal form. Then, $\psi \wedge \Box^{d+1}\bot$ is equivalent in \k\ to $\psi^{+1} \in F_P^{d+1}$, which is $\psi$ after we replace all $\Diamond \psi'$ in $\psi$  by $\Diamond(\psi' \wedge \Box \bot)$, where $\psi' \in F_P^0$. 
Notice that $\psi_1, \psi_2 \in F_P^d$ are distinct normal forms if and only if $\psi_1^{+1}, \psi_2^{+1}$ are distinct normal forms in $F_P^r$ for every $r>d$.
So, $\varphi$ is complete up to its depth for \k\ if and only if $\varphi \wedge \Box^{md(\varphi)+1}\bot$ is complete for \k. 
%
\end{proof}
%
%

\subsection{More Logics}

There is more to \ML, so perhaps there is also more to discover about the completeness problem. 
We based the decision procedure for the completeness problem for each logic on a decision procedure for satisfiability. We distinguished two cases:
\begin{itemize}
	\item If the logic has axiom $5$, then to test satisfiability we guess a small model and we use model checking to verify that the model satisfies the formula. This procedure uses the small model property of these logics (Corollary \ref{cor:smallNImodels}).
	To test for completeness, we guess \emph{two} small models; we verify that they satisfy the formula and that they are non-bisimilar. 
	We could try to use a similar approach for another logic based on a decision procedure for satisfiability based on a small model property (for, perhaps, another meaning for ``small'').
	To do so successfully, a small model property may not suffice. We need to first demonstrate that for this logic, a formula that is satisfiable and incomplete has \emph{two} small non-bisimilar models.
	
	\item For the other logics, we can use a tableau to test for satisfiability. We were able to combine the tableaux for these logics with bisimulation-testing to provide an optimal --- when the completeness problem is not trivial --- procedure for testing for completeness. For logics where a tableau gives an optimal procedure for testing for satisfiability, this is, perhaps, a promising approach to also test for completeness.
\end{itemize}
%
Another direction of interest would be to consider axiom schemes as part of the input --- as we have seen, axiom $5$ together with $\varphi^\sv$ is complete for \t, when no modal formula is.
%

\subparagraph*{Acknowledgments.}

The author is grateful to Luca Aceto 
for valuable comments that helped improve the quality of this paper.

\bibliography{mybib}

\end{document}